\newtheorem{theorem}{Theorem}
\newtheorem{lemma}{Lemma}
\newtheorem{note}{Note}
\newtheorem{corollary}{Corollary}
\newtheorem{definition}{Definition}
\newcommand{\1}{{\bf 1}} 
\newcommand{\E}{\mathsf{E}} 
\newcommand{\card}[1]           {\left| #1\right|}
\newcommand{\Ascr}{{\cal A}}
\newcommand\argmax{\mathop{\mbox{{\rm argmax}}}\limits}
\newcommand{\bea}{\begin{eqnarray}}
\newcommand{\eea}{\end{eqnarray}}
\newcommand{\beas}{\begin{eqnarray*}}
\newcommand{\eeas}{\end{eqnarray*}}
\title{To Feed or Not to Feed Back}
\author{\IEEEauthorblockN{Himanshu Asnani\IEEEauthorrefmark{1}, Haim
Permuter\IEEEauthorrefmark{2} and 
Tsachy Weissman\IEEEauthorrefmark{3}} 
\thanks{\IEEEauthorblockA{\IEEEauthorrefmark{1}Stanford University, Email:
asnani@stanford.edu.}} 
\thanks{\IEEEauthorblockA{\IEEEauthorrefmark{2}Ben Gurion University, Email:
haimp@bgu.ac.il.}} 
\thanks{\IEEEauthorblockA{\IEEEauthorrefmark{3}Stanford University, Email:
tsachy@stanford.edu.}}}
\begin{document}
\maketitle

\begin{abstract}
We study the communication over Finite State Channels (FSCs), where the
encoder and the decoder can control the availability or the quality of the
noise-free feedback. Specifically, the instantaneous feedback is a function of
an action taken by the encoder, an action taken by the decoder, and the channel
output. Encoder and decoder actions take values in finite alphabets, and may be
subject to average cost constraints.
\par
We prove capacity
results
for such a setting by constructing a sequence of achievable rates, using a
simple scheme based on \textquoteleft \textit{code
tree}\textquoteright\ generation, that generates channel input symbols along 
with encoder and decoder actions. We prove that
the limit 
of this sequence exists. For a given block length $N$ and probability of
error, $\epsilon$, we give an upper bound on the maximum achievable rate. Our
upper and lower bounds coincide and hence yield the capacity for the case
where the probability of initial state
is
positive for all states. Further, for stationary
\textit{indecomposable}\ channels
without intersymbol 
interference (ISI), the capacity is given as the limit of normalized directed
information between the input and output sequence, maximized over an
appropriate set of causally conditioned distributions. As
an important special case, we consider the framework of
\textquoteleft\textit{to feed or not to feed back}\textquoteright\ where
either the encoder or the decoder takes binary actions, which determine whether
current channel output will be fed back to the encoder, with a constraint on the
fraction of channel outputs that are fed back. As another special case of our
framework, we characterize the capacity of \textquoteleft\textit{coding on
the backward link}\textquoteright\ in FSCs, i.e.
when the decoder sends limited-rate instantaneous coded noise-free feedback on
the backward link. Finally, we propose an extension of the 
Blahut-Arimoto algorithm for evaluating 
the capacity when actions can be cost constrained, and demonstrate its application on a few examples.
\end{abstract}

\begin{keywords}
Actions, Blahut-Arrimoto Algorithm, Causal Conditioning, Channel with States, Cost
Constraints, Directed
Information, Feedback Sampling, Indecomposable 
Channel, Intersymbol Interference, Sampled Feedback, Time-invariant
Deterministic Feedback, To
Feed or Not to Feed Back.
\end{keywords}

\section{Introduction}
\label{intro}
\par Feedback plays a very important role in communication systems.
Despite proving a pessimistic result in \cite{ShannonFeedback} that feedback
does not increase the capacity of a memoryless channel, Shannon did foresee the 
important role of feedback, which he highlighted in the first Shannon
Lecture. Indeed, 
even for memoryless channels, feedback has its merits, such as simple
capacity achieving coding schemes and improved reliability,
\cite{SchalkwijkKailath}, \cite{Schalkwijk}. Feedback is also known to
increase the capacity for multiple-access channels, \cite{Ozarow} and broadcast
channels, \cite{DueckTwoWay},\cite{OzarowLeung}. 
\par
In his book \cite{GallagerBook}, Gallager introduced finite state
channels (FSCs)
as an apt model for a very broad family of channels with memory.
When no feedback is present and the channel is stationary and indecomposable
without ISI,
the capacity was shown by Gallager in \cite{GallagerBook} and by Blackwell,
Breiman
and Thomasian in \cite{BlackwellBreimanThomasian} to be
\bea
C_{NF}=\lim_{N\rightarrow\infty}\frac{1}{N}\max_{P(x^N)}I(X^N;Y^N).
\eea
For the case of no ISI, stationary and indecomposable finite state channels with
time
invariant deterministic feedback, the capacity was shown
in \cite{HaimTsachyGoldsmith} to be,
\bea
C_{FB}=\lim_{N\rightarrow\infty}\frac{1}{N}\max_{Q(x^N\parallel
z^{N-1})}I(X^N\rightarrow Y^N),
\eea
where $Q(x^N\parallel z^{N-1})$ is causal conditioning introduced by Kramer in
\cite{KramerPHD}, 
\cite{KramerDMC} and is defined as,
\bea
Q(x^N\parallel
z^{N-1})&\stackrel{\triangle}{=}&\prod_{i=1}^NQ(x_i|x^{i-1},z^{i-1}).
\eea
Here $Z_i$ is a time-invariant deterministic function of the output $Y_i$.
Subsequent work on
FSCs included the characterization of the capacity
of finite state multiple access
channel in \cite{HaimTsachyChenMAC}. When the channels have
memory, feedback can
increase the
capacity even for single user channels. One such example is the chemical
channel introduced in 1961 by Blackwell in 
\cite{Blackwell}, also referred to as the \textquoteleft trapdoor
channel\textquoteright\ by Ash in \cite{Ash}. The
capacity of this channel without feedback is a long-standing 
open problem with only bounds on it known, such as those established by
Kobayashi
\textit{et al} in 
\cite{KobayashiMorita}, \cite{Kobayashi}.
With feedback, the capacity of the 
trapdoor channel was computed in \cite{HaimCuffRoyTsachy} using dynamic
programming
approach and shown to be strictly higher than the capacity without
feedback. For Gaussian channels with memory, Cover and Pombra in
\cite{CoverPombra}
showed feedback cannot increase the capacity of an additive white gaussian
channel by more than half of a bit. Kim characterized the capacity of a
wide class of stationary Gaussian channels with feedback in
\cite{KimStatGaussian}. 
\par
Directed information, denoted by $I(X^N\rightarrow Y^N)$, was introduced by
Massey
in \cite{Massey},
where he credits it to Marko \cite{Marko}. It was further shown that
directed information equals mutual information for memoryless channels iff
there is no feedback by Massey and Massey in \cite{MasseyConservation}. 
Directed information also appears in the work of Tatikonda \textit{et al},
\cite{TatikondaPHD}, 
\cite{TatikondaMitter}, where there is generalization of work by Verdu and Han
in \cite{VerduHan}
for the case of
channels with feedback. Capacity of some Markovian Channels was computed using
directed information 
by Yang \textit{et al} in \cite{YangKavcicTatikonda} and Chen and Berger in
\cite{ChenBergerFSMC}. 
Tatikonda also
formulated 
the problem of computing capacities of channels with feedback as a Markov
Decision Process 
in \cite{TatikondaMDP}. Zero error capacity was
also
computed using dynamic programming in
\cite{ZhaoHaimZeroError}. Recently,   
interpretations of directed information in gambling, portfolio theory
and estimation have been characterized in \cite{ZhaoHaimKimTsachy}
, \cite{HaimKimTsachyPortfolio} and
\cite{KimHaimTsachy}. The capacity of the compound channel with feedback was
computed in \cite{ShraderHaim}
using directed information. Directed information also appeared in rate
distortion problems, such as 
source coding with feed-forward by Pradhan and Venkataramanan
\cite{RamjiPradhan}, and implicitly in the 
competitive prediction framework of \cite{MerhavTsachy}. 
\par
In \cite{HaimTsachyVendor}, the notion of
\textit{actions} in a source coding context was introduced. Their
setting is a generalization of the Wyner-Ziv
source coding 
with decoder side information problem in \cite{WynerZiv}, where now the decoder
can take actions based on the index
obtained from the encoder 
to affect the formation or availability of side information. In
\cite{TsachyChannel}, the channel coding dual is studied where the transmitter
takes actions that affect the formation of channel states. This framework
captures 
various new coding scenarios which include two stage
recording on a memory with defects, motivated by similar problems in
magnetic recording and computer memories. Kittichokechai \textit{et al} in
\cite{Kittichokechai} studied 
a variant of the problem in \cite{HaimTsachyVendor} and \cite{TsachyChannel},
where
encoder and decoder both have action dependent partial side information.
However, in the source coding
formulation of \cite{HaimTsachyVendor}, attention was 
restricted to the case where the actions are taken by the decoder while in the
channel coding scenario of
\cite{TsachyChannel} and \cite{Kittichokechai}, actions were taken only
by the encoder. Recently, in \cite{HimanshuHaimTsachyProbing}, the channel
coding setting in \cite{TsachyChannel} and \cite{Kittichokechai} was
generalized, to accommodate the case where both the encoder and the decoder
take channel probing actions, with associated costs, to maximize the rate of
reliable communication. This was referred to as the
\textquoteleft \textit{Probing Capacity}\textquoteright. 
\par
In this paper, we introduce the notion of actions in acquisition of noise-free
feedback or its deterministic function for FSCs. The main contribution of this
paper is in characterizing the cost-capacity
trade-off 
when the feedback observed by the encoder is a deterministic function of an
action taken by the encoder, an action taken by the decoder, and the channel
output, when actions are required to satisfy an average cost constraint. More
precisely, the encoder
observes \textquoteleft
\textit{sampled}\textquoteright\ feedback $Z_i=f(A_{e,i},A_{d,i},Y_i)$, where
$f(\cdot)$ is a deterministic function, $Y_i$ is the
channel
output, $A_{e,i}=A_{e,i}(M,Z^{i-1})$ is the action taken by the encoder as a
function of the message
and the past sampled feedback, and $A_{d,i}$ is the action
taken by
the decoder, where we study two scenarios: one where that action is strictly
causal in the channel output, i.e., $A_{d,i}=A_{d,i}(Y^{i-1})$, and one where it
can depend also on the present channel output, i.e., $A_{d,i}=A_{d,i}(Y^{i})$.
The problem is
motivated by practical applications
where acquisition of the feedback may be costly, and either or both the encoder
and decoder influence whether and what from the channel output is to be fed
back.
\par 
The key technique in our achievability result lies in generating 
both actions 
and input symbol code trees, as described in Section
\ref{achievability}.
 With this achievability, we find most of the proof
follows  that in \cite{HaimTsachyGoldsmith}, except for some cases where care
has
to be taken
to 
properly handle cost constraints. This is because the presence of cost
constraints
 results in breaking down of some properties that were used in [9] such as
sub-additivity. The main contribution of our paper is in obtaining a
multi-letter characterization of the capacity for our communication scenario,
involving maximization over directed information. In order to numerically evaluate the capacity when actions are cost constrained, we also propose a Blahut-Arimoto type algorithm, 
\cite{Blahut},\cite{Arimoto}, similar to that proposed in \cite{Naiss_Permuter_BAA},  where the objective was to maximize the multi-letter directed information expression.  Also our characterization of capacity admits
a dynamic optimization formulation that can lead to analytic closed form
capacity expressions for specific channels, similarly as in \cite{HaimCuffRoyTsachy}, \cite{Elishco_Permuter},  though its pursuance has not been the part of this work. 
\par
A
special case of our framework is when only the encoder or the decoder is the one
taking actions. Under this setting, we motivate and compute a special case of
\textit{to
feed or not to feed back}, i.e., 
where actions are binary corresponding to observing the channel output or not
observing it, the cost constraint corresponding to the fraction of channel
output observations allowed, and the channel
states evolve as a markov chain independent of the channel input process. When
only the encoder takes actions, we
derive a single letter lower bound on 
this capacity and show that it is strictly better than the rate achieved by a
naive time sharing scheme between capacity at zero cost (corresponding to
Gallager's capacity for FSCs in \cite{GallagerBook}) and unit cost (corresponding to the
complete noise-free feedback capacity of \cite{HaimTsachyGoldsmith}). In contrast to this analytical lower bound, our algorithm (cf. BAA-Action, Section \ref{algorithm}), provides a series of upper and lower bounds which converge to the actual capacity (when it exists). For the same
FSC, we also derive bounds on the capacity when only the decoder takes binary actions. A
special case of the framework when only decoder takes actions is that of coding
on the backward link, where the decoder sends a symbol from the action alphabet
based on the channel outputs observed so far, thus operating at an instantaneous
rate which is log the cardinality of said alphabet. The capacity for this case
is characterized in single letter form for some Markovian Channels.
\par 
The rest of paper is organized as follows. Section \ref{problem} describes
the channel model and formulates the
problem studied in this paper. The main results of this paper are outlined in
Section \ref{results}. Section \ref{achievability}
is dedicated to capacity-achieving coding schemes,  while converse results are
proved
in Section \ref{converse}. 
Section \ref{noISI} characterizes the capacity for stationary, indecomposable,
finite state channels without
intersymbol interference (ISI). Section \ref{causaldecoding} generalizes the
framework from decoder taking actions strictly causally dependent on the channel
output (i.e. $A_{d,i}=A_{d,i}(Y^{i-1})$) to the case when decoder can also use
the current output to generate its actions (i.e. $A_{d,i}=A_{d,i}(Y^i)$). As
special cases, Section
\ref{encoderaction} outlines 
the capacity results when actions are taken only by the encoder while the case
when
only decoder takes actions is discussed in Section \ref{decoderaction}.
Section \ref{example1} presents single letter lower 
bounds for a specific 
example of \textit{to feed or not to feed back} (i.e. when actions are binary)
for
Markovian channels when only one of the two, encoder or decoder, takes the 
actions.
Section \ref{example2} establishes that coding on the backward link for FSCs is
a
special case of our general framework, and computes the capacity for an example
of a Markovian channel. Section \ref{algorithm} presents a Blahut-Arimoto type
algorithm (BAA-Action) to find series of converging upper and lower bounds for
the case when encoder take actions which are cost constrained.  The paper is
summarized and concluded in Section
\ref{conclusion}.

\section{Channel Model and Problem Formulation}
\label{problem}
We begin by introducing the notation used throughout this paper.
Let
upper case, lower case, and calligraphic letters denote, respectively, random
variables, specific or deterministic values they may assume,
and
their alphabets. For two jointly distributed random variables, $X$ and $Y$, let
$P_X$, $P_{XY}$ and $P_{X|Y}$ respectively denote the marginal of $X$, joint
distribution of $(X,Y)$ and conditional distribution of 
$X$ given $Y$. $X_{m}^{n}$ is a shorthand for $n-m+1$ tuple
$\{X_m,X_{m+1},\cdots,X_{n-1},X_n\}$. $X^n$ will also denote $X_1^n$. When
$i\leq 0$, $X^i$ denotes null string
as it is also for $X_i^j$, when $i\geq j$. $X^{n\backslash i}$ denotes
$\{X_1,\cdots,X_{i-1},X_{i+1},\cdots,X_n\}$. The cardinality of an alphabet
$\mathcal{X}$ is denoted 
by $\card{\mathcal{X}}$. We impose the assumption of finiteness of cardinality
on all alphabets, unless otherwise indicated.\\
We use the \textit{Causal Conditioning} notation ($\cdot\parallel\cdot$) as
introduced by Kramer in \cite{KramerPHD} and \cite{KramerDMC} :
\bea
P(y^N\parallel x^N)\stackrel{\triangle}{=}\prod_{i=1}^{N}P(y_i|x^i,y^{i-1}).
\eea
We also use the following notation as introduced in \cite{HaimTsachyGoldsmith} :
\bea
P(y^N\parallel
x^{N-1})\stackrel{\triangle}{=}\prod_{i=1}^{N}P(y_i|x^{i-1},y^{i-1}).
\eea
Note that both \textit{causal conditioning}, $P(y^N\parallel x^N)$ and
$P(y^N\parallel x^{N-1})$ are \textit{distributions} on $Y^n$ for a fixed
$x^N$, as they are non negative
for all $x^N,y^N$ and they sum to unity,
 i.e.,
\bea
\sum_{y^N}P(y^N\parallel x^N)=\sum_{y^N}P(y^N\parallel x^{N-1})=1.
\eea
The directed information $I(X^N\rightarrow Y^N)$, as defined by Massey in
\cite{Massey}, is given by,
\bea
I(X^N\rightarrow Y^N)=\sum_{i=1}^{N}I(X^i;Y_i|Y^{i-1})=\E\left[\log
\frac{P(Y^N\parallel X^N)}{P(Y^N)}\right],
\eea
where $\E$ stands for expectation. Naturally, the directed
information conditioned on a random object $S$, $I(X^N\rightarrow Y^N|S)$, is
defined as,
\bea
I(X^N\rightarrow
Y^N|S)\stackrel{\triangle}{=}\sum_{i=1}^{N}I(X^i;Y_i|Y^{i-1},S).
\eea
We model discrete time channels with memory as
Finite State Channels (FSCs) introduced by Gallager in his book
\cite{GallagerBook}, as an
apt class of models for channels with 
memory, e.g. channels with ISI, etc. The channel input symbols take values
in the finite alphabet $\mathcal{X}$ and output denoted by $Y$ takes values in
finite alphabet 
$\mathcal{Y}$.  The state takes
values in a finite alphabet $\mathcal{S}$. The stationary
channel is characterized by the conditional probability law
$P(y_i,s_i|x_i,s_{i-1})$ satisfying,
\bea
P(y_i,s_i|x^i,s^{i-1},y^{i-1},m)=P(y_i,s_i|x_i,s_{i-1}),
\eea
and by the probability of the initial state $P(s_0)$. More precisely, without
loss
of generality, we can make the following assumption
on our channel model,
\bea\label{channelmodel}
P(y_i,s_i|x^i,s^{i-1},y^{i-1},a_{e}^i,a_d^i,m)=P(y_i,s_i|x_i,s_{i-1}),
\eea
where $a_{e,i}\in\mathcal{A}_e$ and $a_{d,i}\in\mathcal{A}_d$ are the encoder
and
decoder actions respectively
as will be explained later. Messages $M\in\mathcal{M}$ are assumed to
be independent of initial state, $s_0$. The FSC is without intersymbol
interference (ISI) if
\bea
P(s_i|s_{i-1},x_{i})=P(s_i|s_{i-1}),
\eea
i.e., the evolution of the channel states is independent of the channel input
process. The basic framework in this paper is the setting depicted in Fig.
\ref{psetup1}.
\begin{figure}[htbp] 
\begin{center}
\scalebox{0.75}{\input{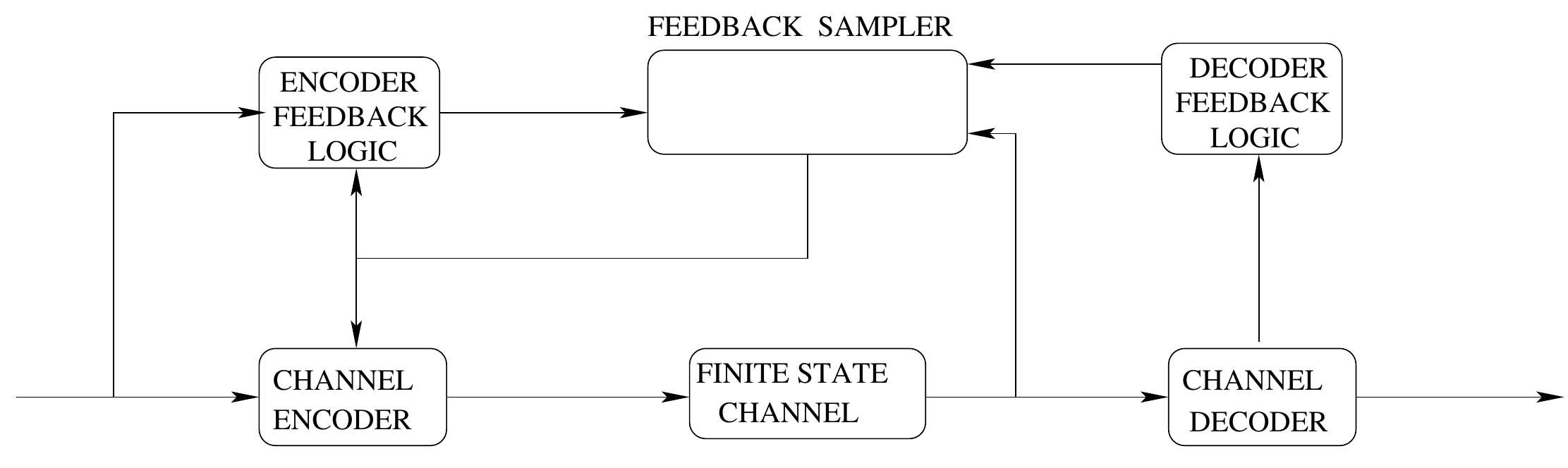_t}}
\caption{Modeling \textbf{Feedback Sampling} for the acquisition of feedback in
Finite State Channels (FSCs).} 
\label{psetup1}
\end{center}
\end{figure}
The communication system has the following building blocks :
\begin{itemize}
 \item \textit{Encoder Feedback Logic} : Generates encoder actions,
 $\{A_{e,i}\}_{i=1}^{N}$, using the function $f_{A_{e,i}} :
\mathcal{M}\times\mathcal {Z}^{i-1}\rightarrow \mathcal{A}_e$ i.e.,
$A_{e,i}=f_{A_{e,i}}(M,Z^{i-1})$, where $Z_i\in\mathcal{Z}$ is the sampled
feedback component. 
 \item \textit{Decoder Feedback Logic} : Generates decoder actions,
 $\{A_{d,i}\}_{i=1}^{N}$, using the function $f_{A_{d,i}} : \mathcal
{Y}^{i-1}\rightarrow \mathcal{A}_d$ i.e.,
$A_{d,i}=f_{A_{d,i}}(Y^{i-1})$. where $Y_i\in\mathcal{Y}$ is the channel
output. 
 \item \textit{Feedback Sampler} : Generates \textit{sampled} feedback, 
$Z_i=f(A_{e,i},A_{d,i},Y_i)$, where $f$ is a deterministic function. 
\item \textit{Channel Encoder} : Constructs channel input symbol,
$X_i(M,Z^{i-1})$, using the encoding
function, $f_{e,i}:\mathcal{M}\times\mathcal{Z}^{i-1}\rightarrow\mathcal{X}$.
\item \textit{Channel Decoder} : Generate the best estimate of the message
given the
channel output, $\hat{M}(Y^N)$, using the decoding function,
$f_d:\mathcal{Y}^N\rightarrow\mathcal{M}$.
\end{itemize}
We are interested in characterizing the maximal rate of reliable communication
under the average cost constraint, 
\bea
\label{cost}
\E\left[\Lambda(A_e^N,A_d^N)\right]=\E\left[\frac{1}{N}\sum_{i=1}^{N}
\Lambda(A_{e,i},A_{d,i})\right ]
\le\Gamma,
\eea
where $\Lambda(\cdot,\cdot)$ is a given cost function satisfying 
$\max_{a_e\in\mathcal{A}_e,a_d\in\mathcal{A}_d}\Lambda(a_e,a_d)=\Lambda_{\max}
<\infty$. 
\par
The joint probability distribution induced by a given scheme,
\bea\label{mainjoint}
&&P_{M,A_{e}^N,A_d^{N},Z^N,X^N,S_0^N,Y^N,\hat{M}}(m,a_{e}^N,a_d^{N},z^N,x^N,
s_0^N ,
y^N , \hat{m})\nonumber\\
&=&\frac{1}{\card{\mathcal{M}}} P_S(s_0)\prod_{i=1}^{n}
\1_{\{a_{d,i}=f_{A_{d,i}}(y^{i-1)}\}}
\1_{\{a_{e,i}=f_{A_{e,i}}(m,z^{i-1})\}}\nonumber\\
&&\times\prod_{i=1}^{n}\1_{\{x_i=f_{e,i}(m,z^{i-1})\}}P(y_i,s_i|x_i,
s_{i-1})\1_{\{z_{i}=f(a_{e,i},a_{d,i},y_i)\}}\times\1_{\{\hat{m}=f_d(y^n)\}}
.\nonumber\\
\eea
\begin{definition}
A rate $R$ is said to be $achievable$ if there exists a sequence
of block codes $(N,\lceil2^{NR}\rceil)$ satisfying (\ref{cost})
such that the maximal probability of error,
\beas
\max_{m\in\{1,\cdots,\lceil2^{NR}\rceil\}}\Pr(\hat{m}\neq
m|\mbox{message $m$ was sent}),
\eeas
vanishes as $N\rightarrow \infty$. The capacity of such a system
is denoted by $C$ which is the supremum of all achievable rates.
\end{definition}

\section{Main Results}
\label{results}
Let $s_0$ denote the initial state. We define $\underline{C}_N(\Gamma)$ and
$\overline{C}_N(\Gamma)$ as,
\bea
\underline{C}_N(\Gamma)&\stackrel{\triangle}{=}&{\frac{1}{N}\max\min_{s_0
}I(X^N\rightarrow
Y^N|s_0)}\\
\overline{C}_N(\Gamma)&\stackrel{\triangle}{=}&{\frac{1}{N}\max\max_{s_0
}I(X^N\rightarrow
Y^N|s_0)}.
\eea
Here \textbf{max} denotes maximization over the
joint probability distribution,
\bea
P(s_0,x^N,a_e^N,a_d^N,y^N,z^N)=P(s_0)Q(x^N,a_e^N\parallel
z^{N-1})Q(a_d^N\parallel y^{N-1})P(y^N\parallel x^N,
s_0)\prod_{i=1}^{N}\1_{\{z_i=f(a_{e,i},a_{d,i},y_i)\}},
\eea
such that $\E[\Lambda(A_e^N,A_d^N)]\le\Gamma$, where 
\bea
Q(x^N,a_e^N\parallel
z^{N-1})&=&\prod_{i=1}^{N}Q(x_i,a_{e,i}|x^{i-1},a^{i-1}_e,z^{i-1})\\
Q(a_d^N\parallel
y^{N-1})&=&\prod_{i=1}^{N}Q(a_{d,i}|a^{i-1}_d,y^{i-1})\\
I(X^N\rightarrow
Y^N|s_0)&=&\sum_{i=1}^{N}I(X^i;Y_i|Y^{i-1},s_0)\nonumber\\
&=&\E\left[\log \frac{P(Y^N\parallel
X^N,s_0)}{P(Y^N|s_0)}\right]\\
P(Y^N\parallel
X^N,s_0)&=&\prod_{i=1}^{N}P(y_i|x^i,y^{i-1},s_0).
\eea
As $z^N$ is a deterministic function of $(a^N_e,a^N_d,y^N)$, from now on we will
consider maximization over the joint probability distribution,
\bea
P(s_0,x^N,a_e^N,a_d^N,y^N)=P(s_0)Q(x^N,a_e^N\parallel
z^{N-1})Q(a_d^N\parallel y^{N-1})P(y^N\parallel x^N,
s_0),
\eea
where $z_i$ will stand for
$f(a_{e,i},a_{d,i},y_i)$ unless otherwise stated. 
Note that effectively maximization in definition of $\underline{C}_N(\Gamma)$
and
$\overline{C}_N(\Gamma)$ is over $Q(x^N,a_e^N\parallel
z^{N-1})Q(a_d^N\parallel y^{N-1})$ as $P(s_0)$ is fixed and $P(y^N\parallel
x^N,s_0)$ (and likewise $P(y^N\parallel x^N)$) is a characteristic of the
channel
given by (Lemma 6 of \cite{HaimTsachyGoldsmith}),
\bea
P(y^N\parallel x^N,s_0)&=&\sum_{s_1^N}\prod_{i=1}^{N}P(y_i,s_i|x_i,s_{i-1})\\
P(y^N\parallel
x^N)&=&\sum_{s_0}P(s_0)P(y^N\parallel
x^N,s_0)=\sum_{s_0^N}P(s_0)\left(\prod_{i=1}^{N}P(y_i,s_i|x_i,s_{i-1}
)\right)\label{channelmodel1}.
\eea
Our main results are as follows,
\begin{itemize}
\item \textbf{Achievable Rate :} For a communication abstraction as in Fig.
\ref{psetup1}, any rate $R$ is achievable such
that,
\bea
R<\lim_{N\rightarrow\infty}\underline{C}_N(\Gamma)=\sup_N\left[\underline{C}
_N(\Gamma)-\frac {
\log
|S|}{N}\right].
\eea
\item \textbf{Converse :} Consider a coding scheme with rate $R$
which achieves reliable communication over the FSC with feedback
sampling as in Fig. \ref{psetup1}. This implies the existence of
$(N,\lceil2^{NR}\rceil)$ codes such that the probability of error $P_{e}^N$ goes
to zero as $N\rightarrow\infty$. For such a  scheme given $\epsilon>0$,
$\exists$
block length $N_0$ such that for all block lengths $N>N_0$ we
have
\bea
R\le\overline{C}_N(\Gamma)+\epsilon.
\eea
\item \textbf{Capacity :} In the following cases we characterize
the capacity exactly,
\begin{enumerate}
\item For an FSC where the probability of the initial state is
positive for all $s_0\in\mathcal{S}$, the capacity is evaluated
exactly,
\bea
C(\Gamma)=\lim_{N\rightarrow\infty}\underline{C}_N(\Gamma).
\eea
\item For stationary \textquoteleft\textit{indecomposable}\textquoteright
\ channels without
ISI with feedback sampling as in Fig. \ref{psetup1}, the capacity
is,
\bea
C(\Gamma)=\lim_{N\rightarrow\infty}{\frac{1}{N}\max I(X^N\rightarrow
Y^N)},
\eea
where \textbf{max} denotes maximization over the joint probability
distribution, 
\bea
P(x^N,a_e^N,a_d^N,y^N)=Q(x^N,a_e^N\parallel
z^{N-1})Q(a_d^N\parallel y^{N-1})P(y^N\parallel x^N),
\eea 
such that $\E[\Lambda(A_e^N,A_d^N)]\le\Gamma$.
\end{enumerate}
\end{itemize}

%
\section{Achievability}
\label{achievability}
We begin this section by proving that the limit of the sequence
$\underline{C}_N(\Gamma)$ exists. We then explain the encoding and decoding
scheme followed by analysis of probability of error, showing that any rate $R$
is achievable such that,
$R<\underline{C}(\Gamma)=\lim_{N\rightarrow\infty}\underline{C}_N(\Gamma)$.
Encoding uses random code-tree generation while decoding uses maximum
likelihood decoding as in \cite{GallagerBook}.

\subsection{Existence of $\underline{C}(\Gamma)$}
\label{existence_lower_bound}
By the following theorem, we prove the existence of the limit of
the sequence $\underline{C}_N(\Gamma)$.
\begin{theorem}
\label{theorem1}
For a finite state channel with $\card{\mathcal{S}}$
states, $\lim_{N\rightarrow\infty}\underline{C}_N(\Gamma)$ exists and,
\bea
\lim_{N\rightarrow\infty}\underline{C}_N(\Gamma)=\sup_{N}\left[\underline{C}
_N(\Gamma)
-\frac{\log\card{\mathcal{S}}}{N}\right].
\eea
\end{theorem}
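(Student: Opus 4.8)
The plan is to read the claimed identity as the output of Fekete's lemma applied to a corrected sequence. Write $b_N := N\,\underline{C}_N(\Gamma) = \max\min_{s_0} I(X^N\rightarrow Y^N|s_0)$. The whole statement reduces to the near-superadditivity
\[
b_{n+m}\ge b_n + b_m - \log\card{\mathcal{S}},\qquad \forall\, n,m\ge 1,
\]
because then $c_N := b_N - \log\card{\mathcal{S}}$ is genuinely superadditive ($c_{n+m}\ge c_n+c_m$), so Fekete's lemma gives $\lim_N c_N/N = \sup_N c_N/N$. Since $c_N/N = \underline{C}_N(\Gamma) - \log\card{\mathcal{S}}/N$ and the correction vanishes while $\underline{C}_N(\Gamma)$ stays in $[0,\log\card{\mathcal{Y}}]$, this is exactly $\lim_N \underline{C}_N(\Gamma) = \sup_N[\underline{C}_N(\Gamma) - \log\card{\mathcal{S}}/N]$. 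So I would spend all the effort on the near-superadditivity of $b_N$.

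To get it, let $Q^{(n)}$ and $Q^{(m)}$ be feasible maximizers of $\underline{C}_n(\Gamma)$ and $\underline{C}_m(\Gamma)$, and construct a length-$(n+m)$ causally-conditioned distribution that runs $Q^{(n)}$ on coordinates $1,\dots,n$ and a time-shifted copy of $Q^{(m)}$ on coordinates $n+1,\dots,n+m$ that ignores the first block, i.e. feeds only on $Z_{n+1}^{i-1}$ and $Y_{n+1}^{i-1}$. Decompose, for each $s_0$,
\[
I(X^{n+m}\rightarrow Y^{n+m}|s_0) = I(X^n\rightarrow Y^n|s_0) + \sum_{i=n+1}^{n+m} I(X^i;Y_i|Y^{i-1},s_0).
\]
The first term is unaffected by the second block (Markovity of the FSC), so it equals its value under $Q^{(n)}$ and is $\ge b_n$ for every $s_0$. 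In the second term I drop $X^n$ (which only decreases the mutual information) and insert the boundary state $S_n$ through
\[
I(X_{n+1}^i;Y_i|Y^{i-1},s_0) = I(X_{n+1}^i;Y_i|S_n,Y^{i-1},s_0) + I(S_n;Y_i|Y^{i-1},s_0) - I(S_n;Y_i|X_{n+1}^i,Y^{i-1},s_0).
\]
Summing over $i$: the first piece collapses, because given $S_n$ the second block is independent of $(Y^n,s_0)$, to the average over $S_n$ of the second-block directed information started from that state, each summand being $\ge b_m$; the middle piece telescopes to $I(S_n;Y_{n+1}^{n+m}|Y^n,s_0)\ge 0$; and the last piece, after enlarging its conditioning to include $X_i$ and telescoping, is at most $I(S_n;X_{n+1}^{n+m},Y_{n+1}^{n+m}|Y^n,s_0)\le H(S_n)\le \log\card{\mathcal{S}}$. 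Hence the second term is $\ge b_m - \log\card{\mathcal{S}}$, so $I(X^{n+m}\rightarrow Y^{n+m}|s_0)\ge b_n+b_m-\log\card{\mathcal{S}}$ for all $s_0$; minimizing over $s_0$ and using that the constructed distribution is admissible gives the near-superadditivity.

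The step I expect to be the real obstacle is not the information-theoretic chain above (which mirrors \cite{HaimTsachyGoldsmith}) but checking that the concatenated distribution still meets the average-cost constraint. Block one inherits cost $\le n\Gamma$ verbatim, but block two's expected cost is taken under the boundary-state law $\pi_{S_n}$ induced by block one, whereas $Q^{(m)}$ was only certified feasible against the nominal $P(s_0)$; since expectations under $\pi_{S_n}$ may exceed those under $P(s_0)$, the budget $m\Gamma$ for block two is not automatic. This is precisely the failure of the clean (sub)additivity flagged in the text. I would handle it by passing to an auxiliary functional whose cost is constrained for every initial state, $\E[\sum_i \Lambda(A_{e,i},A_{d,i})\mid s_0]\le m\Gamma$ for all $s_0$, for which the concatenation is cost-feasible by construction and the superadditivity goes through unchanged, and then recovering the statement for $\underline{C}_N(\Gamma)$ itself by a continuity-in-$\Gamma$ argument that leans on the concavity and monotonicity of $\Gamma\mapsto\underline{C}_N(\Gamma)$ and on the per-symbol discrepancy it introduces being $O(1/N)$, hence invisible in the limit. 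Finally I would record the boundedness $0\le\underline{C}_N(\Gamma)\le\log\card{\mathcal{Y}}$ that makes the supremum finite and the limit genuine.
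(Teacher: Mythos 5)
Your main line is exactly the paper's: set $b_N=N\underline{C}_N(\Gamma)$, concatenate the length-$n$ and length-$m$ optimizers into a product causally-conditioned distribution whose second block feeds only on its own feedback, show $b_{n+m}\ge b_n+b_m-\log\card{\mathcal{S}}$ by paying $H(S_n)\le\log\card{\mathcal{S}}$ for the unknown boundary state, and invoke convergence of superadditive sequences. The information-theoretic decomposition you write out is a spelled-out version of what the paper imports from Theorem 8 of \cite{HaimTsachyGoldsmith} (and of Property 3 in its appendix), and it is sound.

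The one place you part company with the paper is the cost-feasibility of the concatenation, and there you are being more careful than the paper, not less. The paper simply asserts $\E\left[\Lambda(A_{e,n+1}^{n+l},A_{d,n+1}^{n+l})\right]\le\Gamma$ for the second block, implicitly identifying the second block's action law under the concatenated scheme with the law under which $Q^{(m)}$ was certified feasible; as you observe, these differ, because the second block's feedback (hence its actions) is driven by the induced law of $S_n$ rather than by $P(s_0)$, so the inequality is not automatic. Your diagnosis of that soft spot is correct. However, your proposed repair is not carried through: constraining the cost for every initial state does make the concatenation feasible and restores superadditivity for the auxiliary functional, but the ``continuity-in-$\Gamma$'' bridge back to $\underline{C}_N(\Gamma)$ (defined with the averaged constraint) is precisely where the difficulty now sits. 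The per-state-constrained functional can be strictly smaller than $\underline{C}_N(\Gamma)$ by an amount that has nothing to do with $1/N$ --- e.g.\ if $P(s_0)$ is degenerate, the averaged constraint ignores costs incurred from the other states entirely while the per-state constraint binds there --- so there is no evident $O(1/N)$ per-symbol discrepancy to appeal to, and concavity and monotonicity in $\Gamma$ alone do not relate the two quantities. So you have correctly localized the step that needs justification, but your proof as sketched does not close it: you must either justify the paper's cost bound for the concatenated scheme directly, or actually supply the argument connecting the per-state-constrained and average-constrained optimizations.
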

\begin{proof}
Let $N=n+l$, $n,l\in\mathbf{Z}^+$. Note that from the Section
\ref{encoding} we will show that we achieve $\underline{C}_N$ by
using random coding with distribution of form
$Q(x^N,a_e^N\parallel
z^{N-1})Q(a_d^N\parallel y^{N-1})$ satisfying the
cost constraints.
Let us assume that $\underline{C}_n(\Gamma)$ and $\underline{C}_l(\Gamma)$ are
achieved by $Q(x^n,a_e^n \parallel
z^{n-1})Q(a_d^n \parallel y^{n-1})$ and $Q(x^l,a_e^l\parallel
z^{l-1})Q(a_d^l \parallel y^{l-1})$ respectively. \newline
Consider 
\bea\label{inputdist}
Q(x^N,a_e^N\parallel z^{N-1})&=&Q(x^n,a_e^n\parallel
z^{n-1})Q(x^l,a_e^l\parallel z^{l-1})\\
Q(a_d^N\parallel y^{N-1})&=&Q(a_d^n \parallel y^{n-1})Q(a_d^l \parallel
y^{l-1}).
\eea
Therefore
\bea
\E\left[\Lambda(A_e^N,A_d^N)\right]&=&\frac{n}{N}\E\left[\Lambda(A_e^n,
A_d^n)\right] +
\frac{l}{N}\E\left[\Lambda(A_{e,n+1}^{n+l},A_{d,n+1}^{n+l})\right]\\
&\le&\frac{n\Gamma+l\Gamma}{N}=\Gamma.
\eea
Hence $Q(x^N,a_e^N\parallel z^{N-1})Q(a_d^N\parallel y^{N-1})$ (which is a
distribution) satisfies the
cost requirements, but it may
not be capacity
achieving for blocklength $N$ so,
\bea
N\underline{C}_N(\Gamma)&\ge&\min_{s_0}I(X^N\rightarrow
Y^N|s_0).
\eea
We now follow the steps as in Proof of Theorem 8 in \cite{HaimTsachyGoldsmith}
to arrive at
\bea
N\left[\underline{C}_N(\Gamma)-\frac{\log\card{\mathcal{S}}}{N}\right]
&\ge&n\left[
\underline{C}_n(\Gamma)-\frac{\log\card{\mathcal{S}}}{n}\right]+l\left[
\underline{C}
_l(\Gamma)-\frac{\log\card{\mathcal{S}}}{l}\right].
\eea
Hence the sequence, $\underline{C}_n(\Gamma)$ is super additive for all
$n\in\mathbf{Z}^+$. The theorem is finally proved using the convergence of
super additive sequences, as is done in Theorem 4.6.1
\cite{GallagerBook}.
\end{proof}

\subsection{Encoding Scheme}
\label{encoding} 
\begin{figure}[htbp] 
\begin{center}
\scalebox{0.6}{\input{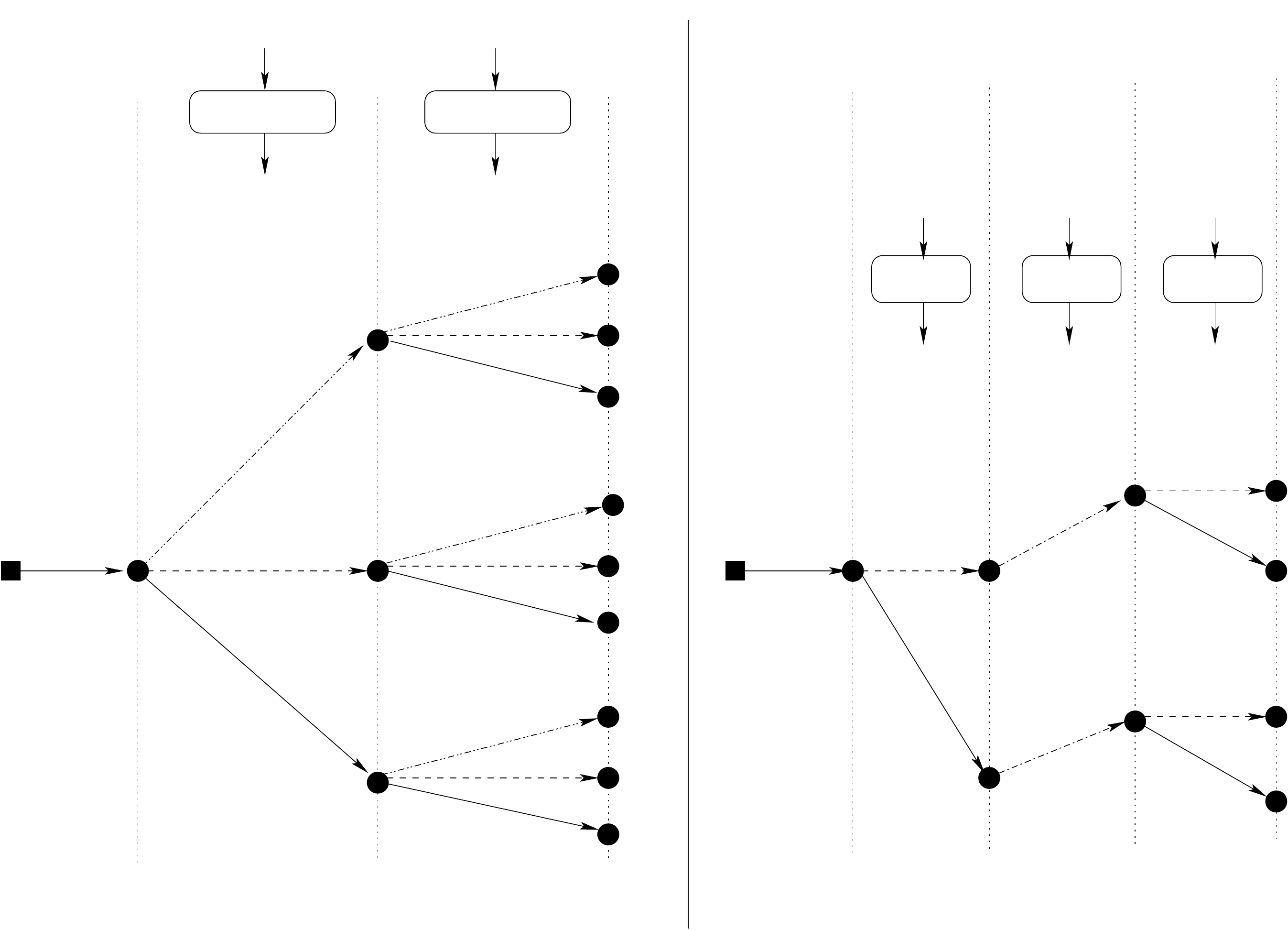_t}}
\caption{This figure illustrates \textit{Encoder Code-Trees} in our coding
scheme. The left hand side figure depicts a general setting where
$\mathcal{Z}=\{a,b,c\}$, and
$z_i=f(a_{e,i},a_{d,i},y_i)$. The tree is shown for $N=3$. The right hand side
shows a
specific example where $a_{d,i}=0$ $\forall i$ and 
output is binary. Actions of encoder, $a_{e,i}\in\{0,1\}$ and 
$z_i=f(a_{e,i},a_{d,i},y_i)=y_i$ if $a_{e,i}= a_{d,i}$ or $a_{e,i}=0$, else
it is
erasure($=\ast$). Hence some portion of the 
tree collapses as by knowing $a_{e,i}$ we know the possible values of
$z_i$, for e.g. $a_{e,i}=1$ implies $z_i=\ast$ and $a_{e,i}=0$ implies,
$z_i=0$ or $1$.} 
\label{codetree}
\end{center}
\end{figure}
Encoding is based on generating separate code trees which is described below.
These are then
revealed to the encoder and the decoder.
\begin{itemize}
 \item \textit{Encoder Code-Tree} : $2^{NR}$ \textbf{code-trees} are
generated as follows, the $i^{th}$
encoder action and
channel input
symbol is
generated using a probability mass function which depends on
previous encoder action and channel input symbols and on the past sampled
feedback
sequence, i.e. $Q(x^i,a_e^i|x^{i-1},a_e^{i-1},z^{i-1})$. 
\item \textit{Decoder Action Code-Tree} : We generate a single code tree at
random, where
the vertex 
represents decoder action symbol, $a_{d,i}$ generated with distribution
$Q(a_{d,i}|a_d^{i-1},y^{i-1})$. Thus the present decoder action depend on the
past actions as well as the past channel output.
\end{itemize}
Note that $\{Q(x^i,a_e^i|x^{i-1},a_e^{i-1},z^{i-1})\}_{i=1}^N$ and 
$\{Q(a_{d,i}|a_d^{i-1},y^{i-1})\}_{i=1}^{N}$ correspond to the joint
distribution on $(X^N,A_e^N,A_d^N,S^N,Y^N)$ such that
constraint $\E\left[\Lambda(A_e^N,A_d^N)\right]\le \Gamma$ is
satisfied.
\par
Fig. \ref{codetree} illustrates the
\textit{Encoder Code-Tree} for a specific
example. The setting in the right in the figure is the illustration of
 the setting of \textit{to
feed or not to feed back}, when the output alphabet is binary and,
\bea
Z_i&=&f(A_{e,i},A_{d,i},Y_i)=\ast\mbox{ if }A_{e,i}\neq A_{d,i}.\\
Z_i&=&f(A_{e,i},A_{d,i},Y_i)=Y_i\mbox{ if }A_{e,i}=A_{d,i},
\eea
where $\ast$ stands for erasure or no feedback. Knowing past channel outputs,
decoder
uses \textit{Decoder Action Code-Tree} to figure
out the
decoder action symbol. Using the decoder action symbol $a_{d,i}$, along with
encoder actions, $a_{e,i}$ and channel output $y_i$, feedback sampler produces
sampled feedback as $z_i=f(a_{e,i},a_{d,i},y_i)$. In this way, given a
message $m$, and the complete sampled feedback
sequence $z^{N-1}$ thus obtained,
there is a particular $(x^{N},a_e^{N})$ which can be found from
the collection of \textit{encoder code trees}. The encoder
thus sends the corresponding $x^N$ though the channel. \textit{Note} that our
coding
scheme is similar in spirit to the code tree generation scheme as in 
\cite{HaimTsachyGoldsmith}. However, here we generate both the cost constrained
encoder actions and channel input symbols in one tree while decoder actions
are generated in
another tree.
\par
By the above code tree generation, we have in our achievability
scheme,
\bea
P(x_i,a_{e,i}|x^{i-1},a_e^{i-1},a_d^{i-1},y^{i-1},s_0^i)&=&P(x_i,a_{e,i}|x^{i-1}
,
a_e^ { i-1},a_d^{i-1},y^{i-1},z^{i-1},s_0^i)\\
&=&Q(x_i,a_{e,i}|x^{
i-1 } , a_e^ { i-1 }, z^ { i-1 } ),\label{condition1}
\eea
where first equality follows from the fact, $z_i=f(a_{e,i},a_{d,i},y_i)$, while
the second equality is due to our coding scheme where the $i^{th}$ input and
encoder action symbol only depend on past input symbols, actions and sampled
feedback. Similarly since $i^{th}$ decoder action only depends on
past decoder actions and channel output, we have,
\bea\label{condition2}
P(a_{d,i}|a_d^{i-1},y^{i-1},x^i,a_e^i,s_0^i)=Q(a_{d,i}|a_d^{i-1},y^{i-1}).
\eea
\begin{lemma}
\label{achjoint}
\label{lemma1}
 The joint probability distribution on $(s_0,x^N,a_e^N,a_d^N,y^N)$, by the
achievability scheme described above is,
\bea
P(s_0,x^N,a_e^N,a_d^N,y^N)=P(s_0)Q(x^N,a_e^N\parallel z^{N-1})Q(a_d^N\parallel
y^{N-1})P(y^N\parallel x^N,s_0).
\eea
\end{lemma}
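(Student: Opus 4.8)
The plan is to obtain the claimed factorization by writing the full joint law on $(s_0,s_1^N,x^N,a_e^N,a_d^N,y^N)$ through the chain rule in causal (time) order, collapsing each one-step conditional by means of the structural relations already established for the scheme, and then marginalizing the intermediate states $s_1^N$. I would start from the ensemble joint, in which the message $M$ and the deterministic feedback $z^N=f(\cdot)$ have already been folded into the code-tree distributions, so that (\ref{condition1}), (\ref{condition2}) and the channel law (\ref{channelmodel}) may be used as given.

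First I would expand by the chain rule, grouping the variables within each time step $i$ in the order $(x_i,a_{e,i})$, then $a_{d,i}$, then $(y_i,s_i)$, with the leading factor $P(s_0)$ (the initial state being independent of the message). This yields a product over $i$ of three conditionals, each conditioned on the full past in this ordering. The second step is to identify each factor: the $(x_i,a_{e,i})$ factor equals $Q(x_i,a_{e,i}\mid x^{i-1},a_e^{i-1},z^{i-1})$ by (\ref{condition1}); the $a_{d,i}$ factor, whose conditioning set now contains the just-generated $x_i,a_{e,i}$, equals $Q(a_{d,i}\mid a_d^{i-1},y^{i-1})$ by (\ref{condition2})---this is exactly why (\ref{condition2}) is stated with $x^i,a_e^i$ in the conditioning; and the $(y_i,s_i)$ factor equals $P(y_i,s_i\mid x_i,s_{i-1})$ by the FSC model (\ref{channelmodel}). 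Collecting the three products and invoking the definitions of causal conditioning gives $P(s_0,s_1^N,x^N,a_e^N,a_d^N,y^N)=P(s_0)\,Q(x^N,a_e^N\parallel z^{N-1})\,Q(a_d^N\parallel y^{N-1})\prod_{i=1}^N P(y_i,s_i\mid x_i,s_{i-1})$.

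Finally I would marginalize $s_1^N$. Since the two causally conditioned $Q$-factors depend on the states only through $s_0$, they pass outside the sum, and $\sum_{s_1^N}\prod_{i=1}^N P(y_i,s_i\mid x_i,s_{i-1})=P(y^N\parallel x^N,s_0)$ by (\ref{channelmodel1}), which delivers the statement of the lemma.

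The main obstacle is the bookkeeping of conditioning sets, so that each one-step factor matches precisely the hypothesis of (\ref{condition1}), (\ref{condition2}), or (\ref{channelmodel}); in particular one must verify that the encoder/input factor and the decoder-action factor carry no dependence on the intermediate states $s_1^N$ (they are fixed by the independently drawn code trees together with the feedback and output histories). This state-independence is exactly what allows those factors to be pulled out of the marginalization over $s_1^N$, and it is guaranteed by (\ref{condition1}) and (\ref{condition2}), which already condition on the states up to the current time yet return state-free expressions.
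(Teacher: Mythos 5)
Your proposal is correct and follows essentially the same route as the paper: the paper likewise reduces the claim to the state-inclusive joint $P(s_0^N,x^N,a_e^N,a_d^N,y^N)=P(s_0)Q(x^N,a_e^N\parallel z^{N-1})Q(a_d^N\parallel y^{N-1})\prod_{i=1}^N P(y_i,s_i|x_i,s_{i-1})$ via the causal-conditioning chain rule together with (\ref{condition1}), (\ref{condition2}) and the channel law, and then sums out $s_1^N$ using $P(y^N\parallel x^N,s_0)=\sum_{s_1^N}\prod_i P(y_i,s_i|x_i,s_{i-1})$. Your explicit attention to the conditioning sets (in particular that the decoder-action factor is conditioned on $x^i,a_e^i$ and that both $Q$-factors are state-free, so they pass outside the marginalization) is exactly the content the paper packages into Property 1 and Eqs. (\ref{two})--(\ref{three}).
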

\begin{proof}
Using Property 1 in Appendix \ref{basic}, we have, 
\bea
\label{one}
P(s_0,x^N,a_e^N,a_d^N,y^N)=P(s_0)Q(x^N,a_e^N,a_d^N\parallel
y^{N-1},s_0)P(y^N\parallel
x^N,a_e^N,a_d^N,s_0).
\eea
From definition of causal conditioning and using Eq. (\ref{condition1}) and
(\ref{condition2}) we have,
\bea\label{two}
Q(x^N,a_e^N,a_d^N\parallel
y^{N-1},s_0)=Q(x^N,a_e^N\parallel
z^{N-1})Q(a_d^N\parallel y^{N-1}).
\eea
Now again using Eq. (\ref{condition1}) and (\ref{condition2}) and the
channel model assumption in Eq. (\ref{channelmodel}) consider,
\bea
P(s_0^N,x^N,a_e^N,a_d^N,y^N)&=&P(s_0)Q(x^N,a_e^N\parallel
z^{N-1})Q(a_d^N\parallel y^{N-1})\prod_{i=1}^{N}P(y_i,s_i|x_i,s_{i-1}),
\eea
Summing over, $s_1^N$ and using the characterization of
$P(y^N\parallel x^N,s_0)$ in Section \ref{results} as,
\bea
P(y^N\parallel x^N,s_0)=\sum_{s_1^N}\prod_{i=1}^{N}P(y_i,s_i|x_i,s_{i-1}),
\eea
we obtain,
\bea
P(s_0,x^N,a_e^N,a_d^N,y^N)&=&\sum_{s_1^N}P(s_0^N,x^N,a_e^N,a_d^N,y^N)\\
&=&P(s_0)Q(x^N,a_e^N\parallel
z^{N-1})Q(a_d^N\parallel y^{N-1})P(y^N\parallel x^N,s_0).
\eea
\end{proof}
\begin{corollary}
From the steps in previous lemma it immediately implies,
\bea\label{three}
P(y^N\parallel
x^N,a_e^N,a_d^N,s_0)=P(y^N\parallel
x^N,s_0).
\eea
Note that likewise it can be also shown as in Eq.
(\ref{three}) that,
\bea\label{four}
P(y^N\parallel
x^N,a_e^N,a_d^N)=P(y^N\parallel
x^N),
\eea
which we will use in next section on decoding.
\end{corollary}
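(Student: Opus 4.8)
The plan is to obtain both identities by matching two factorizations of the \emph{same} joint law, reusing exactly the algebra already assembled in the proof of Lemma~\ref{lemma1}; no new machinery is needed, which is why the corollary is flagged as immediate.

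\textbf{Identity (\ref{three}).} Inside the proof of Lemma~\ref{lemma1}, Property~1 yields the chain-rule split (\ref{one}),
\[
P(s_0,x^N,a_e^N,a_d^N,y^N)=P(s_0)\,Q(x^N,a_e^N,a_d^N\parallel y^{N-1},s_0)\,P(y^N\parallel x^N,a_e^N,a_d^N,s_0),
\]
and substituting (\ref{two}) for the middle factor rewrites the right-hand side as
\[
P(s_0)\,Q(x^N,a_e^N\parallel z^{N-1})\,Q(a_d^N\parallel y^{N-1})\,P(y^N\parallel x^N,a_e^N,a_d^N,s_0).
\]
The conclusion of Lemma~\ref{lemma1} is this very joint law, but carrying the channel factor $P(y^N\parallel x^N,s_0)$ in place of $P(y^N\parallel x^N,a_e^N,a_d^N,s_0)$, with the identical prefactor $P(s_0)Q(x^N,a_e^N\parallel z^{N-1})Q(a_d^N\parallel y^{N-1})$. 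Equating the two expressions and cancelling this common prefactor gives (\ref{three}).

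\textbf{Identity (\ref{four}).} I would run the same comparison one level up, after summing out $s_0$. Marginalizing the conclusion of Lemma~\ref{lemma1} over $s_0$ and invoking the channel characterization (\ref{channelmodel1}), namely $\sum_{s_0}P(s_0)P(y^N\parallel x^N,s_0)=P(y^N\parallel x^N)$, produces
\[
P(x^N,a_e^N,a_d^N,y^N)=Q(x^N,a_e^N\parallel z^{N-1})\,Q(a_d^N\parallel y^{N-1})\,P(y^N\parallel x^N).
\]
Independently, the $s_0$-free versions of Property~1 and of (\ref{two}) (both of which hold verbatim, using (\ref{condition1}) and (\ref{condition2})) factor the same joint as $Q(x^N,a_e^N\parallel z^{N-1})Q(a_d^N\parallel y^{N-1})P(y^N\parallel x^N,a_e^N,a_d^N)$. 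Matching the two and cancelling the common $Q$-prefactor delivers (\ref{four}).

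The one point that needs a word of care is the cancellation of the common prefactor, which is valid precisely where that prefactor is strictly positive. This suffices for every subsequent use of (\ref{three})--(\ref{four}): they enter only through expectations under the joint law (e.g.\ the directed-information and maximum-likelihood-decoding terms of the next section), so tuples of zero probability play no role. I expect this to be the sole subtlety; the remainder is a direct re-reading of the factorizations established for Lemma~\ref{lemma1}.
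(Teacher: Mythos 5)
Your proposal is correct and matches the paper's intended argument: the corollary is obtained exactly by equating the factorization from Property~1 together with (\ref{two}) against the factorization derived in Lemma~\ref{lemma1} (and its $s_0$-marginalized version via (\ref{channelmodel1})) and cancelling the common prefactor, with the positivity caveat you note being the standard one. Nothing further is needed.
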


\subsection{Decoding}
\label{decoding}
The decoder performs ML
decoding, i.e. it chooses the message $m$ for which
$P(y^N|m)$ is maximized.
\bea
P(y^N|m)&=&\prod_{i=1}^{N}P(y_i|y^{i-1},m)\\
&\stackrel{(a)}{=}&\prod_{i=1}^{N}P(y_i|y^{i-1},a_d^i(y^{i-1}),m,x^i(m,z^{i-1}),
a_e^i(m.z^{i-1}))\\
&\stackrel{(b)}{=}&\prod_{i=1}^{N}P(y_i|y^{i-1},a_d^i(y^{i-1}),x^i(m,z^{i-1}),
a_e^i(m.z^{i-1}))\\
&\stackrel{}{=}&P(y^N\parallel
x^N,a_e^N,a_d^N)\label{eq11}\\
&\stackrel{(c)}{=}&P(y^N\parallel x^N)\label{eq12},
\eea
where (a) follows from the fact that knowing $m$ and $y^{i-1}$,
we know $(x^i,a_e^i,a_d^i)$. This can be iteratively shown. Given $m$ we
know $(x_1(m),a_1(m))$. We also know $a_{d,1}$. Given $y_1$,
$z_1=f(a_{e,1},a_{d,1},y_1)$.
Hence now we know, $(x_2(m,z_1),a_{e,2}(m,z_1),a_{d,2}(y_1))$. Iteratively we
can
conclude that for a given message $m$ and true feedback
sequence, $y^{i-1}$, we can construct $(x^i,a_e^i,a_d^i)$ knowing the codebooks.
(b) follows from the assumption on channel model in Eq. (\ref{channelmodel}) and
(c) follows from Eq. (\ref{four}).
Hence ML decoding to construct message estimate, $\hat{m}$ 
can also be done my maximizing causal conditioning, i.e.,
\bea
\hat{m}=\argmax_{m}P(y^N|m)=\argmax_{x^N}P(y^N\parallel x^N).
\eea

\subsection{Calculation of Probability of Error}
\label{errorprob}
We will see in this section that most of the proofs are similar to that
in \cite{HaimTsachyGoldsmith} with $Q(x^N,a^N\parallel z^{N-1})$ being
replaced with $Q(x^N,a_e^N\parallel z^{N-1})Q(a_d^N\parallel y^{N-1})$. This is 
justifiable from our coding scheme that uses a distribution which is causal
conditioning,
$Q(x^N,a_e^N\parallel z^{N-1})Q(a_d^N\parallel y^{N-1})$ and the optimal
decoding which is finding 
\bea
\argmax_{m}P(y^N|m)=\argmax_{x^N}P(y^N\parallel x^N).
\eea
Also from Lemma \ref{achjoint} we have,
\bea
P(x^N,a_e^N,a_d^N, y^N) &=&\sum_{s_0}P(s_0,x^N,a_e^N,a_d^N, y^N)\\
&=&Q(x^N,a_e^N\parallel
z^{N-1})Q(a_d^N\parallel
y^{N-1})\sum_{s_0}P(s_0)P(y^N\parallel
x^N,s_0)\\
&=&Q(x^N,a_e^N\parallel
z^{N-1})Q(a_d^N\parallel
y^{N-1})P(y^N\parallel
x^N),
\label{factor}
\eea
where last equality follows from the characterization of $P(y^N\parallel x^N)$
in Section \ref{results}. Due to factorization in Eq. (\ref{factor}) similar to
the one in
\cite{HaimTsachyGoldsmith}
as
\bea
P(x^N,y^N)=Q(x^N\parallel z^{N-1})P(y^N\parallel x^N),
\eea
we have parallelism in the proofs.
\par
Note that from now on we will not state the condition of cost constraints,
i.e., $\E[\Lambda(A_e^N,A_d^N)]\le\Gamma$ explicitly in maximizing
distribution. The distribution $Q(x^N,a_e^N\parallel z^{N-1})Q(a_d^N\parallel
y^{N-1})$ will be assumed to be the one satisfying cost constraints.
Let $P_{e,m}$ denote the probability of error of ML decoding
when message $m$ was sent. Given message $m$, $Y_m^c$ denotes
the set of outputs that cause error in decoding $m$, i.e.,
\bea
P_{e,m}=\sum_{y^N\in Y_m^c}P(y^N|m).
\eea
\begin{theorem}
\label{theorem2}
Let $M$ denote the total number of messages used in transmission
and $\mathsf{E}(P_{e,m})$ denote the average probability of error
over these ensemble of codes. Then for any $\rho$, $0<\rho\le1$,
\bea
\E(P_{e,m})\le
(M-1)^\rho\sum_{y^N}\left[\sum_{x^N,a_e^N,a_d^N}Q(x^N,a_e^N\parallel
z^{N-1})Q(a_d^N\parallel y^{N-1})P(y^N\parallel
x^N)^{\frac{1}{1+\rho}}\right]^{1+\rho}.\label{errorbound}
\eea
\end{theorem}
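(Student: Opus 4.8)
The plan is to adapt Gallager's random-coding bound (Theorem 5.6.1 of \cite{GallagerBook}) to the code-tree ensemble, paralleling \cite{HaimTsachyGoldsmith} but carrying the extra decoder-action tree $Q(a_d^N\parallel y^{N-1})$ alongside $Q(x^N,a_e^N\parallel z^{N-1})$. I would begin from the exact expression $P_{e,m}=\sum_{y^N}P(y^N|m)\I{\hat m\neq m}$ and invoke the standard Gallager indicator bound: since ML decoding errs only when some $m'\neq m$ satisfies $P(y^N|m')\ge P(y^N|m)$, for every $\rho\in(0,1]$ one has $\I{\hat m\neq m}\le\big[\sum_{m'\neq m}(P(y^N|m')/P(y^N|m))^{1/(1+\rho)}\big]^\rho$. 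Substituting and cancelling powers of $P(y^N|m)$ gives $P_{e,m}\le\sum_{y^N}P(y^N|m)^{1/(1+\rho)}\big[\sum_{m'\neq m}P(y^N|m')^{1/(1+\rho)}\big]^\rho$.

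Next I would take the expectation over the ensemble of independently generated code-trees. Conditioning on the tree for the transmitted message $m$ and on $y^N$, the trees for the competing messages $m'\neq m$ are i.i.d.\ and independent of the pair $(m,y^N)$, so Jensen's inequality (the map $t\mapsto t^\rho$ being concave for $0<\rho\le1$) lets me pull the tree-expectation inside the outer power, replacing $\big[\sum_{m'\neq m}P(y^N|m')^{1/(1+\rho)}\big]^\rho$ by $(M-1)^\rho\big(\E[P(y^N|m')^{1/(1+\rho)}\mid y^N]\big)^\rho$, the factor $M-1$ arising because all competing trees share the same distribution.

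The crux is then to evaluate the two surviving tree-expectations. For a competing message $m'$, the decoder reconstructs its codeword path $x'^N$ from the received $y^N$ and $m'$'s own trees, and by the code-tree construction this path, together with the induced $(a_e'^N,a_d'^N)$, is distributed according to $Q(x^N,a_e^N\parallel z^{N-1})Q(a_d^N\parallel y^{N-1})$; combined with $P(y^N|m')=P(y^N\parallel x'^N)$ from the decoding analysis (Eq.~(\ref{eq12})), this yields $\E[P(y^N|m')^{1/(1+\rho)}\mid y^N]=F(y^N)$, where $F(y^N):=\sum_{x^N,a_e^N,a_d^N}Q(x^N,a_e^N\parallel z^{N-1})Q(a_d^N\parallel y^{N-1})P(y^N\parallel x^N)^{1/(1+\rho)}$. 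The remaining expectation over the true tree is identical: since $P(y^N|m)=P(y^N\parallel X^N(m,y^N))$ with the realized path distributed by the same causal-conditioning law, $\sum_{y^N}\E_{\text{tree }m}[P(y^N|m)^{1/(1+\rho)}]\,F(y^N)^\rho=\sum_{y^N}F(y^N)\,F(y^N)^\rho$. Collecting terms gives $\E(P_{e,m})\le(M-1)^\rho\sum_{y^N}F(y^N)^{1+\rho}$, which is precisely the claimed bound.

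The main obstacle is the feedback self-reference: I must justify that, for a fixed received $y^N$, the realized input/action path of any message is governed by the causal-conditioning distribution $Q(x^N,a_e^N\parallel z^{N-1})Q(a_d^N\parallel y^{N-1})$ regardless of $y^N$'s own statistics, and that the factorization $P(x^N,a_e^N,a_d^N,y^N)=Q(x^N,a_e^N\parallel z^{N-1})Q(a_d^N\parallel y^{N-1})P(y^N\parallel x^N)$ of Eq.~(\ref{factor}) makes the true-message and competing-message expectations coincide. This is exactly the point at which the presence of two separate trees (encoder and decoder) must be shown not to break the block-wise Gallager manipulation, and it is what Lemma~\ref{achjoint} and Eq.~(\ref{four}) are set up to supply.
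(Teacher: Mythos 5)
Your proposal is correct and follows essentially the same route as the paper's proof (Appendix B): both are the Gallager random-coding argument adapted to the code-tree ensemble, hinging on the factorization $P(x^N,a_e^N,a_d^N,y^N)=Q(x^N,a_e^N\parallel z^{N-1})Q(a_d^N\parallel y^{N-1})P(y^N\parallel x^N)$ from Lemma~\ref{achjoint}, the identification $P(y^N|m)=P(y^N\parallel x^N)$, and the choice $s=\frac{1}{1+\rho}$. The only (immaterial) difference is how the $\rho$-th power is introduced: you use the pointwise indicator bound followed by Jensen's inequality for $t\mapsto t^\rho$, whereas the paper bounds each pairwise ensemble event $P(A_{m'}|m,x^N,a_e^N,a_d^N,y^N)$ by a Markov-type argument and then applies $P(\cup_{m'}A_{m'})\le\min\left[\sum_{m'}P(A_{m'}),1\right]\le\left[\sum_{m'}P(A_{m'})\right]^{\rho}$ --- two interchangeable textbook variants yielding the identical bound.
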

\begin{proof}Refer to Appendix \ref{proof2}.
\end{proof}
Let $P_{e,m}(s_0)$ denote the probability of error given the
initial state of FSC was $s_0$ and the message $m$ was sent.
\begin{theorem}\label{theorem3}
Consider FSC with feedback sampling (Fig. \ref{psetup1}) having
$\card{\mathcal{S}}$
states. For any positive integer $N$ and any positive rate $R$,
$\exists$ $(N,M)$ code for which for all messages
$m\in\{1,\cdots,\lceil2^{NR}\rceil\}$, all initial states $s_0$
and all $\rho$, $0<\rho\le1$,
\bea
P_{e,m}(s_0)\le 4\card{\mathcal{S}}2^{-N\left[-\rho
R+F_N(\rho)\right]},
\eea
where 
\bea
F_N(\rho)=-\frac{\rho\log\card{\mathcal{S}}}{N}+\max_{Q(x^N,a_e^N\parallel
z^{N-1})Q(a_d^N\parallel y^{N-1})}\left[
\min_{s_0}\E_{o,N}(\rho,Q(x^N,a_e^N\parallel z^{N-1})Q(a_d^N\parallel
y^{N-1}),s_0)\right],
\eea
and
\bea
&&\E_{o,N}(\rho,Q(x^N,a_e^N\parallel z^{N-1})Q(a_d^N\parallel
y^{N-1}),s_0)\nonumber\\&=&-\frac{1}{N}\log\sum_{y^N}\left[\sum_{x^N,a_e^N,a_d^N
} Q(x^N ,
a_e^N\parallel z^{N-1})Q(a_d^N\parallel y^{N-1})P(y^N\parallel
x^N,s_0)^{\frac{1}{1+\rho}}\right]^{1+\rho}.
\eea
\end{theorem}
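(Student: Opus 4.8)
The plan is to run the random-coding argument of Theorem \ref{theorem2} while keeping the \emph{true} initial state $s_0$ explicit throughout, and then to extract from the ensemble a single deterministic code whose maximal error probability is small simultaneously for every $s_0$ and every message. The key enabler is the factorization in Eq. (\ref{factor}): for the purpose of Gallager-type bounding, the pair of code-trees generated according to $Q(x^N,a_e^N\parallel z^{N-1})Q(a_d^N\parallel y^{N-1})$ behaves exactly like an ensemble of independently drawn codewords from a fixed causal-conditioning law. Hence every step of the derivation in Appendix \ref{proof2} carries over verbatim once the marginal channel $P(y^N\parallel x^N)$ is replaced by the state-conditioned channel $P(y^N\parallel x^N,s_0)$, reducing the claim to Gallager's finite-state-channel bound (Theorem 5.9.2 of \cite{GallagerBook}), as in \cite{HaimTsachyGoldsmith}.

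\emph{Step 1 (state-conditioned ensemble bound).} For each fixed true state $s_0$, I would repeat the computation of Theorem \ref{theorem2} with the output generated by $P(y^N\parallel x^N,s_0)$ while the decoder continues to use the state-averaged metric $P(y^N\parallel x^N)$. It suffices to lower bound that metric by its true-state contribution, $P(y^N\parallel x^N)\ge\frac{1}{\card{\mathcal{S}}}P(y^N\parallel x^N,s_0)$ (treating the initial states as equiprobable in the decoding metric, as in \cite{GallagerBook}), and then optimize over the Gallager parameter exactly as in the matched case, yielding
\bea
\E\left[P_{e,m}(s_0)\right]\le \card{\mathcal{S}}^{\rho}(M-1)^{\rho}2^{-NE_{o,N}(\rho,Q(x^N,a_e^N\parallel z^{N-1})Q(a_d^N\parallel y^{N-1}),s_0)}.
\eea
The multiplicative overhead $\card{\mathcal{S}}^{\rho}$ is precisely the price of the decoder's ignorance of $s_0$, and it is exactly the factor that the term $-\rho\log\card{\mathcal{S}}/N$ in the definition of $F_N(\rho)$ is designed to absorb.

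\emph{Step 2 (worst-case state and code extraction).} I would then choose the ensemble $Q(x^N,a_e^N\parallel z^{N-1})Q(a_d^N\parallel y^{N-1})$ attaining $\max_Q\min_{s_0}E_{o,N}$, so that the right-hand side above, maximized over $s_0$, is at most $\card{\mathcal{S}}^{\rho}(M-1)^{\rho}2^{-N\max_Q\min_{s_0}E_{o,N}}=(M-1)^{\rho}2^{-NF_N(\rho)}$, the $\card{\mathcal{S}}^{\rho}$ cancelling against the definition of $F_N(\rho)$. Averaging this ensemble bound over the $\card{\mathcal{S}}$ possible initial states and over the messages, a code exists for which $\frac{1}{M}\sum_m\sum_{s_0}P_{e,m}(s_0)\le\card{\mathcal{S}}(M-1)^{\rho}2^{-NF_N(\rho)}$; a Markov/expurgation step over the message index then removes the worst half of the messages and guarantees, for every retained $m$ and \emph{every} $s_0$, a maximal error of at most $2\card{\mathcal{S}}(M-1)^{\rho}2^{-NF_N(\rho)}$. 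Accounting for the halving of the message set (starting the ensemble with $2\lceil2^{NR}\rceil$ messages, so that $(M-1)^{\rho}\le 2\cdot2^{N\rho R}$) produces the stated prefactor $4\card{\mathcal{S}}$ and the exponent $-N[-\rho R+F_N(\rho)]$.

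The step I expect to be the main obstacle is the bookkeeping of the $\card{\mathcal{S}}$-factors while forcing a single code to work uniformly over all initial states: one must verify that the $\card{\mathcal{S}}^{\rho}$ overhead from the state-averaged decoding metric is captured correctly by the $-\rho\log\card{\mathcal{S}}/N$ term in $F_N(\rho)$, and that the union over the $\card{\mathcal{S}}$ states combined with the expurgation over messages yields exactly the constant $4\card{\mathcal{S}}$. The cost constraint poses no difficulty here, since $\E[\Lambda(A_e^N,A_d^N)]\le\Gamma$ is already a property of the fixed ensemble $Q(x^N,a_e^N\parallel z^{N-1})Q(a_d^N\parallel y^{N-1})$ and is preserved under expurgation; the delicate interaction between cost constraints and super-additivity arises only in Theorem \ref{theorem1}, not in this per-blocklength bound.
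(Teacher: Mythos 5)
Your proposal takes essentially the same route as the paper, which proves this theorem by running the argument of Theorem 10 of \cite{HaimTsachyGoldsmith} (Gallager's Theorem 5.9.2 adapted to the causal-conditioning ensemble) on top of the ensemble bound of Theorem \ref{theorem2}: the $\card{\mathcal{S}}^{\rho}$ overhead is folded into $F_N(\rho)$ via the $-\rho\log\card{\mathcal{S}}/N$ term, and the prefactor $4\card{\mathcal{S}}$ comes from doubling the message set, expurgating the worst half, and a Markov bound over the $\card{\mathcal{S}}$ initial states. Your Step 2 is exactly the intended bookkeeping and the cost-constraint remark at the end is correct.

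The one imprecision is in Step 1. Lower-bounding the decoding metric at the \emph{true} codeword by $\frac{1}{\card{\mathcal{S}}}P(y^N\parallel x^N,s_0)$ does not by itself "suffice": in the Gallager bound the competitor codewords also enter through the state-averaged metric $P(y^N\parallel x'^N)$, and that term must likewise be related to state-conditioned quantities (via $(\sum_i a_i)^{s}\le\sum_i a_i^{s}$ for $s\le 1$ and convexity of $t\mapsto t^{1+\rho}$, which is where the $\card{\mathcal{S}}^{\rho}$ actually arises). Carrying this out produces the exponent $\min_{s_0}\E_{o,N}(\rho,Q,s_0)$ directly (or a mixed two-state expression requiring a further H\"older step), not the state-specific exponent $\E_{o,N}(\rho,Q,s_0)$ claimed in your Step 1; the cleaner route is to bound $P_{e,m}(s_0)\le\card{\mathcal{S}}\sum_{s_0'}\frac{1}{\card{\mathcal{S}}}P_{e,m}(s_0')$ and apply Theorem \ref{theorem2} to the uniformly averaged channel before splitting $\E_{o,N}$. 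Since your Step 2 immediately passes to $\min_{s_0}$ anyway, the final bound is unaffected, but the intermediate per-state inequality as you state it is stronger than what the manipulation you describe actually delivers.
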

\begin{proof}Proof is following the steps in proof of Theorem 10 in
\cite{HaimTsachyGoldsmith} 
once we have obtained the bound on $\E(P_{e,m})$ [Eq. (\ref{errorbound})] in
Theorem \ref{theorem2}.
\end{proof}

\begin{theorem}
\label{theorem4}
$\E_{o,N}(\rho,Q(x^N,a_e^N\parallel z^{N-1})Q(a_d^N\parallel y^{N-1}),s_0)$
has
the
following properties,
\bea
\E_{o,N}(\rho,Q(x^N,a_e^N\parallel z^{N-1})Q(a_d^N\parallel
y^{N-1}),s_0)&\ge&0\label{eqth41}\\
\label{eqnI}\frac{1}{N}I(X^N\rightarrow Y^N|s_0))\ge\frac{\partial
\E_{o,N}(\rho,Q(x^N,a_e^N\parallel z^{N-1})Q(a_d^N\parallel
y^{N-1}),s_0)}{\partial\rho}&>&0\nonumber\\\label{eqth42}\\
\frac{\partial^2 \E_{o,N}(\rho,Q(x^N,a_e^N\parallel z^{N-1})Q(a_d^N\parallel
y^{N-1}),s_0)}{\partial\rho^2}&>&0,
\eea
where equality in Eq. (\ref{eqth41}) holds when $\rho=0$, and equality holds on
the left side of Eq. (\ref{eqth42}) when $\rho=0$. 
\end{theorem}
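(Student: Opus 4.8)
The plan is to recognize $E_{o,N}(\rho,Q(x^N,a_e^N\parallel z^{N-1})Q(a_d^N\parallel y^{N-1}),s_0)$ as the block analogue of Gallager's random-coding exponent function $E_0(\rho,Q)$ (Theorem 5.6.3 of \cite{GallagerBook}), with the ``super-symbol'' input $(x^N,a_e^N,a_d^N)$ distributed according to the causally conditioned law $Q(x^N,a_e^N\parallel z^{N-1})Q(a_d^N\parallel y^{N-1})$ and the effective channel $P(y^N\parallel x^N,s_0)$. The structural fact that lets the entire Gallager argument of \cite{HaimTsachyGoldsmith} carry over is the corollary in Eq.~(\ref{three}), $P(y^N\parallel x^N,a_e^N,a_d^N,s_0)=P(y^N\parallel x^N,s_0)$: the channel law never depends on the actions, so even though the decoder's input distribution $Q(a_d^N\parallel y^{N-1})$ is coupled to the output, the actions play only the passive role of ordinary auxiliary randomization in the inner sum. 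I would therefore prove the three properties exactly as the corresponding claims in Gallager's Theorem 5.6.3, carrying the extra action variables through the summations.

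\textbf{Value at $\rho=0$ and nonnegativity (Eq.~(\ref{eqth41})).} Setting $\rho=0$, the inner bracket collapses to $\sum_{x^N,a_e^N,a_d^N}Q(x^N,a_e^N\parallel z^{N-1})Q(a_d^N\parallel y^{N-1})P(y^N\parallel x^N,s_0)$, which by Lemma~\ref{lemma1} is exactly the conditional marginal $P(y^N|s_0)$; summing over $y^N$ gives $1$ and hence $E_{o,N}(0,\cdot,s_0)=-\tfrac1N\log 1=0$. Strict nonnegativity for $\rho>0$ then follows once the derivative is shown to be strictly positive on $(0,1]$, so I would defer it to the derivative step rather than argue it in isolation.

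\textbf{First derivative (Eq.~(\ref{eqth42})).} Writing $E_{o,N}=-\tfrac1N\log f(\rho)$ with $f(\rho)=\sum_{y^N}g(y^N,\rho)^{1+\rho}$ and $g(y^N,\rho)=\sum_{x^N,a_e^N,a_d^N}Q(x^N,a_e^N\parallel z^{N-1})Q(a_d^N\parallel y^{N-1})P(y^N\parallel x^N,s_0)^{1/(1+\rho)}$, I would differentiate and evaluate at $\rho=0$, where $f(0)=1$ so that $\partial_\rho E_{o,N}|_{0}=-\tfrac1N f'(0)$. A direct computation, using $g(y^N,0)=P(y^N|s_0)$ from the previous step, gives
\[
f'(0)=-\sum_{x^N,a_e^N,a_d^N,y^N}P(x^N,a_e^N,a_d^N,y^N|s_0)\log\frac{P(y^N\parallel x^N,s_0)}{P(y^N|s_0)}=-I(X^N\rightarrow Y^N|s_0),
\]
where the last equality is the directed-information identity $I(X^N\rightarrow Y^N|s_0)=\E[\log(P(Y^N\parallel X^N,s_0)/P(Y^N|s_0))]$ recorded in Section~\ref{results}. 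This yields $\partial_\rho E_{o,N}|_{0}=\tfrac1N I(X^N\rightarrow Y^N|s_0)$, i.e.\ equality on the left of (\ref{eqth42}) at $\rho=0$.

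\textbf{Second derivative and the remaining inequalities.} The strict positivity $\partial_\rho E_{o,N}>0$, the upper bound $\tfrac1N I(X^N\rightarrow Y^N|s_0)\ge\partial_\rho E_{o,N}$, and the sign of $\partial^2_\rho E_{o,N}$ all reduce to the sign of the second derivative of $\log f(\rho)$, which is controlled by Hölder's inequality applied to the inner expectation $g(y^N,\rho)$, viewing $P(y^N\parallel x^N,s_0)^{1/(1+\rho)}$ as a power of a nonnegative random variable under the law $Q(x^N,a_e^N\parallel z^{N-1})Q(a_d^N\parallel y^{N-1})$. I expect this Hölder step to be the main obstacle, precisely because the decoder-action distribution depends on $y^{N-1}$: one must confirm that the convexity argument of \cite{GallagerBook} and \cite{HaimTsachyGoldsmith} is unaffected by this output coupling. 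The resolution is again Eq.~(\ref{three})---the exponentiated factor $P(y^N\parallel x^N,s_0)^{1/(1+\rho)}$ contains no action variables---so for each fixed $y^N$ the inner sum has exactly the Gallager form and Hölder applies to the $(x^N,a_e^N,a_d^N)$ summation verbatim. Combining the value at $\rho=0$ with the established sign of $\partial^2_\rho E_{o,N}$ yields the monotonicity of $\partial_\rho E_{o,N}$, hence both the bound by $\tfrac1N I(X^N\rightarrow Y^N|s_0)$ and, together with $E_{o,N}(0,\cdot)=0$, the nonnegativity $E_{o,N}\ge 0$ with equality iff $\rho=0$; these are exactly the $E_0$-properties of Gallager's Theorem 5.6.3.
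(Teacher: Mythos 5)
Your proposal is correct and takes essentially the same route as the paper, whose proof is simply omitted with a pointer to Theorem 11 of \cite{HaimTsachyGoldsmith} (itself Gallager's Theorem 5.6.3 argument) under the substitution $Q(x^N\parallel z^{N-1})\to Q(x^N,a_e^N\parallel z^{N-1})Q(a_d^N\parallel y^{N-1})$; your identification of Eq.~(\ref{three}) as the reason the action variables pass harmlessly through the H\"older step is exactly the point that makes the substitution legitimate. One remark: your argument (correctly) establishes concavity, $\partial^2_\rho \E_{o,N}\le 0$, which is what the cited theorems assert and what is needed to deduce the left inequality in (\ref{eqth42}) from equality at $\rho=0$; the strict inequality ``$>0$'' in the third display of the theorem statement is evidently a sign typo and should read ``$\le 0$''.
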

\begin{proof}
Omitted as it is similar to proof of Theorem 11 in
\cite{HaimTsachyGoldsmith} with $Q(x^N\parallel z^{N-1})$ replaced by
$Q(x^N,a_e^N\parallel
z^{N-1})Q(a_d^N\parallel y^{N-1})$.
\end{proof}
\begin{lemma}
\label{lemma2}
We have the following results for the convergence of $F_N(\rho)$,
\bea
\lim_{N\rightarrow\infty}F_N(\rho)=F_{\infty}(\rho)=\sup_{N}F_N(\rho),
\eea
for $0\le\rho\le1$. The convergence of $F_N(\rho)$ is uniform in
$\rho$ and $F_{\infty}(\rho)$ is uniformly continuous for
$\rho\in[0,1]$.
\end{lemma}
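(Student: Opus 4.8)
The plan is to reduce the first assertion to the convergence theorem for superadditive sequences, exactly as in the proof of Theorem \ref{theorem1}, and then to upgrade pointwise convergence to uniform convergence by exhibiting a Lipschitz bound in $\rho$ that is uniform in $N$, furnished by Theorem \ref{theorem4}. Throughout I treat $\rho\in[0,1]$ as fixed at first and recover uniformity at the end.

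First I would show that $NF_N(\rho)$ is superadditive, i.e. $(n+l)F_{n+l}(\rho)\ge nF_n(\rho)+lF_l(\rho)$. Following Theorem \ref{theorem1}, I take distributions of the form $Q(x^n,a_e^n\parallel z^{n-1})Q(a_d^n\parallel y^{n-1})$ and $Q(x^l,a_e^l\parallel z^{l-1})Q(a_d^l\parallel y^{l-1})$ achieving the inner $\max\min_{s_0}$ at block lengths $n$ and $l$, concatenate them into the product distribution of Eq. (\ref{inputdist}), and reuse verbatim the computation there showing that the product still obeys $\E[\Lambda(A_e^N,A_d^N)]\le\Gamma$. The substantive work is to lower bound $\min_{s_0}\E_{o,n+l}(\rho,\cdot,s_0)$ under this product by the corresponding combination of $\min_{s_0}\E_{o,n}$ and $\min_{s_0}\E_{o,l}$: because the intermediate state $S_n$ linking the two blocks is unobserved, marginalizing it inside the bracketed Gallager term of $\E_{o,N}$ costs at most a factor $\card{\mathcal{S}}$, which after the outer $-\tfrac1N\log(\cdot)$ and multiplication by $N$ contributes exactly the additive $-\rho\log\card{\mathcal{S}}$ already absorbed into the definition of $F_N(\rho)$. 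This is the main obstacle, and it is where I would lean on Gallager's FSC exponent analysis (\cite{GallagerBook}) and the parallel step in \cite{HaimTsachyGoldsmith}; it is also where the cost constraints must be handled carefully, since, as noted in the introduction, they break naive subadditivity, which is precisely why the $\max\min_{s_0}$ structure and the product-distribution cost computation of Theorem \ref{theorem1} must be imported without modification.

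Granting superadditivity together with a uniform bound on $F_N$, the convergence theorem for superadditive sequences (Theorem 4.6.1 of \cite{GallagerBook}) gives, for each fixed $\rho$, $\lim_{N\to\infty}F_N(\rho)=\sup_N F_N(\rho)=F_\infty(\rho)$. The required bound comes from Theorem \ref{theorem4}: since $\E_{o,N}(0,\cdot,s_0)=0$ and $0<\tfrac{\partial}{\partial\rho}\E_{o,N}(\rho,\cdot,s_0)\le\tfrac1N I(X^N\rightarrow Y^N|s_0)\le\log\card{\mathcal{Y}}$, integrating in $\rho$ yields $0\le\E_{o,N}(\rho,\cdot,s_0)\le\rho\log\card{\mathcal{Y}}$, so $F_N(\rho)$ is bounded uniformly in $N$ and $\rho$.

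Finally I would promote pointwise to uniform convergence. The same derivative bound shows that for every $Q$ and every $s_0$ the map $\rho\mapsto\E_{o,N}(\rho,\cdot,s_0)$ is Lipschitz with constant $\log\card{\mathcal{Y}}$, and since taking $\min_{s_0}$ and then $\max_Q$ preserves a common Lipschitz constant, each $F_N$ is Lipschitz in $\rho$ with constant at most $\log\card{\mathcal{Y}}+\log\card{\mathcal{S}}$, independent of $N$ (the slope of the $-\rho\log\card{\mathcal{S}}/N$ term vanishes as $N\to\infty$). An equicontinuous sequence that converges pointwise on the compact interval $[0,1]$ converges uniformly, and a uniform limit of continuous functions on a compact set is continuous, hence uniformly continuous; this delivers both remaining assertions.
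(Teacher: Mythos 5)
Your proof is correct and follows essentially the same route as the paper, whose omitted proof defers to Lemma 13 of \cite{HaimTsachyGoldsmith}: superadditivity of $NF_N(\rho)$ via the concatenated product distribution (with the cost-constraint check imported verbatim from Theorem \ref{theorem1}, which is the only point where the action costs require new care), Gallager's convergence theorem for superadditive sequences, and the uniform-in-$N$ derivative bound of Theorem \ref{theorem4} to upgrade pointwise convergence to uniform convergence and uniform continuity via equicontinuity on $[0,1]$. The only nit is that marginalizing the intermediate state inside the bracketed Gallager term costs a factor $\card{\mathcal{S}}^{\rho}$ rather than $\card{\mathcal{S}}$, but this produces exactly the additive $-\rho\log\card{\mathcal{S}}$ correction you claim, so the argument is unaffected.
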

\begin{proof}Omitted. Proof similar to Lemma 13 in
\cite{HaimTsachyGoldsmith}.
\end{proof}
\begin{theorem}
\label{theorem5}
For any  FSC with feedback logic let,
\bea
E_r(R)=\max_{0\le\rho\le1}\left[F_{\infty}(\rho)-\rho R\right].
\eea
Then for any $\epsilon>0$, $\exists$ $N(\epsilon)$ such that for
$N\ge N(\epsilon)$, $\exists$ an $(N,M)$ code such that for all
$m,1\le m\le M=\lceil2^{NR}\rceil$, and all initial states,
\bea
P_{e,m}(s_0)\le2^{-N\left[E_r(R)-\epsilon\right]}.
\eea
\end{theorem}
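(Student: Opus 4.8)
The plan is to convert the finite-$N$ bound on $P_{e,m}(s_0)$ from Theorem \ref{theorem3} into the asymptotic random-coding exponent bound by passing from $F_N(\rho)$ to its limit $F_\infty(\rho)$, optimizing over $\rho$, and absorbing the polynomial and $\card{\mathcal{S}}$ prefactors into the arbitrarily small $\epsilon$. First I would start from Theorem \ref{theorem3}, which gives, for every $N$, every message $m$, every initial state $s_0$, and every $\rho\in(0,1]$,
\bea
P_{e,m}(s_0)\le 4\card{\mathcal{S}}2^{-N[-\rho R+F_N(\rho)]}.
\eea
Since this holds for all $\rho\in(0,1]$, I may choose the $\rho$ that maximizes the exponent; taking the best such $\rho$ gives $P_{e,m}(s_0)\le 4\card{\mathcal{S}}2^{-N\max_{0<\rho\le1}[F_N(\rho)-\rho R]}$. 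The prefactor $4\card{\mathcal{S}}$ is a constant independent of $N$, so $\frac{1}{N}\log(4\card{\mathcal{S}})\to 0$; thus for any $\epsilon>0$ there is $N_1(\epsilon)$ beyond which this prefactor is dominated by a factor $2^{N\epsilon/2}$.

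Next I would replace $F_N(\rho)$ by $F_\infty(\rho)$ using Lemma \ref{lemma2}. By that lemma the convergence $F_N(\rho)\to F_\infty(\rho)$ is \emph{uniform} in $\rho\in[0,1]$ and $F_\infty$ is uniformly continuous there, so there exists $N_2(\epsilon)$ such that $\sup_{0\le\rho\le1}|F_N(\rho)-F_\infty(\rho)|\le\epsilon/2$ for all $N\ge N_2(\epsilon)$. Uniformity is exactly what lets me interchange the limit in $N$ with the maximization over $\rho$: it guarantees
\bea
\max_{0<\rho\le1}[F_N(\rho)-\rho R]\ge\max_{0\le\rho\le1}[F_\infty(\rho)-\rho R]-\frac{\epsilon}{2}=E_r(R)-\frac{\epsilon}{2},
\eea
where the endpoint $\rho=0$ contributes nothing beyond $F_\infty(0)\le$ the interior values by Theorem \ref{theorem4} (which shows $\E_{o,N}\ge 0$ with equality at $\rho=0$, so the $\rho=0$ term of the exponent is nonpositive and can be safely included in the closed interval without changing the maximum). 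Combining the two displays, for $N\ge N(\epsilon):=\max\{N_1(\epsilon),N_2(\epsilon)\}$ the exponent satisfies $-\rho R+F_N(\rho)\ge E_r(R)-\epsilon/2$ at the optimal $\rho$, and after absorbing $\frac{1}{N}\log(4\card{\mathcal{S}})\le\epsilon/2$ into the bound I obtain $P_{e,m}(s_0)\le 2^{-N[E_r(R)-\epsilon]}$ uniformly in $m$ and $s_0$, as claimed.

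The main obstacle, and the step that carries the real content, is the legitimacy of the interchange $\lim_N\max_\rho=\max_\rho\lim_N$; this is precisely why Lemma \ref{lemma2} asserts \emph{uniform} convergence rather than mere pointwise convergence. Pointwise convergence alone would not let me guarantee that the finite-$N$ maximizing $\rho$ yields an exponent close to $E_r(R)$. The remaining bookkeeping—that $4\card{\mathcal{S}}$ is $N$-independent and that the closed-interval maximum equals the half-open one by the $\rho=0$ boundary behavior from Theorem \ref{theorem4}—is routine. Since this development parallels the corresponding argument in \cite{HaimTsachyGoldsmith}, the structure transfers directly once $Q(x^N\parallel z^{N-1})$ is replaced throughout by the product form $Q(x^N,a_e^N\parallel z^{N-1})Q(a_d^N\parallel y^{N-1})$.
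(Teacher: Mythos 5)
Your proposal is correct and follows essentially the same route as the paper, which simply invokes the argument of Theorem 14 in \cite{HaimTsachyGoldsmith}: combine the finite-$N$ bound of Theorem \ref{theorem3} with the uniform convergence $F_N\to F_\infty$ from Lemma \ref{lemma2}, optimize over $\rho$, and absorb the constant prefactor $4\card{\mathcal{S}}$ into $\epsilon$. Your explicit emphasis on why uniform convergence is needed to interchange the limit with the maximization over $\rho$ is exactly the content the paper leaves implicit.
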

\begin{proof}Proof is similar to Theorem 14 in
\cite{HaimTsachyGoldsmith} using above Theorems 
\ref{theorem1}, \ref{theorem2}, \ref{theorem3}, \ref{theorem4} and Lemma
\ref{lemma2} to conclude that for every $s_0$, there exists a
$\rho^*$ such that $F_{\infty}(\rho^*)-\rho^* R>0$, for all
$R<\underline{C}(\Gamma)$.
\end{proof}

\section{Converse}
\label{converse}
In this section, we will first prove some converse results. Later in this
section, we will show that for FSCs where probability of
initial state is
positive for all $s_0\in\mathcal{S}$, the achievable rate and the upper
bound coincide and hence the capacity is given by
$\underline{C}(\Gamma)$.
\label{converse}
\begin{theorem}
\label{theorem5} 
Consider a coding scheme with rate $R$
which achieves reliable communication over the FSC with feedback
sampling as in Fig. \ref{psetup1} meeting the average cost constraints, Eq.
(\ref{cost}). 
For such a  scheme given any $\epsilon_N>0$,
$\exists$
block length $N_0$ such that for all block lengths $N>N_0$ we
have
\bea
R\le\overline{C}_N(\Gamma)+\epsilon_N.
\eea
\end{theorem}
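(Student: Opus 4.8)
The plan is to derive a Fano-type inequality and then upper bound the resulting mutual information by $\overline{C}_N(\Gamma)$, exploiting the fact that the maximization in $\overline{C}_N(\Gamma)$ is over all causally conditioned distributions satisfying the cost constraint. First I would fix a rate-$R$ code achieving vanishing error probability $P_e^N\to 0$, so that by Fano's inequality $H(M|Y^N)\le 1+P_e^N\cdot NR\triangleq N\delta_N$ with $\delta_N\to 0$. Since $M$ is uniform on $\lceil 2^{NR}\rceil$ messages, this gives
\begin{eqnarray}
NR = H(M) = I(M;Y^N) + H(M|Y^N) \le I(M;Y^N) + N\delta_N.
\end{eqnarray}
The substance of the argument is then to show that $I(M;Y^N)$ can be upper bounded by $\max\max_{s_0} I(X^N\to Y^N|s_0)$ where the max is over the admissible class of distributions, i.e.\ by $N\,\overline{C}_N(\Gamma)$.

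The key step is to convert $I(M;Y^N)$ into a directed information expression conditioned on the initial state. Conditioning on $s_0$ and using that the code is deterministic given $(M,Z^{i-1})$, I would write $I(M;Y^N)\le I(M;Y^N|s_0)$ for a worst-case $s_0$ (here the $\max$ over $s_0$ in $\overline{C}_N$ enters: conditioning on the state can only help the adversary, and picking the largest such term is legitimate for an upper bound). The central claim is that $I(M;Y^N|s_0)\le I(X^N\to Y^N|s_0)$. This follows because the channel input $X_i$, together with the encoder and decoder actions, is a function of $(M,Z^{i-1})$ and $Y^{i-1}$, so the message influences the output only through the inputs $X^i$; formally I would expand $I(M;Y^N|s_0)=\sum_i I(M;Y_i|Y^{i-1},s_0)$ and argue $I(M;Y_i|Y^{i-1},s_0)\le I(X^i;Y_i|Y^{i-1},s_0)$ using the Markov structure $M\to X^i\to Y_i$ given $(Y^{i-1},s_0)$ that is induced by the channel model assumption (\ref{channelmodel}) and the deterministic encoding. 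Summing over $i$ yields exactly $I(X^N\to Y^N|s_0)$.

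Finally I must check that the joint distribution induced by \emph{any} valid code factors as the causally-conditioned form $Q(x^N,a_e^N\parallel z^{N-1})Q(a_d^N\parallel y^{N-1})P(y^N\parallel x^N,s_0)$ over which $\overline{C}_N(\Gamma)$ is maximized, and that it meets the cost constraint $\E[\Lambda(A_e^N,A_d^N)]\le\Gamma$. The first point is essentially Lemma \ref{lemma1}, which establishes precisely this factorization for distributions arising from the encoding/decoding logic; the cost constraint is inherited directly from the code's compliance with (\ref{cost}). Since the induced distribution is therefore a feasible point in the optimization defining $\overline{C}_N(\Gamma)$, we obtain $\frac{1}{N}I(X^N\to Y^N|s_0)\le \overline{C}_N(\Gamma)$ for every $s_0$, and hence $R\le\overline{C}_N(\Gamma)+\delta_N$; setting $\epsilon_N=\delta_N$ and choosing $N_0$ so that $\delta_N<\epsilon_N$ for $N>N_0$ completes the proof.

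The main obstacle I anticipate is the careful handling of the cost constraint in passing from the operational code to the feasible optimization variable: unlike the unconstrained setting in \cite{HaimTsachyGoldsmith}, one cannot freely invoke subadditivity, so I would need to verify that the \emph{specific} induced $Q$-distribution, rather than an optimized surrogate, satisfies $\E[\Lambda]\le\Gamma$ and lies in the constraint set, which is exactly why conditioning on $s_0$ and taking the worst case (the $\max_{s_0}$ in $\overline{C}_N$ rather than $\min_{s_0}$ in $\underline{C}_N$) is forced upon the converse.
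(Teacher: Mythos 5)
Your proposal follows essentially the same route as the paper's proof: Fano's inequality, conversion of $I(M;Y^N|S_0)$ into the directed information $I(X^N\rightarrow Y^N|S_0)$ via the Markov relation $Y_i-(X^i,Y^{i-1},S_0)-(M,A_e^i,A_d^i)$ (the paper's MC1), and then bounding by $\overline{C}_N(\Gamma)$ by observing that the code-induced joint law factors as $P(s_0)Q(x^N,a_e^N\parallel z^{N-1})Q(a_d^N\parallel y^{N-1})P(y^N\parallel x^N,s_0)$ and inherits the cost constraint, hence is feasible in the defining maximization. The only cosmetic difference is that you invoke data processing to get an inequality where the paper uses the determinism of the encoding maps to get an equality; this changes nothing in the converse.
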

\begin{proof}
Let a message $m$ is chosen uniformly with probability
$2^{-NR}$.
\bea
NR&=&H(M)\\
&\stackrel{(a)}{=}&H(M|S_0)\\
&\stackrel{}{=}&I(M;Y^N|S_0)+H(M|Y^N,S_0)\\
&\stackrel{}{\le}&I(M;Y^N|S_0)+H(M|Y^N)\\
&\stackrel{(b)}{\le}&I(M;Y^N|S_0)+1+P_e^{(N)}NR\\
&\stackrel{}{=}&\sum_{i=1}^{N}H(Y_i|Y^{i-1},S_0)-H(Y_i|Y^{i-1},X^i,A_e^i,A_d^i,
M,S_0)+1+P_e^ {
(N)}NR\\
&\stackrel{(c)}{=}&\sum_{i=1}^{N}H(Y_i|Y^{i-1},S_0)-H(Y_i|Y^{i-1},X^i,S_0)+1+P_e^{
(N)}NR\\
&\stackrel{}{=}&\sum_{i=1}^{N}I(X^i;Y_i|Y^{i-1},S_0)+1+P_e^{(N)}NR\\
&\stackrel{}{=}&I(X^N\rightarrow
Y^N|S_0)+1+P_e^{(N)}NR\\
&\stackrel{}{\le}&\max_{s_0}I(X^N\rightarrow
Y^N|s_0)+1+P_e^{(N)}NR\\
&\stackrel{(d)}{\le}&\max \max_{s_0}I(X^N\rightarrow
Y^N|s_0)+1+P_e^{(N)}NR\\
&\stackrel{}{=}&N\overline{C}_N(\Gamma)+1+P_e^{(N)}NR,\label{limit3}
\eea
where
\begin{itemize}
\item (a) follows from the independence of message and initial state.
\item (b) follows from Fano's inequality.
\item (c) follows from proof of MC1 in Appendix \ref{markov}.
\item (d) has its first maximization over the joint probability distribution
\bea\label{eq85}
P(s_0,x^N,a_e^N,a_d^N,y^N)=P(s_0)Q(x^N,a_e^N\parallel
z^{N-1})Q(a_d^N\parallel y^{N-1})P(y^N\parallel x^N,s_0),
\eea
which satisfy the expected cost constraints,
$\E[\Lambda(A_e^N,A_d^N)]\le\Gamma$ 
 and Eq. (\ref{eq85}) follows from Lemma \ref{cor3} in
Appendix \ref{markov}.
\end{itemize}
Hence we have for sufficiently large $N$ for any given $\epsilon_N >0$,
\bea
R\le \overline{C}_N(\Gamma)+\epsilon_N.
\eea
\end{proof}
\textit{Note} that unlike in \cite{HaimTsachyGoldsmith}, limit of
$\overline{C}_N(\Gamma)$ may not exist
 because sub-additivity (like the one in Theorem 16  in
\cite{HaimTsachyGoldsmith}) breaks due to the presence of cost constraints.
Hence for a general FSC, we have the above converse result for a give
blocklength
$N$ and probability of error, $\epsilon_N$. However if for the FSC, the
probability of initial state is positive for all states, then we have the exact
capacity as shown by the following theorem.
 \begin{theorem}
\label{theorem]}
Consider an FSC with feedback logic where all the initial states
$\in\mathcal{S}$ have positive probability. The capacity is
$\underline{C}(\Gamma)$.
\end{theorem}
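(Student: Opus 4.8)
The plan is to prove the two inequalities $C(\Gamma)\ge\underline{C}(\Gamma)$ and $C(\Gamma)\le\underline{C}(\Gamma)$ separately. The first is immediate from the Achievable Rate result of Section \ref{results} (established through Theorems \ref{theorem1} to \ref{theorem5}): every $R<\underline{C}(\Gamma)=\lim_{N\rightarrow\infty}\underline{C}_N(\Gamma)$ is achievable regardless of the initial state. The work is therefore entirely in the converse, and here I would \emph{not} simply invoke the general converse $R\le\overline{C}_N(\Gamma)+\epsilon_N$: that bound is genuinely loose for channels possessing a ``stuck'' bad state, where $\overline{C}_N(\Gamma)$ stays bounded away from $\underline{C}(\Gamma)$ even though the true capacity equals $\underline{C}(\Gamma)$. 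Instead I would re-run the Fano argument state-by-state, exploiting positivity of the initial-state distribution to land on the $\min_{s_0}$ quantity.

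The key observation is the transfer of reliability to every initial state. Write $p_{\min}=\min_{s_0\in\mathcal{S}}P(s_0)>0$ and let $P_e^{(N)}(s_0)$ denote the average (over messages) error probability given $S_0=s_0$. Since $P_e^{(N)}=\sum_{s_0}P(s_0)P_e^{(N)}(s_0)\ge p_{\min}P_e^{(N)}(s_0)$ for each fixed $s_0$, a scheme with $P_e^{(N)}\rightarrow0$ satisfies $P_e^{(N)}(s_0)\le P_e^{(N)}/p_{\min}\rightarrow0$ for every $s_0$ simultaneously. Thus reliability averaged over the prior forces reliability for each initial state, which is precisely the leverage positivity provides.

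Fix any initial state $s_0$ and repeat the converse chain of the preceding theorem with $S_0=s_0$ held fixed rather than averaged. Using $M\perp S_0$ gives $H(M|S_0=s_0)=H(M)=NR$; Fano gives $H(M|Y^N,S_0=s_0)\le1+P_e^{(N)}(s_0)NR$; and step (c) of that proof (the channel-model conditional independences, together with the functional dependence of $(X^i,A_e^i,A_d^i)$ on $(M,Z^{i-1},Y^{i-1})$) carries over verbatim under the frozen conditioning to yield $NR\le I(X^N\rightarrow Y^N|s_0)+1+P_e^{(N)}(s_0)NR$, the directed information being computed under the distribution $Q(x^N,a_e^N\parallel z^{N-1})Q(a_d^N\parallel y^{N-1})$ induced by the code (a feasible point, since the code meets the cost constraint (\ref{cost}) and, by Lemma \ref{cor3}, induces a distribution of exactly this causally-conditioned form). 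Because this holds for every $s_0$, I would evaluate it at a minimizing state $s_0^-\in\argmin_{s_0}I(X^N\rightarrow Y^N|s_0)$ and use $P_e^{(N)}(s_0^-)\le P_e^{(N)}/p_{\min}$ to obtain
\bea
NR\le\min_{s_0}I(X^N\rightarrow Y^N|s_0)+1+\frac{P_e^{(N)}}{p_{\min}}NR\le N\underline{C}_N(\Gamma)+1+\frac{P_e^{(N)}}{p_{\min}}NR,
\eea
where the last step uses that the induced $Q$ is admissible in the maximization defining $\underline{C}_N(\Gamma)$. Rearranging and letting $N\rightarrow\infty$ (so that $P_e^{(N)}\rightarrow0$, $1/N\rightarrow0$, and $\underline{C}_N(\Gamma)\rightarrow\underline{C}(\Gamma)$ by Theorem \ref{theorem1}) yields $R\le\underline{C}(\Gamma)$, completing the converse.

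The main obstacle I anticipate is conceptual rather than computational: recognizing that the $\overline{C}_N(\Gamma)$ converse cannot be tight here and that positivity must be used to pass from the averaged error criterion to a per-state criterion. Once that reduction is in hand, the remaining steps are a routine re-derivation of the earlier converse with $S_0=s_0$ frozen, and the only points to check carefully are that the code-induced distribution retains the causal-conditioning factorization for each fixed $s_0$ (so that $\min_{s_0}I(X^N\rightarrow Y^N|s_0)\le N\underline{C}_N(\Gamma)$ is legitimate) and that the cost constraint is preserved, both of which follow from the structural lemmas already established.
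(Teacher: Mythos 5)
Your proposal is correct and follows essentially the same route as the paper: use positivity of the initial-state distribution to conclude $P_e^{(N)}(s_0)\rightarrow 0$ for every $s_0$, run the Fano/directed-information chain with $s_0$ fixed, take the minimum over $s_0$, upper-bound by $N\underline{C}_N(\Gamma)$ via the induced causally-conditioned distribution, and combine with the achievability result. Your explicit $p_{\min}$ argument merely fills in a step the paper asserts without detail, so there is no substantive difference.
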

\begin{proof}
The proof is similar to Theorem 17 in \cite{HaimTsachyGoldsmith} with change in
equalities in 
(c), (d) and (e) below. Let
$P^N_e(s_0)$ denote the probability 
of error when the initial state is $s_0$. Since every initial state $s_0\in
\mathcal{S}$ can occur 
with non zero probability, this implies that there exists a sequence of block
codes $(N,\lfloor 2^{NR}\rfloor)$ 
with $P^N_e(s_0)\rightarrow 0$, $\forall s_0\in\mathcal{S}$. Hence we have,
\bea
NR&=&H(M)\\
&\stackrel{(a)}{=}&H(M|s_0)\\
&\stackrel{}{=}&I(M;Y^N|s_0)+H(M|Y^N,s_0)\\
&\stackrel{(b)}{\le}&I(M;Y^N|s_0)+1+P_e^{(N)}(s_0)NR\\
&\stackrel{(c)}{=}&\sum_{i=1}^{N}H(Y_i|Y^{i-1},s_0)-H(Y_i|Y^{i-1},X^i,A_e^i,
A_d^i,M ,
s_0)+1+P_e^{
(N)}(s_0)NR\\
&\stackrel{(d)}{=}&\sum_{i=1}^{N}H(Y_i|Y^{i-1},s_0)-H(Y_i|Y^{i-1},X^i,A_e^i,
A_d^i,
s_0)+1+P_e^{
(N)}(s_0)NR\\
&\stackrel{(e)}{=}&\sum_{i=1}^{N}H(Y_i|Y^{i-1},s_0)-H(Y_i|Y^{i-1},X^i,
s_0)+1+P_e^{
(N)}(s_0)NR\\
&\stackrel{}{=}&\sum_{i=1}^{N}I(X^i;Y_i|Y^{i-1},s_0)+1+P_e^{(N)}(s_0)NR\\
&\stackrel{}{=}&I(X^N\rightarrow
Y^N|s_0)+1+P_e^{(N)}NR\label{five}\\
&\stackrel{(f)}{\le}&\min_{s_0}\left[I(X^N\rightarrow
Y^N|s_0)+1+P_e^{(N)}(s_0)NR\right],
\eea
where
\begin{itemize}
\item (a) follows from the fact that
message $M$ is independent of initial state
$s_0$.
\item (b) follows from Fano's inequality.
\item (c) follow from similar arguments as in
\ref{decoding}.
\item (d) follows from the assumption of channel model as in Eq.
(\ref{channelmodel}).
\item (e) follows from proof of MC1 in Appendix \ref{markov}.
\item (f) follows from the fact that Eq. (\ref{five}) is true for all
$s_0\in\mathcal{S}$.
\end{itemize}
Hence since we have $P^N_e(s_0)\rightarrow 0$, $\forall s_0\in\mathcal{S}$, we
have,
\bea
R\le\lim_{N\rightarrow\infty}\frac{1}{N}\max \min_{s_0}I(X^N\rightarrow Y^N),
\eea
where due to Lemma \ref{cor3} in Appendix \ref{markov} the maximization is over
the 
joint probability distribution, 
\bea
P(s_0,x^N,a_e^N,a_d^N,y^N)=P(s_0)Q(x^N,a_e^N\parallel
z^{N-1})Q(a_d^N\parallel y^{N-1})P(y^N\parallel x^N,s_0),
\eea
which satisfy the expected cost constraints,
$\E[\Lambda(A_e^N,A_d^N)]\le\Gamma$. This implies from the achievability result
of Section \ref{achievability} that
capacity is,
\bea
C(\Gamma)=\underline{C}(\Gamma).
\eea
\end{proof}

\section{Capacity for Stationary Indecomposable FSC without ISI}
\label{noISI}
We assume now
that state transition is a separate markov chain
and does not depend on input, i.e.,
$P(y_i,s_i|s_{i-1},x_i)=P(s_i|s_{i-1})P(y_i|s_i,s_{i-1},x_i)$.
Such a channel is said to have no ISI. We further assume this
channel is \textit{indecomposable} as the definition given
below,
\begin{definition}
An FSC without ISI is said to be indecomposable if, for every
$\epsilon>0$, $\exists N_0$ such that $\forall N>N_0$
\bea
|P(s_N|s_0)-P(s_N|s_0')|\le\epsilon\mbox{ }\forall\mbox{
}s_N,s_0,s_0'\label{part0}.
\eea
\end{definition}
 A necessary and sufficient condition for a no ISI, FSC to be indecomposable
[c.f. Theorem 4.6.3, \cite{GallagerBook}] is that there exists a choice for the
$n^{th}$ state, say $s_n$, such that,
\bea
q(s_n|s_0)>0, \mbox{ } \forall s_0\in\mathcal{S}.
\eea
Furthermore, if the channel is indecomposable,
$n$ above can always be 
taken less than $2^{\card{\mathcal{S}}^2}$.
This condition [Theorem 6.3.2, 
\cite{GallagerBook}] also implies existence of a unique steady-state stationary
distribution $\pi(s)$ , i.e.,
\bea
\lim_{N\rightarrow\infty} P(S_N=s|s_0)=\pi(s)\label{part00}.
\eea
The channel is stationary if $P(s_0)=\pi(s_0)$.
\begin{theorem}
\label{theorem8}
For a stationary and indecomposable FSC without ISI and with
communication abstraction as in Fig. \ref{psetup1}, the capacity of the channel
is given by,
\bea
C(\Gamma)=\lim_{N\rightarrow\infty}C^N(\Gamma)=\lim_{
N\rightarrow\infty } \frac { 1 } { N }
\max I(X^N\rightarrow Y^N),
\eea
where \textbf{max} denotes maximization over $Q(x^N,a_e^N\parallel
z^{N-1})Q(a_d^N\parallel y^{N-1})$ such that
$\E[\Lambda(A_e^N,A_d^N)]\le\Gamma$.
\end{theorem}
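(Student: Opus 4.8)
The plan is to show that, for a stationary indecomposable FSC without ISI, the three sequences $\underline{C}_N(\Gamma)$, $\overline{C}_N(\Gamma)$, and $C^N(\Gamma)$ (writing $C^N(\Gamma)$ for $\frac{1}{N}\max I(X^N\rightarrow Y^N)$, where the unconditional directed information is computed with the stationary initial law $P(s_0)=\pi(s_0)$) all converge to a common limit. Since achievability (Section \ref{achievability}) already yields $C(\Gamma)\ge\lim_N\underline{C}_N(\Gamma)$ and the converse (Section \ref{converse}) yields $C(\Gamma)\le\liminf_N\overline{C}_N(\Gamma)$, and since $\lim_N\underline{C}_N(\Gamma)$ exists by Theorem \ref{theorem1}, it suffices to carry out two steps: (i) sandwich $C^N(\Gamma)$ between $\underline{C}_N(\Gamma)$ and $\overline{C}_N(\Gamma)$ up to a term that vanishes after normalization, and (ii) prove that $\overline{C}_N(\Gamma)-\underline{C}_N(\Gamma)\to 0$ by invoking indecomposability.

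For step (i), I would first control the gap between the unconditional and the $S_0$-conditioned directed information. Expanding term by term gives the identity
\bea
I(X^N\rightarrow Y^N|S_0)-I(X^N\rightarrow Y^N)=\sum_{i=1}^N I(S_0;Y_i|Y^{i-1},X^i)-I(S_0;Y^N).\nonumber
\eea
Both quantities on the right lie in $[0,\log\card{\mathcal{S}}]$: the subtracted term is at most $H(S_0)$, and the first sum telescopes to $H(S_0|X_1)-H(S_0|Y^N,X^N)-\sum_i I(S_0;X_{i+1}|Y^i,X^i)$, which is nonnegative and bounded above by $H(S_0)\le\log\card{\mathcal{S}}$. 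Hence $\card{I(X^N\rightarrow Y^N|S_0)-I(X^N\rightarrow Y^N)}\le\log\card{\mathcal{S}}$. Because, for any fixed test distribution $Q(x^N,a_e^N\parallel z^{N-1})Q(a_d^N\parallel y^{N-1})$, the $\pi$-averaged value $I(X^N\rightarrow Y^N|S_0)=\sum_{s_0}\pi(s_0)I(X^N\rightarrow Y^N|s_0)$ sits between $\min_{s_0}$ and $\max_{s_0}$ of $I(X^N\rightarrow Y^N|s_0)$, maximizing over $Q$ (over the common cost-constrained set) yields
\bea
\underline{C}_N(\Gamma)-\frac{\log\card{\mathcal{S}}}{N}\le C^N(\Gamma)\le\overline{C}_N(\Gamma)+\frac{\log\card{\mathcal{S}}}{N}.\nonumber
\eea

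Step (ii) is where I expect the main obstacle to lie. For an arbitrary admissible $Q$, I would bound $\max_{s_0}I(X^N\rightarrow Y^N|s_0)-\min_{s_0}I(X^N\rightarrow Y^N|s_0)$ by a quantity that is $o(N)$. The mechanism is that an indecomposable no-ISI channel forgets its initial state: by the definition in Eq. (\ref{part0}) and its characterization, there is a finite $n_0$ (no larger than $2^{\card{\mathcal{S}}^2}$) after which the state law is essentially independent of $s_0$. I would therefore split the block and absorb the initial-state influence into a transient of bounded length, whose contribution to the directed information is at most $O(n_0\log\card{\mathcal{Y}})$, a constant independent of $N$, while the remaining per-symbol terms become nearly insensitive to $s_0$. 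This follows Gallager's analysis of indecomposable channels \cite{GallagerBook} and the stationary-indecomposable treatment in \cite{HaimTsachyGoldsmith}; the only additional care is the feedback-dependent, cost-constrained input law, which does not disturb the state-forgetting argument because the states evolve independently of the input. The delicate point — and hence the crux — is controlling the \emph{output-conditioned} state distribution, rather than the marginal one, uniformly across the block.

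Finally, combining the two steps closes the argument. The vanishing gap $\overline{C}_N(\Gamma)-\underline{C}_N(\Gamma)\to 0$ forces $\lim_N\overline{C}_N(\Gamma)=\lim_N\underline{C}_N(\Gamma)$, and the sandwich of step (i) then gives $\lim_N C^N(\Gamma)=\lim_N\underline{C}_N(\Gamma)$. Achievability and the converse pin the capacity to this common value, so $C(\Gamma)=\lim_N C^N(\Gamma)=\lim_{N\rightarrow\infty}\frac{1}{N}\max I(X^N\rightarrow Y^N)$, as claimed.
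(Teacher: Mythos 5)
Your proposal is correct and follows essentially the same route as the paper, whose entire proof is a one-line deferral to Theorem 18 of \cite{HaimTsachyGoldsmith} with $Q(x^N\parallel z^{N-1})$ replaced by $Q(x^N,a_e^N\parallel z^{N-1})Q(a_d^N\parallel y^{N-1})$: your step (i) is exactly Property 3 of Appendix \ref{basic} applied to sandwich the unconditional directed information between $\underline{C}_N(\Gamma)$ and $\overline{C}_N(\Gamma)$, and your step (ii) is the indecomposability/state-forgetting argument of that referenced theorem, which carries over unchanged here because the cost-constraint set is identical for all three maximizations and the no-ISI assumption keeps the state chain independent of the feedback-dependent inputs. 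You in fact supply more detail than the paper does, and you correctly flag the one genuinely delicate point (uniform control of the conditional state law across the block) that both you and the paper ultimately delegate to the cited proof.
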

\begin{proof}The proof is similar to proof of Theorem 18 in
\cite{HaimTsachyGoldsmith} with $Q(x^N\parallel
z^{N-1})$ replaced by $Q(x^N,a_e^N\parallel
z^{N-1})Q(a_d^N\parallel y^{N-1})$.
\end{proof}

\section{Causal Action Encoding at Decoder}
\label{causaldecoding}
In this section we generalize the framework in Fig. \ref{psetup1}, where now
decoder actions also depend on the current channel output, i.e.,
$A_{d,i}=f_{A_{d,i}}(Y^i)$. The setting is depicted in Fig. \ref{psetup5}. Note
that the capacity in this generalized setting
can be strictly better than that in Fig. \ref{psetup1}. To get an intuition for
it, one can consider a markovian channel, i.e., an FSC for which,
\bea
P(Y_i,S_i|X_i,S_{i-1})=P(Y_i|X_i,S_{i-1})P(S_i|S_{i-1}).
\eea
The decoder knows the states along with the output on the fly, feds back the
effective output, $Y_{FB,i}=(Y_i,S_i)$ to the feedback sampler and the feedback
sampling function is specialized to $f(A_{e,i},A_{d,i},Y_{FB,i})=A_{d,i}$,
$A_{d,i}=f_{A_{d,i}}(Y^i.S^i)$. Further
$\card{\mathcal{A}}=\card{\mathcal{S}}$ and there are no cost constraints. We
will see later in Section \ref{example2} that this is the setting of coding on
 the backward link in FSCs with no constraints on active feedback symbols. As
will
be shown in Section \ref{example2} that the capacity of this system is the same
as that when encoder and decoder both have state information and it is achieved
by setting 
$A_{d,i}=S_{i}$. Here we are able to do better because $X_i(M,A^{i-1})$ can be
generated
using $S_{i-1}$ on which the channel output depends ($P(Y_i|X_i,S_{i-1})$).
Thus, it is easy to see that under such a framework for the setting in Fig.
\ref{psetup1}, i.e., when $A_{d,i}=f_{A_{d,i}}(Y^{i-1},S^{i-1})$, capacity can
be 
comparatively strictly less, as channel input can at most depend on state upto
$S^{i-2}$ and has no information about the state $S_{i-1}$ which determines
the channel output.
\begin{figure}[htbp] 
\begin{center}
\scalebox{0.65}{\input{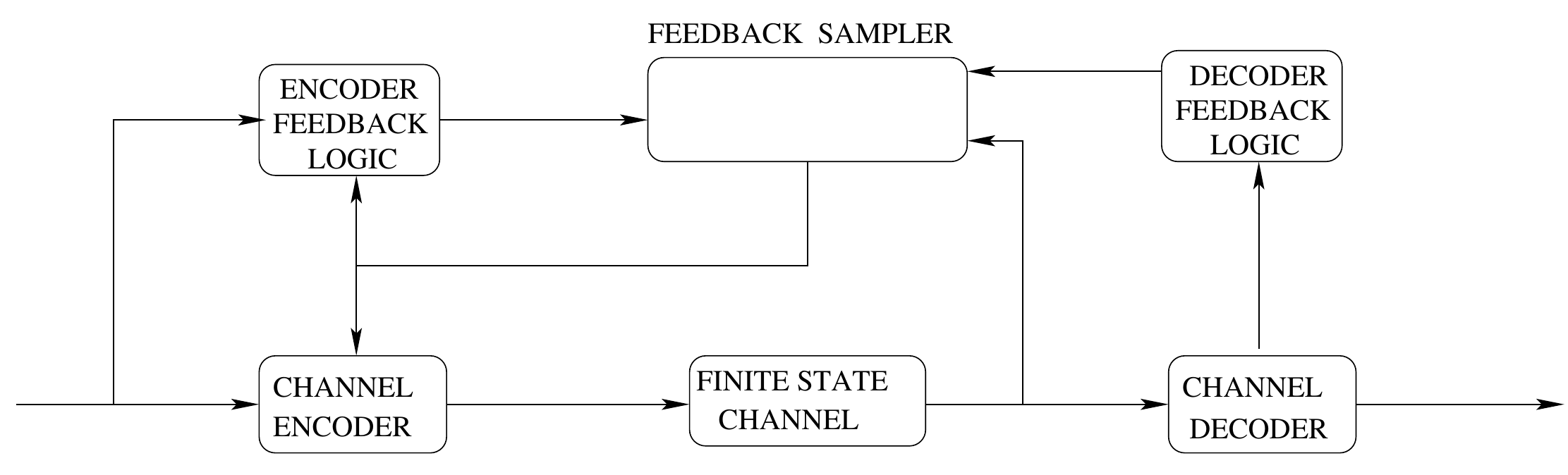_t}}
\caption{Modeling \textbf{Feedback Sampling} for the acquisition of feedback in
Finite State Channels (FSCs) when decoder can use the \textbf{current channel
output} also
to generate actions.}
\label{psetup5}
\end{center}
\end{figure}

\begin{theorem}
\label{theorem12}
 Consider the system in Fig. \ref{psetup5}. We have the following results
paralleling those in Section \ref{results} (for the setting of
Fig. \ref{psetup1}).
\par
Let $s_0$ denotes the initial state. We define
$\underline{C}_{N,causal}(\Gamma)$ and
$\overline{C}_{N,causal}(\Gamma)$ as (where causal indicates that decoder
actions can also depend on current channel output),
\bea
\underline{C}_{N,causal}(\Gamma)&\stackrel{\triangle}{=}&{\frac{1}{N}\max\min_{
s_0
}I(X^N\rightarrow
Y^N|s_0)}\\
\overline{C}_{N,causal}(\Gamma)&\stackrel{\triangle}{=}&{\frac{1}{N}\max\max_{
s_0
}I(X^N\rightarrow
Y^N|s_0)}.
\eea
Here \textbf{max} denotes maximization over the
joint probability distribution,
\bea
P(s_0,x^N,a_e^N,\phi_d^N,y^N,z^N)=P(s_0)Q(x^N,a_e^N\parallel
z^{N-1})Q(\phi_d^N\parallel y^{N-1})P(y^N\parallel x^N,
s_0)\prod_{i=1}^N\1_{\{z_i=f(a_{e,i},\phi_{d,i}|_{y_i},y_i)\}},
\eea
such that $\E[\Lambda(A_e^N,\Phi_d^N|_{Y^N})]\le\Gamma$. Here
$\phi_d^N,\phi_d^N|_{y^N}$ are particular realizations of
random variables $\Phi_d^N, \Phi_d^N|_{Y^N}$ and
\bea
\phi_{d,i}|_{y_i}&=&f_{A_{d,i},y_i}(y^{i-1})\in\mathcal{A}_d\\
\phi_{d,i}&=&\{f_{A_{d,i},y}(y^{i-1}),y\in\mathcal{Y}\}
\in\mathcal{A}_d^{\card{\mathcal{Y}}}\\
\phi_d^N|_{y^N}&=&\{\phi_{d,i}|_{y_i}\}_{i=1}^{N}.
\eea
With slight abuse of notation $\phi_{d,i}|y$ for each $y\in\mathcal{Y}$ denotes
a function from $\mathcal{Y}^{i-1}$ to
$\mathcal{A}_d$ and $\phi_{d,i}$ can be treated as a vector of
functions $\{\phi_{d,i}|y\}_{y\in\mathcal{Y}}$. 
Note that $A_{d,i}=\phi_{d,i}|_{y_i}$ and hence $\{\phi_d^N|_{y^N}\}$ denotes
the decoder action sequence.
\begin{enumerate}
\item \textbf{Achievable Rate :} For a communication abstraction as in Fig.
\ref{psetup5}, any rate $R$ is achievable such
that,
\bea
R<\lim_{N\rightarrow\infty}\underline{C}_{N,causal}(\Gamma)=\sup_N\left[
\underline { C }
_{N,causal}(\Gamma)-\frac {
\log
|S|}{N}\right].
\eea
\item \textbf{Converse :} Consider a coding scheme with rate $R$
which achieves reliable communication over the FSC with feedback
sampling as in Fig. \ref{psetup5}. This implies the existence of
$(N,\lceil2^{NR}\rceil)$ codes such that the probability of error $P_{e}^N$ goes
to zero as $N\rightarrow\infty$. For such a  scheme given $\epsilon>0$,
$\exists$
block length $N_0$ such that for all block lengths $N>N_0$ we
have
\bea
R\le\overline{C}_{N,causal}(\Gamma)+\epsilon.
\eea
\item \textbf{Capacity :} In the following cases we characterize
the capacity exactly,
\begin{enumerate}
\item For an FSC where the probability of the initial state is
positive for all $s_0\in\mathcal{S}$, the capacity is evaluated
exactly,
\bea
C_{causal}(\Gamma)=\lim_{N\rightarrow\infty}\underline{C}_{N,causal}(\Gamma).
\eea
\item For stationary \textquoteleft\textit{indecomposable}\textquoteright
\ channels without
ISI with feedback sampling as in Fig. \ref{psetup5}, the capacity
is,
\bea
C_{causal}(\Gamma)=\lim_{N\rightarrow\infty}{\frac{1}{N}\max I(X^N\rightarrow
Y^N)},
\eea
where maximization is over the joint probability distribution,
\bea
P(x^N,a_e^N,\phi_d^N,y^N,z^N)=Q(x^N,a_e^N\parallel
z^{N-1})Q(\phi_d^N\parallel y^{N-1})P(y^N\parallel
x^N)\prod_{i=1}^N\1_{\{z_i=f(a_{e,i},\phi_{d,i}|_{y_i},y_i)\}}\nonumber,
\eea
such that $\E[\Lambda(A_e^N,\Phi_d^N|_{Y^N})]\le\Gamma$
\end{enumerate}
\end{enumerate}
\end{theorem}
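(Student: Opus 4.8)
The plan is to reduce the causal‑decoder‑action setting of Fig.~\ref{psetup5} to the strictly‑causal setting of Fig.~\ref{psetup1} by a single change of variables, after which essentially every result of Sections~\ref{achievability}--\ref{noISI} transcribes verbatim. The key device is to \emph{lift} the output‑dependent decoder action to the meta‑action $\Phi_{d,i}\in\mathcal{A}_d^{\card{\mathcal{Y}}}$ already introduced in the statement: $\phi_{d,i}|_y=f_{A_{d,i},y}(y^{i-1})$ is the action the decoder would play at time $i$ were the current output equal to $y$, and it is a function of $Y^{i-1}$ \emph{only}. Thus $\Phi_{d,i}$ is generated strictly causally, with causal‑conditioning law $Q(\phi_d^N\parallel y^{N-1})$, while the realized decoder action is recovered as $A_{d,i}=\Phi_{d,i}|_{Y_i}$. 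Defining $\tilde f(a_e,\phi_d,y)\stackrel{\triangle}{=}f(a_e,\phi_d|_y,y)$, the instantaneous feedback becomes $Z_i=\tilde f(A_{e,i},\Phi_{d,i},Y_i)$, i.e.\ again a deterministic function of the encoder action, a \emph{strictly causal} decoder (meta‑)action, and the current output, which is exactly the structure assumed in Eq.~(\ref{channelmodel}). In effect Fig.~\ref{psetup5} becomes an instance of Fig.~\ref{psetup1} with $\mathcal{A}_d$ enlarged to $\mathcal{A}_d^{\card{\mathcal{Y}}}$ and $f$ replaced by $\tilde f$.

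With this identification I would carry the whole development over with $A_d^N$ replaced by $\Phi_d^N$ and $Q(a_d^N\parallel y^{N-1})$ by $Q(\phi_d^N\parallel y^{N-1})$. First, the factorization of the joint law (Lemma~\ref{achjoint}) holds unchanged, since the only properties used were that $(X_i,A_{e,i})$ depends on the past through $(X^{i-1},A_e^{i-1},Z^{i-1})$ and that the decoder (meta‑)action depends on the past through $Y^{i-1}$; both survive the lifting, and $Z_i=\tilde f(\cdot)$ preserves the channel Markov property. Achievability then follows from the same code‑tree construction (encoder code‑trees from $Q(x^i,a_e^i|x^{i-1},a_e^{i-1},z^{i-1})$ and a single decoder meta‑action code‑tree from $Q(\phi_{d,i}|\phi_d^{i-1},y^{i-1})$) with ML decoding of $P(y^N\parallel x^N)$. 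The super‑additivity/existence argument (Theorem~\ref{theorem1}), the error‑exponent bounds (Theorems~\ref{theorem2}--\ref{theorem4}), the general converse (via Fano's inequality and the Markov step MC1), and the two exact‑capacity statements (the positive‑initial‑state case and the stationary indecomposable no‑ISI case, Theorem~\ref{theorem8}) all reproduce line for line, yielding the stated $R<\lim_N\underline{C}_{N,causal}(\Gamma)=\sup_N[\underline{C}_{N,causal}(\Gamma)-\log|S|/N]$ and $R\le\overline{C}_{N,causal}(\Gamma)+\epsilon$. It is precisely here that the lifting earns its keep: the converse's chain rule and the ML‑decoder's reconstruction of $(x^i,a_e^i,\phi_d^i)$ from $(m,y^{i-1})$ require the (meta‑)action to be a function of $Y^{i-1}$ alone, whereas the raw action $A_{d,i}=f_{A_{d,i}}(Y^i)$ is not, so a naive transcription would break MC1.

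The step I expect to be the main obstacle is the cost constraint, which is exactly the place the introduction flags as requiring care. Unlike in Fig.~\ref{psetup1}, the instantaneous cost $\Lambda(A_{e,i},A_{d,i})=\Lambda(A_{e,i},\Phi_{d,i}|_{Y_i})$ now depends on the current output through $A_{d,i}$, so the admissible set is governed by $\E[\Lambda(A_e^N,\Phi_d^N|_{Y^N})]\le\Gamma$, an expectation over the induced joint law rather than a functional of the (meta‑)action distribution alone. I would first verify that this functional is still linear in $Q(x^N,a_e^N\parallel z^{N-1})Q(\phi_d^N\parallel y^{N-1})$ for a fixed channel $P(y^N\parallel x^N,s_0)$, so that the feasible set stays convex and the concatenation of two feasible per‑block strategies in the super‑additivity argument remains feasible (the weighted per‑block costs summing to at most $\Gamma$). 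The subtlety to resolve is that the effective initial state entering the second block of a concatenated scheme is $S_n$ rather than a fresh draw from $P(s_0)$, so the output statistics, and hence the output‑dependent cost, of that block need not match their stand‑alone values; this is the same breakdown of sub‑additivity already noted for $\overline{C}_N$, and it is why only $\underline{C}_{N,causal}$ (and not $\overline{C}_{N,causal}$) is asserted to possess a limit. Confirming that cost‑feasibility is preserved under both the lifting and this concatenation, while respecting the $\max\min_{s_0}$ and $\max\max_{s_0}$ structure, is the one genuinely non‑mechanical part of the argument.
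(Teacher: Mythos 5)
Your proposal is correct and follows essentially the same route as the paper: the paper likewise lifts the output-dependent decoder action to the strictly causal meta-action $\Phi_{d,i}\in\mathcal{A}_d^{\card{\mathcal{Y}}}$ (a Shannon-strategy argument) with the modified sampling function $g(A_{e,i},\Phi_{d,i},Y_i)=f(A_{e,i},\Phi_{d,i}|_{Y_i},Y_i)$, and then invokes the results of Section \ref{results} under the substitution $A_d\rightarrow\Phi_d$, $f\rightarrow g$, with the cost constraint becoming $\E[\Lambda(A_e^N,\Phi_d^N|_{Y^N})]\le\Gamma$. Your additional verification of cost-feasibility under the lifting and concatenation is more detailed than the paper's one-paragraph reduction but does not change the argument.
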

\begin{proof}
 The proof is straightforward as it uses the similar results as
stated in Section \ref{results} for the framework in Fig. \ref{psetup1} where
decoder actions do not depend on current channel output. The argument is as
follows. Notice that the setting in Fig. \ref{psetup5} where decoder takes
actions $A_{d,i}(Y^i)$ and the sampling function is
$Z_i=f(A_{e,i},A_{d,i},Y_i)$ is equivalent to the setting in Fig. \ref{psetup1}
where decoder takes actions
$\tilde{A}_{d,i}(Y^{i-1})\in\tilde{\mathcal{A}}_d=\mathcal{A}_d^{\card{\mathcal{
Y } } }$, or
$\tilde{A}_{d,i}=\Phi_{d,i}$ as defined
in the Theorem above and feedback sampling function is, \bea
Z_i=g(A_{e,i},\tilde{A}_{d,i},Y_i)=f(A_{e,i},\Phi_{d,i}|_{Y_i},Y_i).
\eea 
More precisely operationally, the generalized setting when decoder takes action
which also depend on current output is equivalent when decoder takes an action
vector, depending only on past channel output, for each of the possible Y's, and
the feedback sampling function uses the current output to extract out the
corresponding action from this action vector to generate sampled feedback. Hence
all the above results are derived from those in Section \ref{results} by the
transformation, $A_d\rightarrow\Phi_d$, $f(\cdot)\rightarrow g(\cdot)$. The
cost constraints hence are equivalent to
$\E[\Lambda(A_e^N,\Phi_d^N|_{Y^N})]\le\Gamma$.  The idea is similar to that of Shannon strategies, \cite{ShannonChannel}
\end{proof}
\begin{note}
Note that we started by solving a seemingly more restrictive case, i.e., the
setting in Fig. 1 where decoder actions depend on the channel output
strictly causally. In this section, we applied our results for the
setting of Fig. \ref{psetup1}
to characterize fundamental limits for the setting in Fig. \ref{psetup5}, where
decoder actions can depend also on the current channel
output, by showing that the latter setting can be embedded in the
former via an appropriate extension of the decoder action alphabet.
Thus, the setting of Fig. \ref{psetup1} is in fact \emph{more} general than that
of Fig. \ref{psetup5}.
Interestingly, in the other direction, it does not appear that the
results for the seemingly more restrictive setting of Fig. \ref{psetup1} can be
deduced from those for the setting of Fig. \ref{psetup5}.
\end{note}

\section{Special Cases}
\subsection{Feedback Logic At Encoder}\label{encoderaction}
The basic framework in this subsection is the setting in Fig. \ref{psetup2}.
\begin{figure}[htbp] 
\begin{center}
\scalebox{0.65}{\input{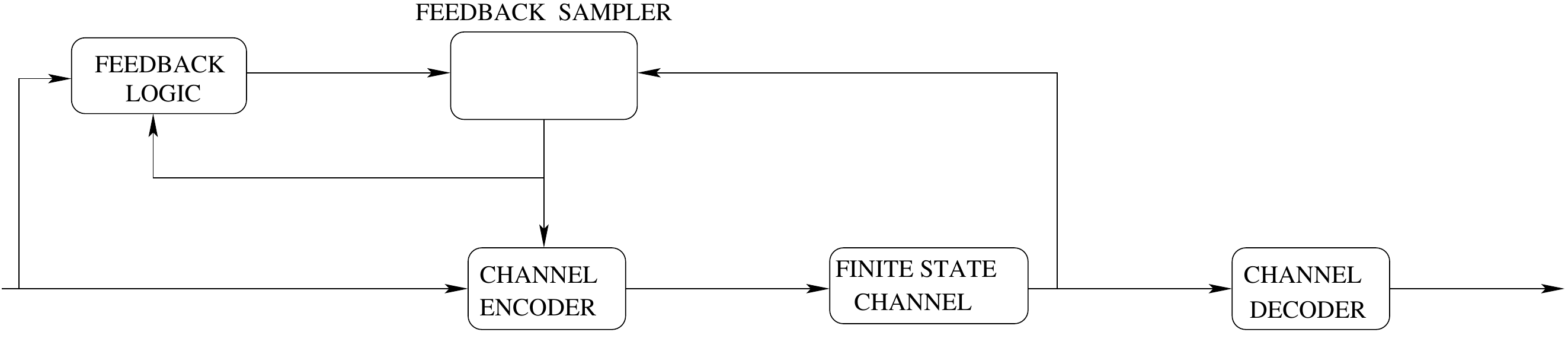_t}}
\caption{Modeling \textbf{Feedback Sampling} for the acquisition of feedback in
Finite State Channels (FSCs) with only \textit{Encoder Feedback Logic}.}. 
\label{psetup2}
\end{center}
\end{figure}
Here only the encoder takes actions to govern feedback sampling.
\begin{theorem}
\label{theorem9}
 For no ISI, stationary and indecomposable FSC with encoder \textit{Feedback
Logic} as in Fig. 
\ref{psetup2}, the capacity is given by,
\bea
C_{enc}(\Gamma)=\lim_{N\rightarrow\infty}\frac{1}{N}\max_{Q(x^N,a^N\parallel
z^{N-1}),\E[\Lambda(A^N)]\le\Gamma}I(X^N\rightarrow Y^N).
\eea
\end{theorem}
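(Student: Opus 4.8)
The plan is to obtain this statement as a direct specialization of Theorem \ref{theorem8}, which already establishes the capacity of a stationary, indecomposable, no-ISI FSC under the general feedback-sampling model of Fig. \ref{psetup1}. The point is that the encoder-only setting of Fig. \ref{psetup2} is nothing but the general setting in which the decoder action is degenerate, so the only work is to verify that every object in Theorem \ref{theorem8} collapses to its encoder-only counterpart.

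First I would formalize the embedding by taking the decoder action alphabet to be a singleton, $\card{\mathcal{A}_d}=1$, so that $A_{d,i}$ takes a single fixed value for all $i$. Then the decoder action code-tree of Section \ref{encoding} degenerates to the trivial tree, $Q(a_d^N\parallel y^{N-1})$ is identically one, and the sampling function $f(A_{e,i},A_{d,i},Y_i)$ reduces to a function of $(A_{e,i},Y_i)$ alone; writing $A_i$ for $A_{e,i}$, this is exactly $Z_i=f(A_i,Y_i)$ as in Fig. \ref{psetup2}. Because $A_{d,i}$ is constant, the cost $\Lambda(a_e,a_d)$ becomes a function of $a_e$ only, and the constraint $\E[\Lambda(A_e^N,A_d^N)]\le\Gamma$ reduces to $\E[\Lambda(A^N)]\le\Gamma$.

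Next I would check that the maximizing distribution specializes correctly. With $\card{\mathcal{A}_d}=1$ the joint law $Q(x^N,a_e^N\parallel z^{N-1})Q(a_d^N\parallel y^{N-1})P(y^N\parallel x^N)$ becomes $Q(x^N,a^N\parallel z^{N-1})P(y^N\parallel x^N)$, so the maximization over the product of causally conditioned distributions collapses to a maximization over $Q(x^N,a^N\parallel z^{N-1})$ alone, subject to $\E[\Lambda(A^N)]\le\Gamma$. The no-ISI, indecomposability, and stationarity hypotheses are properties of the underlying channel law $P(y_i,s_i|x_i,s_{i-1})$ and are untouched by trivializing the decoder action, so the hypotheses of Theorem \ref{theorem8} hold verbatim.

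I expect the only point requiring care---rather than a genuine obstacle---is to confirm that the degenerate decoder action is admissible in both the achievability (code-tree) scheme of Section \ref{achievability} and the converse of Section \ref{converse}, i.e.\ that no step implicitly assumes $\card{\mathcal{A}_d}>1$. Since the decoder action tree may be taken trivial and every inequality in the converse holds for arbitrary finite $\mathcal{A}_d$, this is immediate. Invoking Theorem \ref{theorem8} with these substitutions then yields the claimed expression for $C_{enc}(\Gamma)$.
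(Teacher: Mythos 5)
Your proposal is correct and matches the paper's own proof, which likewise obtains the result by specializing Theorem \ref{theorem8} with $A_d$ degenerate, $A_e=A$, and $\Lambda(A_e,A_d)=\Lambda(A)$. Your additional verification that the achievability and converse arguments tolerate a singleton decoder action alphabet is a reasonable (and correct) elaboration of the same one-line specialization.
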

\begin{proof}
 Specialize Theorem \ref{theorem8} as $A_d=\phi$, $A_e=A$ and
$\Lambda(A_e,A_d)=\Lambda(A_e)=\Lambda(A)$.
\end{proof}

\subsection{Feedback Logic At Decoder}\label{decoderaction}
In the previous sub-section, we characterized the fundamental limit for the
communication system 
as depicted in Fig. \ref{psetup2} where encoder took actions to govern
acquisition of feedback 
from decoder. However in some practical systems, it is the receiver (or decoder)
which estimates
the channel state perfectly and then decides to send it to the transmitter (or
encoder) through
noise-free feedback, \cite{GoldsmithVaraiya}. To model such a system where
sending noise free feedback from receiver
to transmitter is costly and is governed by actions taken by the decoder, we
consider the communication abstraction as in Fig. \ref{psetup3}.
\begin{theorem}
\label{theorem10}
 For no ISI, stationary and indecomposable FSC with decoder \textit{Feedback
Logic} as in Fig. 
\ref{psetup3}, the capacity is given by,
\bea
C_{dec}(\Gamma)=\lim_{N\rightarrow\infty}\frac{1}{N}\max_{Q(x^N\parallel
z^{N-1})Q(a^N\parallel
y^{N-1}),\E[\Lambda(A^N)]\le\Gamma}I(X^N\rightarrow Y^N).
\eea
\end{theorem}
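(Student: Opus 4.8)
The plan is to obtain Theorem \ref{theorem10} directly as a specialization of Theorem \ref{theorem8}, in exact analogy with the derivation of the encoder-only result in Theorem \ref{theorem9}. The setting of Fig. \ref{psetup3} is precisely the setting of Fig. \ref{psetup1} in which the encoder takes no action, so I would formalize the reduction by letting the encoder action alphabet $\mathcal{A}_e$ be a singleton (so that $A_{e,i}$ is a fixed constant for every $i$), setting $A_d = A$, and taking a cost function that depends only on the decoder action, $\Lambda(A_e,A_d)=\Lambda(A_d)=\Lambda(A)$. This is the dual of the encoder-only specialization, in which one instead sets $A_d=\phi$.

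Under this substitution each ingredient of Theorem \ref{theorem8} collapses in the desired way. First, because the $a_e^N$ coordinate is now deterministic, the encoder code-tree distribution degenerates, $Q(x^N,a_e^N\parallel z^{N-1})=Q(x^N\parallel z^{N-1})$, which is exactly the input distribution appearing in the claimed formula. Second, the feedback sampler $Z_i=f(A_{e,i},A_{d,i},Y_i)$ reduces to $Z_i=f(A_i,Y_i)$, since the dependence on the constant $A_{e,i}$ is absorbed into $f$; the decoder action code-tree $Q(a_d^N\parallel y^{N-1})$ becomes $Q(a^N\parallel y^{N-1})$; and the cost constraint $\E[\Lambda(A_e^N,A_d^N)]\le\Gamma$ becomes $\E[\Lambda(A^N)]\le\Gamma$. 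With these identifications the maximizing joint law of Theorem \ref{theorem8} is exactly $Q(x^N\parallel z^{N-1})\,Q(a^N\parallel y^{N-1})\,P(y^N\parallel x^N)$ subject to $\E[\Lambda(A^N)]\le\Gamma$, and the asserted capacity expression follows immediately.

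The only point that I would pause to verify — and it is routine rather than a genuine obstacle — is that the structural facts underlying Theorem \ref{theorem8} remain valid once the encoder action is made trivial. Concretely, I would confirm that the channel model assumption (\ref{channelmodel}) is unaffected, which is clear since conditioning on a constant $A_{e,i}$ changes nothing, and that the factorization of the induced joint distribution established in Lemma \ref{achjoint} continues to hold, so that the code-tree achievability analysis of Section \ref{achievability} and the converse of Section \ref{converse} apply verbatim. Since all of these follow at once from specializing $\mathcal{A}_e$ to a singleton, there is no substantive difficulty: the entire content of the theorem is the observation that the decoder-only problem embeds isomorphically into the general two-action framework, whence the result is inherited from the stationary indecomposable no-ISI characterization of Theorem \ref{theorem8}.
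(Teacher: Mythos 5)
Your proposal is correct and matches the paper's own proof, which is precisely the one-line specialization of Theorem \ref{theorem8} with $A_e=\phi$, $A_d=A$, and $\Lambda(A_e,A_d)=\Lambda(A)$; your additional verification that the code-tree distributions, sampler, and cost constraint collapse as claimed is a sound (if more detailed) elaboration of the same reduction.
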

\begin{figure}[htbp]
\begin{center}
\scalebox{0.65}{\input{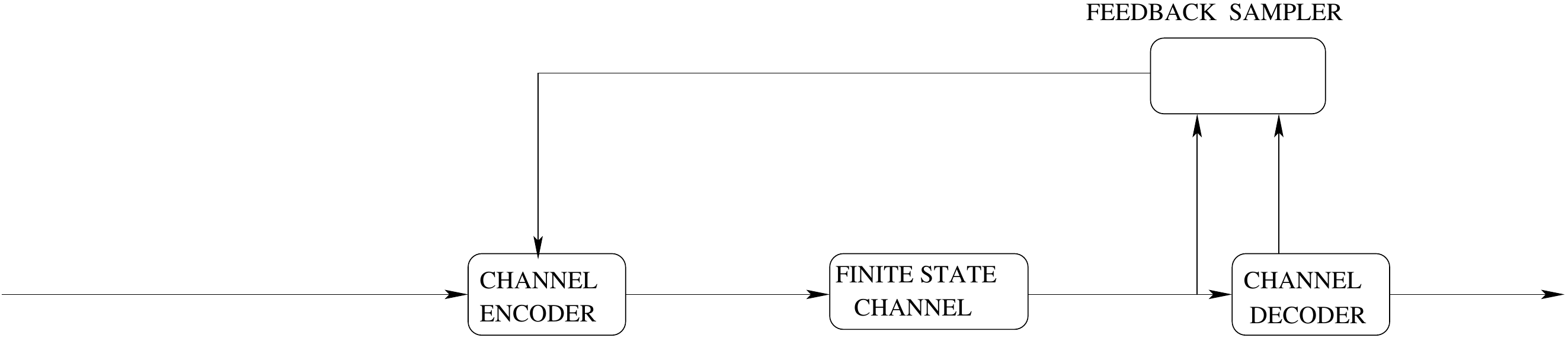_t}}
\caption{Modeling \textbf{Feedback Sampling} for the acquisition of feedback in
Finite State Channels (FSCs) with only \textit{Decoder Feedback Logic}.}
\label{psetup3}
\end{center}
\end{figure}
\begin{proof}
 Specialize Theorem \ref{theorem8} as $A_e=\phi$, $A_d=A$ and
$\Lambda(A_e,A_d)=\Lambda(A_d)=\Lambda(A)$. 
\end{proof}
\begin{note}\label{note2}
Also note that if in Fig.
\ref{psetup3}, decoder can use current channel output to generate actions, we
can do the appropriate transformation in Theorem \ref{theorem12} to arrive at,
\bea
C_{dec,causal}(\Gamma)=\lim_{N\rightarrow\infty}\frac{1}{N}\max_{Q(x^N\parallel
z^{N-1})Q(\phi_d^N\parallel
y^{N-1}),\E[\Lambda(\Phi_d^N|_{Y^N})]\le\Gamma}I(X^N\rightarrow Y^N).
\eea
\end{note}

\section{Numerical Example 1 : To Feed or Not to Feed back}
\label{example1}
\begin{figure}[htbp]
\begin{center}
\scalebox{0.65}{\input{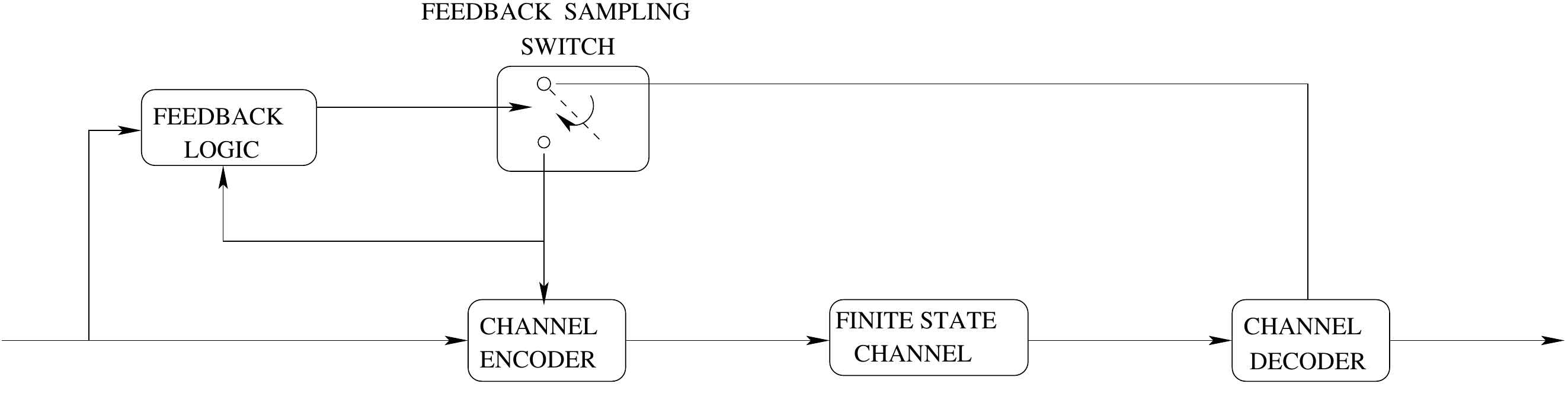_t}}
\caption{Modeling logic \textquoteleft\textbf{to feed or not to
 feed back}\textquoteright\ in
Finite State Channels (FSCs) with encoder taking actions.}
\label{psetup4}
\end{center}
\end{figure}
\subsection{Encoder Actions}
Consider the setting as depicted in Fig. \ref{psetup4}. Now the actions are
binary, i.e., 
$\mathcal{A}=\{0,1\}$. In this setting, action sequence determine \textit{to
feed or not to feed back} a deterministic function of the past channel output,
i.e.,
\bea
Z_i&=&f(A_i,Y_i)=g(Y_i), \mbox{ if }A_i=1\nonumber\\
Z_i&=&f(A_i,Y_i)=\ast, \mbox{ }\mbox{ if }A_i=0,
\eea
where $\ast$ stands for erasure or no information about feedback. As a specific
example for such a setting
 consider the communication system involving Markovian channel as in Fig.
\ref{markovian}, which is essentially
a
no ISI, stationary, 
indecomposable FSC. Let the stationary distribution be given by $\pi(s)$,
$\forall s\in \mathcal{S}$. 
The feedback from the decoder at 
time $i$ consists of tuple $Y_{FB,i}=(Y_i,S_{i})$ and observed or
\textit{sampled} feedback 
$Z_i=g(Y_{FB,i})=S_{i}$, when $A_i=1$. The cost function, $\Lambda(a)=a$,
$a\in\mathcal{A}$
and the cost constraint is
 $\Gamma\in[0,1]$. Using Theorem \ref{theorem9}, the capacity of such a system
is given by,
\bea
C_{enc}(\Gamma)=\lim_{N\rightarrow\infty}C^N_{enc}(\Gamma)=\lim_{
N\rightarrow\infty } \frac { 1 } { N }
\max_{Q(x^N,a^N\parallel
z^{N-1}),\E[\Lambda(A^N)]\le\Gamma}I(X^N\rightarrow (
Y^N,S^N)).\label{noisicap}
\eea
In the following, we give single letter lower bound on $C_{enc}(\Gamma)$.
\begin{theorem}
\label{theorem11}
The capacity of the system in Fig. \ref{markovian} with encoder feedback logic
is
lower bounded as,
\bea
 C_{enc}(\Gamma)\ge C_{enc,lower}(\Gamma)=\max I(X;Y|S),
\eea
where maximization is over joint probability distribution, 
\bea
P_{S,A,Z,X,Y}(s,a,z,x,y)=\pi_S(s)
P_A(a)\1_{\{z=f(a,s)\}}P_{X|Z,A}(x|z,a)P_{Y|X,S}(y|x,s).
\eea
and $\E[\Lambda(A)]\le\Gamma$.
\end{theorem}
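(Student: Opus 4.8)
The plan is to prove the bound by evaluating the multi-letter expression (\ref{noisicap}) on a single, memoryless strategy that is feasible under the cost constraint and already attains $I(X;Y|S)$ in the limit. Given a single-letter law of the stated product form, I would induce the causally conditioned strategy in which the actions $A_i$ are drawn i.i.d.\ from $P_A$ and $X_i$ is generated from $P_{X|Z,A}(\cdot\,|\,Z_{i-1},A_{i-1})$, where $Z_{i-1}=f(A_{i-1},S_{i-1})$ equals $S_{i-1}$ exactly when $A_{i-1}=1$. Because $\Lambda(a)=a$ and the $A_i$ are i.i.d., the average cost is $P_A(1)=\E[\Lambda(A)]\le\Gamma$, so this $Q(x^N,a^N\parallel z^{N-1})$ is feasible and therefore lower-bounds the maximum in (\ref{noisicap}). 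The essential modeling observation driving the choice is the index alignment: the sampled state $S_{i-1}$ (made available to the encoder through $Z_{i-1}$) is precisely the state governing the channel law $P(y_i\,|\,x_i,s_{i-1})$ of the next transmission, so the single-letter tuple $(S,A,Z,X,Y)$ corresponds to $(S_{i-1},A_{i-1},Z_{i-1},X_i,Y_i)$ and $P_{Y|X,S}(y|x,s)=P(y_i|x_i,s_{i-1})$.

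With this strategy fixed, I would expand $I(X^N\rightarrow(Y^N,S^N))=\sum_{i=1}^N I(X^i;Y_i,S_i\,|\,Y^{i-1},S^{i-1})$ and reduce each summand to the single-letter quantity $I(X_i;Y_i\,|\,S_{i-1})$. The reduction uses three conditional independences. First, since the channel has no ISI and the state is an input-independent Markov chain, $S_i\perp(X^i,Y^{i-1})\,|\,S^{i-1}$, so $I(X^i;S_i\,|\,Y^{i-1},S^{i-1})=0$ and the summand collapses to $I(X^i;Y_i\,|\,Y^{i-1},S^i)$. Second, the channel factorization (\ref{channelmodel}) for the Markovian channel gives $Y_i\perp(X^{i-1},Y^{i-1},S^{i-2},S_i)\,|\,(X_i,S_{i-1})$, so $H(Y_i\,|\,X^i,Y^{i-1},S^i)=H(Y_i\,|\,X_i,S_{i-1})$. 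Third, under the memoryless strategy $X_i$ depends on the history only through $S_{i-1}$ (via $Z_{i-1}$ and the fresh draw $A_{i-1}$), i.e.\ $X_i\perp(Y^{i-1},S^{i-2},S_i)\,|\,S_{i-1}$; combined with the second fact this yields $Y_i\perp(Y^{i-1},S^{i-2},S_i)\,|\,S_{i-1}$, hence $H(Y_i\,|\,Y^{i-1},S^i)=H(Y_i\,|\,S_{i-1})$. Subtracting the two entropies gives exactly $I(X_i;Y_i\,|\,S_{i-1})$.

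Finally I would invoke stationarity. Because the channel is stationary and indecomposable with $P(s_0)=\pi(s_0)$, the marginal $S_{i-1}\sim\pi_S$ for every $i$, and the induced conditional law of $(X_i,Y_i)$ given $S_{i-1}$ matches the single-letter marginal; thus each reduced term equals $I(X;Y|S)$ evaluated under the chosen distribution. The only special index is $i=1$, where no feedback is yet available and $X_1$ is generated blind; its contribution is nonnegative and bounded by a constant, so it is annihilated by the $\tfrac1N$ normalization. Hence $\tfrac1N I(X^N\rightarrow(Y^N,S^N))\to I(X;Y|S)$, which gives $C_{enc}(\Gamma)\ge I(X;Y|S)$ for the chosen law; taking the supremum over all single-letter distributions of the stated form with $\E[\Lambda(A)]\le\Gamma$ delivers $C_{enc}(\Gamma)\ge\max I(X;Y|S)$.

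I expect the main obstacle to be the third step, namely justifying $X_i\perp(Y^{i-1},S^{i-2},S_i)\,|\,S_{i-1}$ rigorously. This requires using simultaneously the i.i.d.\ action generation, the fact that the encoder's only channel-relevant use of the past is the sampled state $Z_{i-1}=S_{i-1}$, and the no-ISI Markov dynamics; a misalignment between the sampled state $S_{i-1}=Z_{i-1}$ and the state $S_{i-1}$ governing $Y_i$ would silently break the collapse to $I(X;Y|S)$. The first two conditional independences, by contrast, follow directly from the channel factorization and the directed-information chain rule already set up in Section \ref{achievability}, so the remainder of the argument is routine.
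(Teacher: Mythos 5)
Your proposal is correct and follows essentially the same route as the paper: restricting to the memoryless product strategy $\prod_i Q(a_i)Q(x_i|z_{i-1})$, using the induced Markov structure (the paper's chain $(Y_i,S_i)-S_{i-1}-(Y^{i-1},S_0^{i-2})$ together with the no-ISI factorization) to collapse each term of the directed information to $I(X_i;Y_i|S_{i-1})$, and invoking stationarity to identify each term with the single-letter $I(X;Y|S)$. The only difference is presentational — you fix one feasible single-letter law and take a supremum at the end, while the paper argues over the whole class of product distributions with a max-of-sums inequality — but the substance, including the crucial index alignment $Z_{i-1}=S_{i-1}$ with the state governing $P(y_i|x_i,s_{i-1})$, is the same.
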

\begin{figure}[htbp]
\begin{center}
\scalebox{0.65}{\input{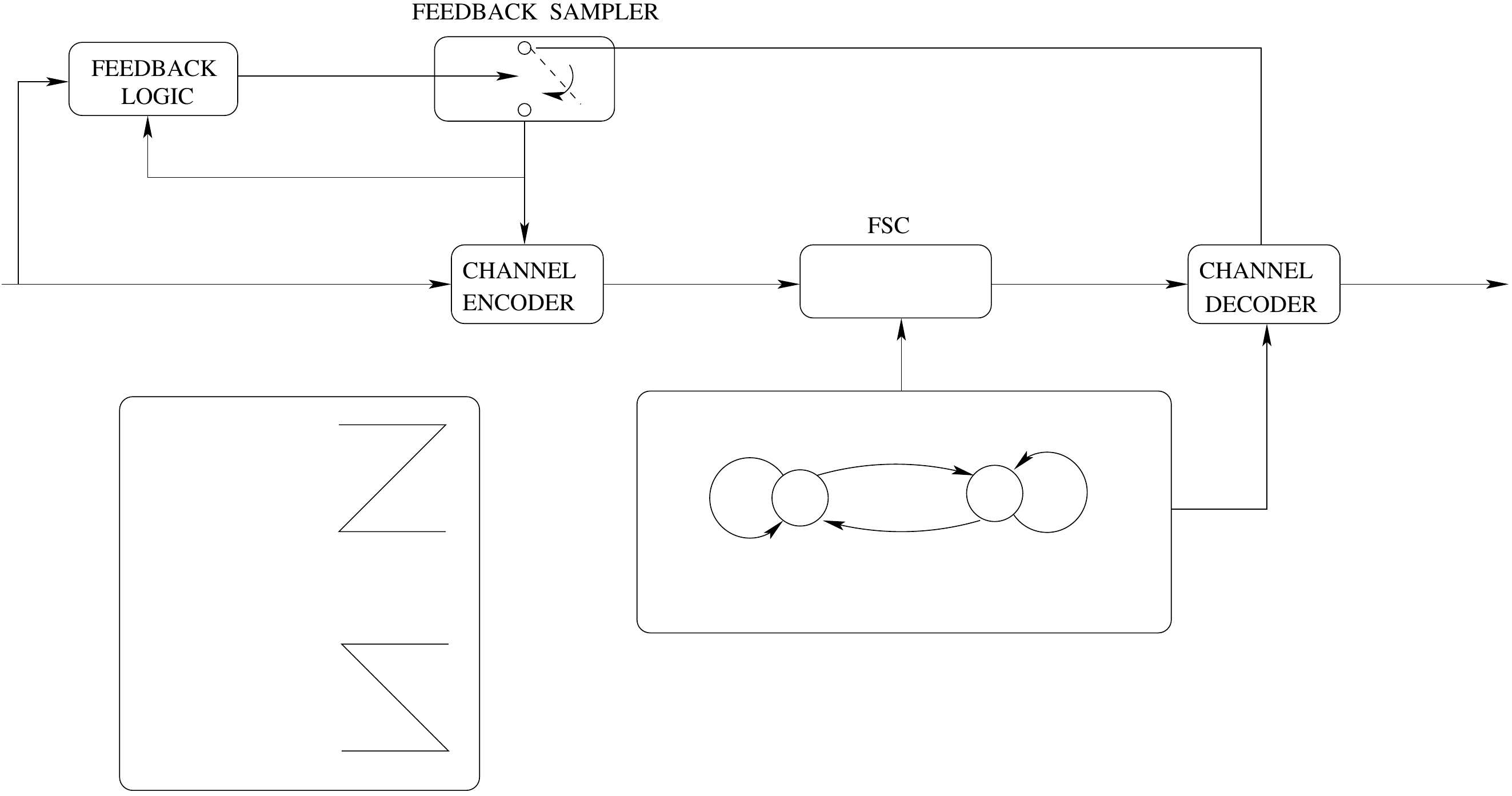_t}}
\caption{\textbf{To feed or not to feed back} when encoder takes
actions and decoder knows the state. States are stationary and evolve as a
markov process.}
\label{markovian}
\end{center}
\end{figure}
\begin{proof}
The joint distribution in maximization in Eq. (\ref{noisicap}) is
\bea
P(s_0^N,a^N,z^N,x^N,y^N)=P(s_0)\prod_{i=1}^N 
Q(x_i,a_i|x^{i-1},a^{i-1},z^{i-1})P(y_i|x_i,s_{i-1})P(s_i|s_{i-1})\1_{\{
z_i=f(a_i ,
s_i)\}}.
\eea
To derive the lower bound we consider the following special type of above
distribution,
\bea
P'(s_0^N,a^N,z^N,x^N,y^N)=P(s_0)\prod_{i=1}^N Q(a_i)
Q(x_i|z_{i-1})P(y_i|x_i,s_{i-1})P(s_i|s_{i-1})\1_{\{z_i=f(a_i,s_i)\}}.
\eea
Note that right hand side of the above distribution can be factorized as,
\bea
P'(s_0^N,a^N,z^N,x^N,y^N)=\Phi_i(a_i,a_{i-1},s_{i-1},z_{i-1},s_i,z_i,x_i,y_i)
\Phi_{n\backslash i}(a^{i-2},a_{i+1}^N,s_0^{n\backslash
i},z^{i-2},z_{i+1}^N,x^{n\backslash i},y^{n\backslash i}),
\eea
which proves the markov chain,
\bea
(Y_i,S_i)-S_{i-1}-(Y^{i-1},S_0^{i-2}).\label{MC0}
\eea
Hence we have,
\bea
C^N_{enc}&\ge&\max\frac{1}{N}I(X^N\rightarrow (Y^N,S^N))\\
&=&\max\frac{1}{N}\sum_{i=1}^NI(X^i;Y_i,S_i|Y^{i-1},S^{i-1})\\
&=&\max\frac{1}{N}\sum_{i=1}^N H(Y_i,S_i|S_{i-1},S^{i-2},Y^{i-1})
-H(Y_i,S_i|X_i,S_{i-1},X_{i-1},S^{i-2},Y^{i-1})\\
&\stackrel{(a)}{=}&\max\frac{1}{N}\sum_{i=1}^N H(Y_i,S_i|S_{i-1})
-H(Y_i,S_i|X_i,S_{i-1})\\
&\stackrel{}{=}&\max\frac{1}{N}
\sum_{i=1}^N I(X_i;Y_i,S_i|S_{i-1})\\
&\stackrel{(b)}{=}&\max\frac{1}{N}\sum_{i=1}^N
I(X_i;Y_i|S_{i-1}),\label{num1}\\
\eea
where 

\begin{itemize}
 \item (a) follows from Markov Chain (\ref{MC0}) and from the channel model 
assumption [Eq. \ref{channelmodel}].
 \item (b) follows from the chain rule 
\bea
I(X_i;Y_i,S_i|S_{i-1})=I(X_i;Y_i|S_{i-1})+I(X_i,S_i|S_{i-1},Y_i),
\eea
and from the fact that $I(X_i,S_i|S_{i-1},Y_i)=0$, due the following state
evolution,
\bea
P(S_i|S_{i-1})=P(S_i|S_{i-1},X_i,Y_i),
\eea
which follows from the assumption on FSC for example in Fig. \ref{markovian} as,
\bea
P(Y_i,S_i|X_i,S_{i-1})=P(Y_i|X_i,S_{i-1})P(S_i|S_{i-1}).
\eea
\end{itemize}
The maximum in the above inequalities is taken over set of the distributions, 
\bea
\mathcal{S}_1=\{\prod_{i=1}^NQ(a_i)Q(x_i|z_{i-1}):\E[\Lambda(A^N)]\le\Gamma\}.
\eea
Clearly
\bea
\mathcal{S}_2=\{\prod_{i=1}^NQ(a_i)Q(x_i|z_{i-1}),\E[\Lambda(A_i)]\le\Gamma,
\mbox{ }\forall i\}\subseteq\mathcal{S}_1.
\eea
Now since the channel is stationary, it is invariant in time shift, hence
$P(s_i)=\pi(s_i)$, $\forall i$. 
Therefore we have the lower bound, 
\bea
C_{enc,lower}(\Gamma)&=&\lim_{N\rightarrow\infty}\frac{1}{N}
\max_{\prod_{i=1}^NQ(a_i)Q(x_i|z_{i-1}):\E[\Lambda(A_i)]\le\Gamma}\sum_{i=1}^N
I(X_i;Y_i|S_{i-1})\label{cap0}\\
&\stackrel{(c)}{\le}&\lim_{N\rightarrow\infty}\frac{1}{N}
\sum_{i=1}^N \max_{\prod_{i=1}^NQ(a_i)Q(x_i|z_{i-1}):\E[\Lambda(A_i)]\le\Gamma}
I(X_i;Y_i|S_{i-1})\\
&\stackrel{(d)}{=}&\lim_{N\rightarrow\infty}\frac{1}{N}
\sum_{i=1}^N \max_{P(a_{i-1},s_{i-1},z_{i-1},x_i,y_i)}I(X_i,Y_i|S_{i-1}),
\eea
where
\begin{itemize}
 \item (c) follows from the identity, $\max_a [f(x)+g(x)]\le \max_a f(x)+\max_a
g(x)$.
 \item (d) has,
\bea
P(a_{i-1},s_{i-1},z_{i-1},x_i,y_i)=\pi(s_{i-1})Q(a_{i-1})\1_ {\{ z_ { i-1 }
=f(a_{i-1}
,
s_{i-1})\}}Q(x_i|z_{i-1}
)P(y_i|x_i,s_{i-1}),
\eea
such that $\E[\Lambda(A_{i-1})]\le\Gamma$. $a_0$ is assumed to be a constant.
\end{itemize}
The inequality (c) holds with equality iff,
\bea
P(a_{i-1},s_{i-1},z_{i-1},x_i,y_i)=\argmax_{P(a,s,z,x,y).\E[\Lambda(A)]\le\Gamma
}I(X;Y|S)\mbox{ }\forall i,
\eea
where 
\bea
P(a,s,z,x,y)=\pi(s)Q(a)\1_{\{z=f(a,s)\}}Q(x|z)P(y|x,s).
\eea
Note that for our setting, $P(x|z,a)=P(x|z)$ as knowing $z$ determines
$a$. Therefore, we have
\bea
C_{enc,lower}(\Gamma)=\max I(X;Y|S).
\eea
with maximization over the joint distribution,
\bea
P_{S,A,Z,X,Y}(s,a,z,x,y)=\pi_S(s)
P_A(a)\1_{\{z=f(a,s)\}}P_{X|Z,A}(x|z,a)P_{Y|X,S}(y|x,s),
\eea
such that  $\E[\Lambda(A)]\le\Gamma$.
\end{proof}

\begin{note}
Note that lower bound on capacity at zero cost is,
\bea
C_{enc,lower}(\Gamma=0)=\max_{\pi_SP_XP_{Y|X,S}}I(X;Y|S).
\eea
This is indeed also the capacity at zero cost as derived below,
Clearly 
\bea \label{lowerbound}C_{enc}(\Gamma=0)\ge
C_{enc,lower}(\Gamma=0).
\eea 
Now
\bea
C^N_{enc}(\Gamma=0)&\stackrel{(a)}{=}&\frac{1}{N}\max_{Q(x^N)} I(X^N;Y^N,S^N)\\
&\stackrel{(b)}{=}&\frac{1}{N}\max_{Q(x^N)} I(X^N;Y^N|S^N)\\
&=&\frac{1}{N}\max \sum_{i=1}^N H(Y_i|S_{i-1},S^{n\backslash
(i-1)},Y^{i-1})-H(Y_i|X_i,S_{i-1},S^{n\backslash(i-1)},
X^{n\backslash i},Y^{i-1})\\
& \stackrel{(c)}{\le}&\frac{1}{N} \max \sum_{i=1}^N
H(Y_i|S_{i-1})-H(Y_i|X_i,S_{i-1})\\
&\stackrel{(d)}{=}&\frac{1}{N} \sum_{i=1}^N \max I(X_i;Y_i|S_{i-1})\\
&\stackrel{(e)}{=}&\max_{\pi_SP_XP_{Y|X,S}}I(X;Y|S)\label{upperbound},
\eea
where
\begin{itemize}
 \item (a) follows from the fact that $A_i=0$ for all $1\le i\le N$ and hence
since there is no feedback mutual
 information is equal to directed information (\cite{Massey}).
 \item (b) follows from the fact that $X^N$ and $S^N$ are independent.
  \item (c) follows from the fact that conditioning reduces entropy and from
the channel model 
assumption for the FSC,
i.e. Eq. (\ref{channelmodel}).
\item (d) follows from the identity, $\max_a [f(x)+g(x)]\le \max_a f(x)+\max_a
g(x)$.
\item (e) follows from the fact that maximization in (e) is on the joint,
\bea
P(x^N,s_0^N,y^N)=P(x^N)P(s_0)\prod_{i=1}^{N}P(y_i,s_i|x_i,s_{i-1}).
\eea
Hence the joint on $(X_i,Y_i,S_{i-1})$ is equivalent to,
\bea
P(X_i,Y_i,S_{i-1})=P(S_{i-1})P(X_i)P(Y_i|X_i,S_{i-1})=\pi(S_{i-1}
)P(X_i)P(Y_i|X_i,S_{i-1}).
\eea
\end{itemize}
Hence combining Eq. (\ref{lowerbound}) and (\ref{upperbound}) we establish
equality,
\end{note}

\begin{note}
 Just like capacity at zero cost, we can also show that lower bound on capacity
at unit cost is indeed tight
 too with similar steps as above. Hence we have,
\bea\label{cap00}
C_{enc}(\Gamma=1)=\max_{\pi_SP_{X|S}P_{Y|X,S}}I(X;Y|S).
\eea
Note that this scenario of complete feedback from decoder with state information
is similar to the 
scenario where encoder and decoder know the states. The capacity result for such
a communication system 
was characterized in \cite{CaireShamai} for channels with memory and indeed it
coincides with Eq. (\ref{cap00}).
\end{note}

\begin{note}
 It is interesting to observe that the lower bound on capacity is the
\textit{Probing Capacity} 
of the system considered 
in \cite{HimanshuHaimTsachyProbing}(as depicted in Fig. \ref{probing}), where
\begin{itemize}
 \item Channel is memoryless with state distribution that is i.i.d. as the
stationary
distribution $\pi_S$ of the channel with
memory considered here.
 \item Encoder takes message dependent actions that are binary and decide 
\textbf{to 
observe or not to observe} channel state.
 \item Decoder has complete state information.
\end{itemize}
\begin{figure}[htbp]
\begin{center}
\scalebox{0.7}{\input{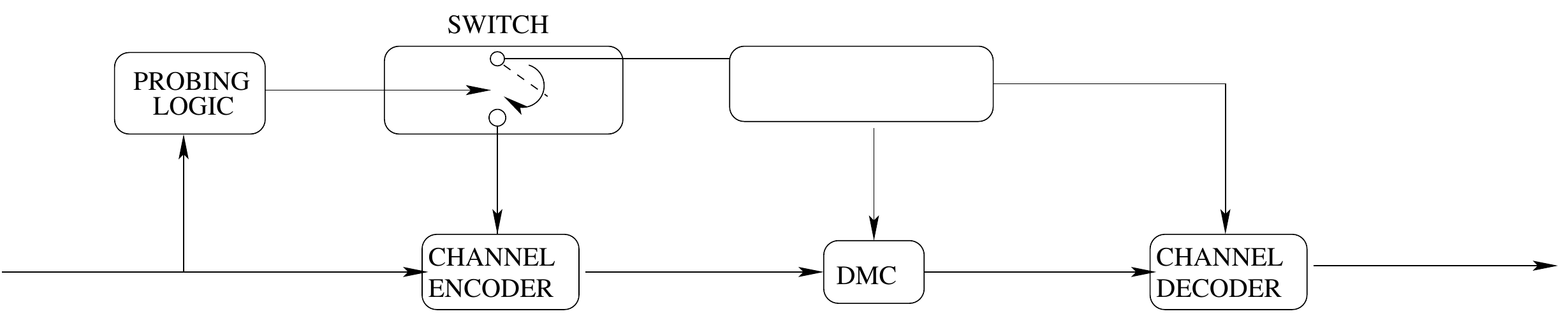_t}}
\caption{The \textbf{Probing Capacity} setting in
\cite{HimanshuHaimTsachyProbing} where encoder takes
message dependent actions to observe state, encodes
using partial state information non-causally
while decoder knows the complete channel state. Note that here the state is
i.i.d.}
\label{probing}
\end{center}
\end{figure}
\end{note}

\subsection{Decoder Actions}
\begin{theorem}
\label{theorem13}
 Consider again the system in Fig. \ref{markovian} but now with decoder feedback
logic (instead of encoder feedback logic) with decoder taking actions causally
dependent on the channel output and state (for the sake of simplicity of
notation here we denote this capcity by $C_{dec}(\Gamma)$ instead of
$C_{dec,causal}(\Gamma)$). The capacity
of
such 
a system is lower bounded as,
\bea
 C_{dec}(\Gamma)\ge C_{dec,lower}(\Gamma)=\max I(X;Y|S),
\eea
where maximization is over joint probability distribution, 
\bea
P_{S,A,Z,X,Y}(s,a,z,x,y)=\pi_S(s)
P_{A|S}(a|s)\1_{\{z=f(a,s)\}}P_{X|Z,A}(x|z,a)P_{Y|X,S}(y|x,s),
\eea
such that $\E[\Lambda(A)]\le\Gamma$.
\end{theorem}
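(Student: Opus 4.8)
The plan is to mirror the proof of Theorem \ref{theorem11}, its encoder-action counterpart, the only structural change being that the decoder is now permitted to let its action depend on the current state. Applying Theorem \ref{theorem12} together with Note \ref{note2} to the channel of Fig. \ref{markovian}, in which the decoder observes the effective output $(Y_i,S_i)$ and the feedback sampler acts on it, the capacity admits the characterization
\bea
C_{dec}(\Gamma)=\lim_{N\rightarrow\infty}\frac{1}{N}\max I(X^N\rightarrow (Y^N,S^N)),
\eea
paralleling Eq. (\ref{noisicap}), where the maximization now runs over causally conditioned laws in which $A_i$ may depend on $(Y^i,S^i)$. First I would lower bound this by restricting the maximization to the subclass
\bea
P'(s_0^N,a^N,z^N,x^N,y^N)=P(s_0)\prod_{i=1}^N Q(a_i|s_i)Q(x_i|z_{i-1})P(y_i|x_i,s_{i-1})P(s_i|s_{i-1})\1_{\{z_i=f(a_i,s_i)\}}.
\eea
The sole departure from Theorem \ref{theorem11} is the factor $Q(a_i|s_i)$ in place of $Q(a_i)$: since the decoder sees the state causally, making the action depend on the present state is admissible, and this is exactly what produces $P_{A|S}(a|s)$ rather than $P_A(a)$ in the final expression.

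Next I would re-establish the Markov chain
\bea
(Y_i,S_i)-S_{i-1}-(Y^{i-1},S_0^{i-2}),
\eea
via the same factorization of the joint into an $i$-block and a complementary block used for (\ref{MC0}). The point to verify is that, conditioned on $S_{i-1}=s_{i-1}$, the block $(a_{i-1},z_{i-1},x_i)$ is governed by a law determined by $s_{i-1}$ alone: indeed $a_{i-1}\sim Q(\cdot|s_{i-1})$, then $z_{i-1}=f(a_{i-1},s_{i-1})$ is fixed, $x_i\sim Q(\cdot|z_{i-1})$, and finally $S_i\sim P(\cdot|s_{i-1})$, $Y_i\sim P(\cdot|x_i,s_{i-1})$, so the conditional law of $(Y_i,S_i)$ given $S_{i-1}$ and the further past collapses to its law given $S_{i-1}$.

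With this chain and the channel model of Fig. \ref{markovian} I would expand
\bea
I(X^N\rightarrow (Y^N,S^N))=\sum_{i=1}^N I(X^i;Y_i,S_i|Y^{i-1},S^{i-1}),
\eea
reduce the conditioning in both entropy terms to $S_{i-1}$ to get $\sum_{i}I(X_i;Y_i,S_i|S_{i-1})$, and then use the no-ISI chain-rule step $I(X_i;S_i|S_{i-1},Y_i)=0$ to arrive at $\sum_{i}I(X_i;Y_i|S_{i-1})$, exactly as in steps (a)--(b) of Theorem \ref{theorem11}. I would then single-letterize: pass to the smaller feasible set with per-symbol constraints $\E[\Lambda(A_i)]\le\Gamma$, invoke stationarity $P(s_{i-1})=\pi(s_{i-1})$, and apply $\max_a[f+g]\le\max_a f+\max_a g$ to pull the maximization inside the sum, equality being attained by a time-invariant optimizer. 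This yields $C_{dec,lower}(\Gamma)=\max I(X;Y|S)$ over
\bea
P_{S,A,Z,X,Y}(s,a,z,x,y)=\pi_S(s)P_{A|S}(a|s)\1_{\{z=f(a,s)\}}P_{X|Z,A}(x|z,a)P_{Y|X,S}(y|x,s),
\eea
with $\E[\Lambda(A)]\le\Gamma$, where $P(x|z,a)=P(x|z)$ because $z$ determines $a$.

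The hard part will be the re-verification of the Markov chain (\ref{MC0}) in this setting: one must confirm that allowing $A_{i-1}$ to depend on $S_{i-1}$ does not open a back-door dependence of $X_i$, and hence of $(Y_i,S_i)$, on $(Y^{i-1},S_0^{i-2})$ once $S_{i-1}$ is fixed. Because the restricted action law $Q(a_{i-1}|s_{i-1})$ is a function of the current state only, this dependence is sealed off by $S_{i-1}$; once the factorization is in hand, the remaining calculus is identical to that of Theorem \ref{theorem11}.
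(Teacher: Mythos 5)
Your proposal is correct and follows essentially the same route as the paper: the paper's proof of Theorem \ref{theorem13} simply restricts the maximization to the set $\mathcal{S}=\{\prod_{i=1}^{N}Q(a_i|s_{i})Q(x_{i}|z_{i-1})\}$ with per-symbol cost constraints and states that all other steps carry over from Theorem \ref{theorem11}. Your additional explicit re-verification that the Markov chain $(Y_i,S_i)-S_{i-1}-(Y^{i-1},S_0^{i-2})$ survives the replacement of $Q(a_i)$ by $Q(a_i|s_i)$ is exactly the point the paper leaves implicit, and your argument for it is sound.
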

\begin{proof}
\textit{Note} the only difference in this lower bound as compared to
$C_{enc,lower}$ is that in that 
there is $P_{A|S}$ instead of $P_A$ in the distribution of maximization. The
proof is similar 
to proof of Theorem \ref{theorem11}, except the set of maximizing
distributions taken here is,
\bea
\mathcal{S}=\{\prod_{i=1}^{N}Q(a_i|s_{i})Q(x_{i}|z_{i-1}), \E[\Lambda(A_i)]\le
\Gamma\}.
\eea
Therefore,
\bea
C_{dec}(\Gamma)\ge
C_{dec,lower}(\Gamma)=\lim_{N\rightarrow\infty}\frac{1}{N}\max_{\mathcal{S}}
I(X^N\rightarrow (Y^N,S^N)).
\eea
All the other steps follow as in proof of Theorem \ref{theorem11}.
\end{proof}

We evaluate the lower bounds for the example in Fig. \ref{markovian},
when $\alpha=\beta=\epsilon=\delta=0.5$.
The region is shown in Fig. \ref{plot}. From the plot it is clear we can do
much better than time sharing between capacity at zero and unit cost when
either encoder or decoder takes actions.

\begin{figure}[htbp]
\begin{center}
\includegraphics[scale=0.35]{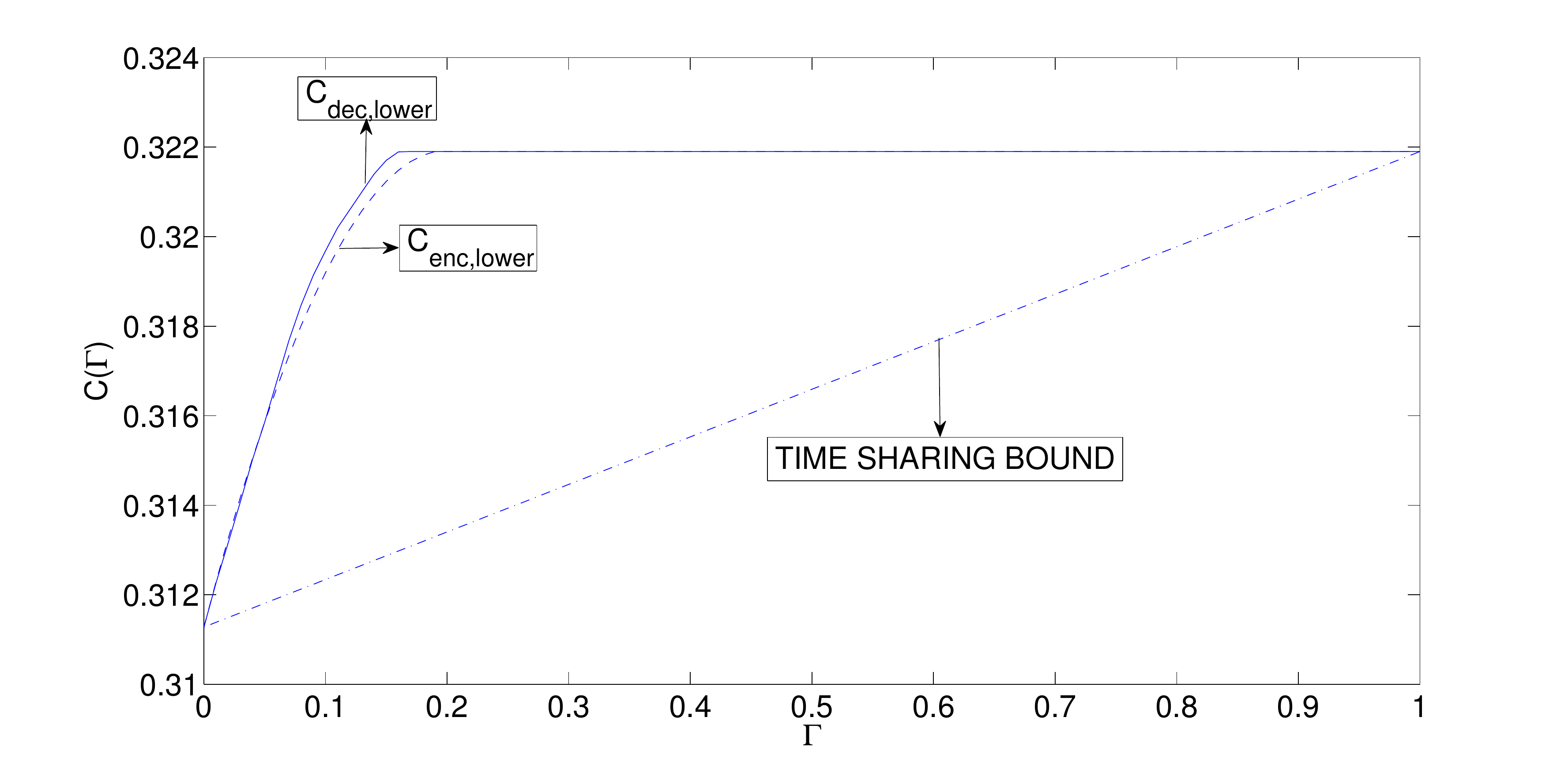}
\caption{Cost-capacity trade off for example in Fig. \ref{markovian}.
$C_{enc,lower}$ is the lower bound on capacity with encoder feedback
logic. If instead of encoder decoder decides (causally dependent on channel
output and state) when encoder will sample
feedback, $C_{dec,lower}$ is a lower bound on the capacity. The straight
time represents time sharing scheme which is strictly
sub-optimal.}
\label{plot}
\end{center}
\end{figure}

\section{Numerical Example 2 : Coding on the Backward Link in FSC}
\label{example2}
Consider the setting depicted in Fig. \ref{activefeedback}. We allow
\textit{coding on the backward link}, i.e., decoder encodes the channel
outputs causally ($A_i(Y^{i})\in\mathcal{A}$) and sends it to
the encoder. The encoder uses the acquired \textit{active feedback} symbols to
generate channel input symbols, i.e., $X_i(M,A^{i-1})$. For
stationary indecomposable FSCs with active feedback we denote the
capacity by $C_{AF}$.
The setting in Fig. \ref{activefeedback} is a very special case of
the framework of decoder feedback logic considered in Note 
\ref{note2} at the end of Section \ref{decoderaction}, when we let,
$f(A_{d,i},Y_i)=A_{d,i}=A_i$. Hence using the above conditions in
the capacity expression mentioned in Note \ref{note2} at the end of Theorem
\ref{theorem13} we have,
\bea
C_{AF}=\lim_{N\rightarrow\infty}\frac{1}{N}\max I(X^N\rightarrow Y^N),
\eea
where maximization is over joint distribution,
\bea
P(x^N,a^N,\phi_d^N,y^N)=Q(x^N\parallel
a^{N-1})Q(\phi_d^N\parallel y^{N-1})P(y^N\parallel
x^N)\prod_{i=1}^{N}\1_{\{a_{i}=\phi_{d,i}|_{y_i}\}},
\eea
such that $\E[\Lambda(a^N)]=\E[\Lambda(\Phi_d^N|_{Y^N})]\le\Gamma$, where
$\Phi_d^N|Y^N,\phi_d^N$ are as defined in Section \ref{causaldecoding}.
\par
Now consider an example under this setting where the channel evolution is
markovian with
binary states, i.e. 
\bea
P(Y_i,S_i|X_i,S_{i-1})=P(Y_i|X_i,S_{i-1})P(S_i|S_{i-1}),
\eea 
and states are known to the decoder on the fly. Hence,
decoder performs coding on backward link as,
$A_i=A_i(Y^{i},S^{i})\in\mathcal{A}$. The markov chain is assumed to be
stationary with distribution $\pi_S$ and states take values in a finite alphabet
$\mathcal{S}$. We consider following special cases when
$\card{\mathcal{A}}\ge\card{\mathcal{S}}$ : 
\begin{figure}[htbp]
\begin{center}
\scalebox{0.65}{\input{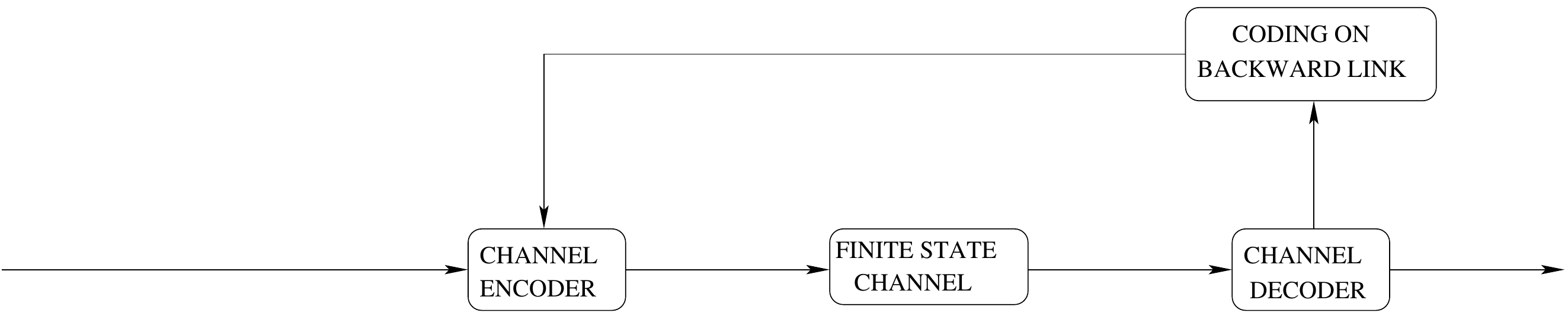_t}}
\caption{Modeling \textbf{coding on backward link} in finite state channels
(FSCs).}
\label{activefeedback}
\end{center}
\end{figure}
\subsection{No Cost Constraints}
\begin{theorem}
 \label{theorem15}
Under this setting, the capacity is given by,
\bea
C_{AF}=\max_{\pi_SP_{X|S}P_{Y|X,S}}I(X;Y|S).
\eea
\end{theorem}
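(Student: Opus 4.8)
The plan is to read Theorem \ref{theorem15} off as the backward-link specialization in which the decoder's effective observation at time $i$ is the pair $(Y_i,S_i)$, so the multi-letter characterization given just above reads $C_{AF}=\lim_{N\rightarrow\infty}\frac{1}{N}\max I(X^N\rightarrow (Y^N,S^N))$, and then to sandwich this limit by the single-letter quantity $\max_{\pi_S P_{X|S}P_{Y|X,S}}I(X;Y|S)$. Because there are no cost constraints and $\card{\mathcal{A}}\ge\card{\mathcal{S}}$, the decoder can relay the state losslessly, so the lower bound follows by evaluating the multi-letter max at the deterministic choice $A_i=S_i$. Under this choice the fed-back symbol is $Z_i=f(A_i,Y_i)=A_i=S_i$, so the encoder forms $X_i(M,Z^{i-1})=X_i(M,S^{i-1})$ and in particular has causal access to $S_{i-1}$, the very state governing $P(Y_i|X_i,S_{i-1})$. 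Generating $X_i$ from the maximizing conditional $P_{X|S}^{\ast}(\cdot\mid s_{i-1})$ (so that $X_i$ is memoryless given $S_{i-1}$) and running the equalities (a)--(b) from the proof of Theorem \ref{theorem11} gives, term by term, $\frac{1}{N}I(X^N\rightarrow (Y^N,S^N))=\frac{1}{N}\sum_{i=1}^N I(X_i;Y_i|S_{i-1})$. By stationarity of the state chain ($S_{i-1}\sim\pi_S$ for every $i$, since the state process is independent of the input) this equals $I(X;Y|S)$, yielding $C_{AF}\ge\max_{\pi_S P_{X|S}P_{Y|X,S}}I(X;Y|S)$.

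For the matching upper bound I would bound the same object for an \emph{arbitrary} admissible distribution. Writing $I(X^N\rightarrow (Y^N,S^N))=\sum_{i=1}^N I(X^i;Y_i,S_i|Y^{i-1},S^{i-1})$ and expanding each term into a difference of conditional entropies, I would first invoke the channel-model assumption (\ref{channelmodel}) to collapse $H(Y_i,S_i|X^i,Y^{i-1},S^{i-1})=H(Y_i,S_i|X_i,S_{i-1})$, and then use ``conditioning reduces entropy'' on the remaining term (legitimate because $S_{i-1}$ is a component of $S^{i-1}$) to get $H(Y_i,S_i|Y^{i-1},S^{i-1})\le H(Y_i,S_i|S_{i-1})$. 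This bounds each term by $I(X_i;Y_i,S_i|S_{i-1})$, which the chain rule together with $I(X_i;S_i|S_{i-1},Y_i)=0$ --- true because $P(S_i|S_{i-1})=P(S_i|S_{i-1},X_i,Y_i)$ by the no-ISI Markov structure, exactly as in step (b) of Theorem \ref{theorem11} --- reduces to $I(X_i;Y_i|S_{i-1})$. Since $S_{i-1}\sim\pi_S$ for all admissible distributions and the value of $I(X_i;Y_i|S_{i-1})$ depends only on the induced $P(X_i|S_{i-1})$ and the fixed channel, each summand is at most $\max_{P_{X|S}}I(X;Y|S)$, so $\frac{1}{N}I(X^N\rightarrow (Y^N,S^N))\le\max_{\pi_S P_{X|S}P_{Y|X,S}}I(X;Y|S)$, and this survives the limit.

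Combining the two directions establishes $C_{AF}=\max_{\pi_S P_{X|S}P_{Y|X,S}}I(X;Y|S)$. I expect the one point needing genuine care to be the causality bookkeeping in the achievability: one must confirm that relaying $A_i=S_i$ actually delivers $S_{i-1}$ to the encoder \emph{before} $X_i$ is transmitted, so that the scheme realizes the full-state-information input law $P_{X|S}$ and thereby lands in the ``state known causally at the encoder and completely at the decoder'' regime, whose capacity $\max_{\pi_S P_{X|S}P_{Y|X,S}}I(X;Y|S)$ is consistent with the unit-cost observation in the Note following Theorem \ref{theorem11} (Eq. (\ref{cap00})) and with \cite{CaireShamai}. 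Everything else is a routine reuse of the directed-information identities already assembled for Theorem \ref{theorem11} and of the general achievability and converse machinery of Section \ref{achievability} and Section \ref{converse}.
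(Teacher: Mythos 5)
Your proposal is correct and follows essentially the same route as the paper: achievability by setting $A_i=S_i$ so the encoder operates with causal state knowledge (reducing to the unit-cost case $\max_{\pi_S P_{X|S}P_{Y|X,S}}I(X;Y|S)$ noted after Theorem \ref{theorem11}), and a converse that bounds each term of $I(X^N\rightarrow(Y^N,S^N))$ by $I(X_i;Y_i|S_{i-1})$ via the channel-model collapse, conditioning-reduces-entropy, the no-ISI identity $I(X_i;S_i|S_{i-1},Y_i)=0$, and stationarity of $\pi_S$. The causality bookkeeping you flag is handled implicitly in the paper by appealing to the full-feedback/state-known-causally regime, so no new ideas are needed.
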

\begin{proof}
Since the decoder knows the states, the effective output is the tuple
$Y_{FB,i}=(Y_i,S_i)$.
Achievability is
straightforward. Actions basically communicate the state,
$A_i=S_{i}$. Hence in this case, 
the setup is same
as
encoder and decoder knowing states, and by the notes at the end of Section
\ref{encoderaction}, we have,
\bea\label{eq166}
C_{AF}\ge\max_{\pi_SP_{X|S}P_{Y|X,S}}I(X;Y|S).
\eea
Now consider for the converse,
\bea
C_{AF}&=&\lim_{N\rightarrow\infty}\max \frac{1}{N}I(X^N\rightarrow (Y^N,S^N))\\
&=&\lim_{N\rightarrow\infty}\max \frac{1}{N} \sum_{i=1}^N
H(Y_i,S_i|Y^{i-1},S^{i-1})-H(Y_i,S_i|X^i,Y^{i-1},S^{i-1})\\
&\stackrel{}{\le}&\lim_{N\rightarrow\infty}\max \frac{1}{N} \sum_{i=1}^N
H(Y_i,S_i|S_{i-1})-H(Y_i,S_i|X_i,S_{i-1})\\
&\stackrel{}{=}&\lim_{N\rightarrow\infty}\max \frac{1}{N} \sum_{i=1}^N
I(X_i;Y_i,S_i|S_{i-1})\\
&\stackrel{(a)}{=}&\lim_{N\rightarrow\infty}\max \frac{1}{N} \sum_{i=1}^N
I(X_i;Y_i|S_{i-1})\\
&\stackrel{(b)}{\le}&\lim_{N\rightarrow\infty}  \frac{1}{N}\sum_{i=1}^N
\max I(X_i;Y_i|S_{i-1}),
\eea
where (a) follows from the similar arguments used for Eq. (\ref{num1}) while
(b) follows from the identity $\max_a [f(x)+g(x)]\le \max_a f(x)+\max_a
g(x)$. Note that maximization in (b) is over the joint probability distribution,
\bea
P(s_0,x^N,a^N,\phi_d^N,y^N)=Q(x^N\parallel
a^{N-1})Q(\phi_d^N\parallel y^{N-1})P(y^N\parallel
x^N,s_0)\prod_{i=1}^{N}\1_{\{a_{i}=\phi_{d,i}|_{y_i}\}},
\eea
but one can average out, $(a^N,\phi_d^N,s^{n\backslash {i-1}},x^{n\backslash
i},y^{n\backslash i})$ so that for the $i^{th}$ term, maximization is over the
joint probability distribution,
\bea
P(s_{i-1},x_i,y_i)=\pi_S(s_{i-1}) P(x_i|s_{i-1})P(y_i|x_i,s_{i-1}),
\eea
since states are stationary and distributed as $\pi_S$.
Hence we have,
\bea\label{eq175}
C_{AF}&\le&\max_{\pi_SP_{X|S}P_{Y|X,S}}I(X;Y|S).
\eea
Proof is completed using Eq. (\ref{eq166}) and (\ref{eq175}).
\end{proof}
\subsection{Cost constraint $\Gamma$} The condition of this subsection
differs from those of the previous in cost constraints. The intuition here is to
look for an achievability scheme which decides when to send or not send state
information from decoder to encoder depending on cost constraints.
\begin{figure}[htbp]
\begin{center}
\scalebox{0.65}{\input{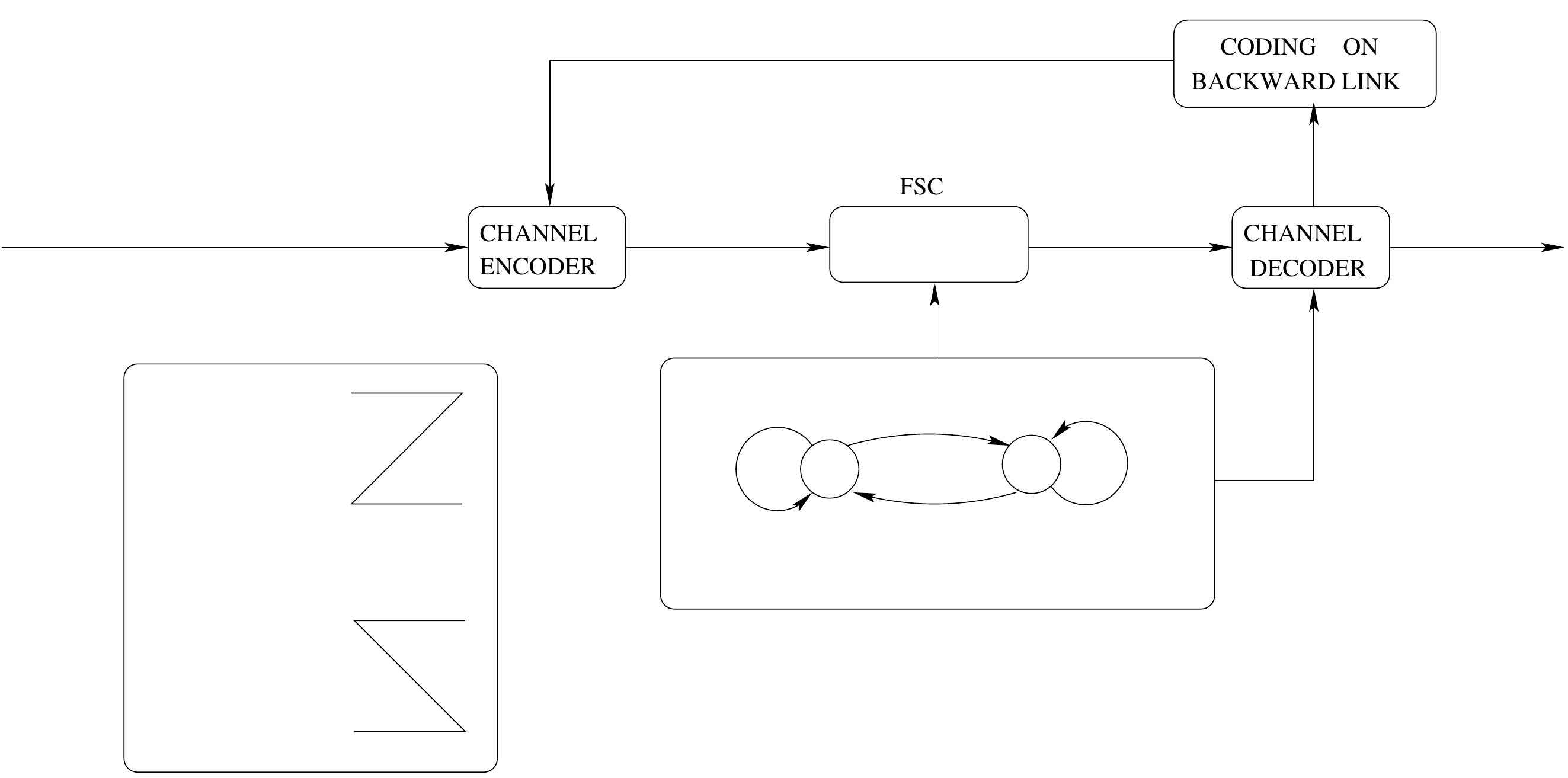_t}}
\caption{Modeling \textbf{coding on the backward link} for the Markovian channel
with
binary states.}
\label{activefeedbackexample}
\end{center}
\end{figure}
\begin{theorem}
For the system in Fig. \ref{activefeedbackexample}, the capacity is lower
bounded as,
\bea
 C_{AF}(\Gamma)\ge C_{AF,lower}(\Gamma)=\max I(X;Y|S),
\eea
where maximization is over joint probability distribution, 
\bea
P_{S,A,X,Y}(s,a,x,y)=\pi_S(s)
P_{A|S}(a|s)P_{X|A}(x|a)P_{Y|X,S}(y|x,s),
\eea
where $\E[\Lambda(A)]\le\Gamma$.
\end{theorem}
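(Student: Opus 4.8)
The plan is to follow the template of the proofs of Theorem \ref{theorem11} and Theorem \ref{theorem13}, specialized to the active-feedback setting of Fig. \ref{activefeedbackexample}. Since the decoder observes the state on the fly, the effective channel output is the pair $(Y_i,S_i)$, and by specializing the active-feedback capacity expression (with $f(A_{d,i},Y_i)=A_{d,i}=A_i$, as in Note \ref{note2}) I would start from
\bea
C_{AF}(\Gamma)=\lim_{N\rightarrow\infty}\frac{1}{N}\max I(X^N\rightarrow (Y^N,S^N)),
\eea
where the maximization runs over all admissible causally conditioned distributions meeting $\E[\Lambda(A^N)]\le\Gamma$. To obtain a lower bound I would restrict this maximization to the product subclass in which the decoder action depends only on the current state and the encoder input depends only on the most recent feedback symbol (which here equals the action, since $z_{i-1}=a_{i-1}$), namely
\bea
P'(s_0^N,a^N,x^N,y^N)=P(s_0)\prod_{i=1}^N Q(a_i|s_i)Q(x_i|a_{i-1})P(y_i|x_i,s_{i-1})P(s_i|s_{i-1}).
\eea
The only structural change relative to Theorem \ref{theorem11} is that $Q(a_i)$ becomes $Q(a_i|s_i)$, reflecting that it is now the state-aware decoder that takes the action; this is exactly the $P_A\rightarrow P_{A|S}$ modification already used in Theorem \ref{theorem13}.

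The next step is to read off from the factorization of $P'$ the Markov chain $(Y_i,S_i)-S_{i-1}-(Y^{i-1},S_0^{i-2})$, exactly as in (\ref{MC0}). With this in hand I would single-letterize the directed information,
\bea
\frac{1}{N}I(X^N\rightarrow (Y^N,S^N))&=&\frac{1}{N}\sum_{i=1}^N I(X^i;Y_i,S_i|Y^{i-1},S^{i-1})\nonumber\\
&=&\frac{1}{N}\sum_{i=1}^N I(X_i;Y_i,S_i|S_{i-1})\nonumber\\
&=&\frac{1}{N}\sum_{i=1}^N I(X_i;Y_i|S_{i-1}),
\eea
where the second equality invokes the Markov chain together with the channel model (\ref{channelmodel}), and the third uses the chain rule together with $I(X_i;S_i|S_{i-1},Y_i)=0$, which holds because $P(s_i|s_{i-1})=P(s_i|s_{i-1},x_i,y_i)$ in the no-ISI Markovian model. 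Both reductions are verbatim the steps labeled (a) and (b) in the proof of Theorem \ref{theorem11}.

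Finally I would pass to the single-letter bound by first tightening the feasible set to the subclass satisfying the per-symbol constraint $\E[\Lambda(A_i)]\le\Gamma$ for every $i$ (a subset of the block-average feasible set, hence still yielding a valid lower bound), then applying the inequality $\max_a[f(x)+g(x)]\le\max_a f(x)+\max_a g(x)$ term by term, and finally invoking stationarity $P(S_{i-1})=\pi_S(S_{i-1})$ to conclude that every term is dominated by the common single-letter quantity $\max I(X;Y|S)$ over the distribution $\pi_S(s)P_{A|S}(a|s)P_{X|A}(x|a)P_{Y|X,S}(y|x,s)$ with $\E[\Lambda(A)]\le\Gamma$; here $P_{X|Z,A}$ collapses to $P_{X|A}$ precisely because $z$ equals $a$. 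I expect the one delicate point to be this cost-constraint bookkeeping: the block-average constraint must be tightened to a per-letter constraint before the maximum can be distributed across the sum, and one must check --- as in Theorem \ref{theorem11} --- that the single-letter maximizer, replicated over all time indices, furnishes a feasible stationary input, so that the restricted product distribution is legitimate and the bound genuinely lower-bounds $C_{AF}(\Gamma)$.
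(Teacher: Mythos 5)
Your proposal is correct and follows essentially the same route as the paper: the paper's proof is exactly a sketch that restricts the maximization to the product subclass $\prod_{i=1}^{N}Q(a_i|s_i)Q(x_{i}|a_{i-1})$ with per-symbol cost constraints and then repeats the steps of Theorem \ref{theorem11} (the Markov chain (\ref{MC0}), the single-letterization, the max-sum inequality, and stationarity). You have simply spelled out in full the steps the paper leaves implicit, including the correct observation that $P_{X|Z,A}$ collapses to $P_{X|A}$ because $z=a$ here.
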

\begin{proof}
We outline only the sketch of the proof as it is similar 
to the proof of Theorem \ref{theorem11}, except the set of maximizing
distributions taken here is,
\bea
\mathcal{S}=\{\prod_{i=1}^{N}Q(a_i|s_i)Q(x_{i}|a_{i-1}), \E[\Lambda(A_i)]\le
\Gamma\}.
\eea
Therefore,
\bea
C_{AF}(\Gamma)\ge
C_{AF,lower}(\Gamma)=\lim_{N\rightarrow\infty}\frac{1}{N}\max_{\mathcal{S}}
I(X^N\rightarrow (Y^N,S^N)).
\eea
All the other steps follow as in proof of Theorem \ref{theorem11}.
\end{proof}
 We consider
an example under this setting as depicted in Fig. \ref{activefeedbackexample}.
We assume $\mathcal{A}$ is a binary alphabet with cost function,
$\Lambda(a)=a$, $a\in\{0,1\}$, hence this models the scenario of
cost constrained \textit{one-bit active feedback} in the given finite state
channel. The plot for $\alpha=\beta=\delta=0.5$ is shown in Fig.
\ref{exampleaf}. Note that this bound is equal to the bound,
$C_{dec,lower}$ in Theorem \ref{theorem13} for $f(a,s)=a$.
\begin{figure}[htbp]
\begin{center}
\includegraphics[scale=0.32]{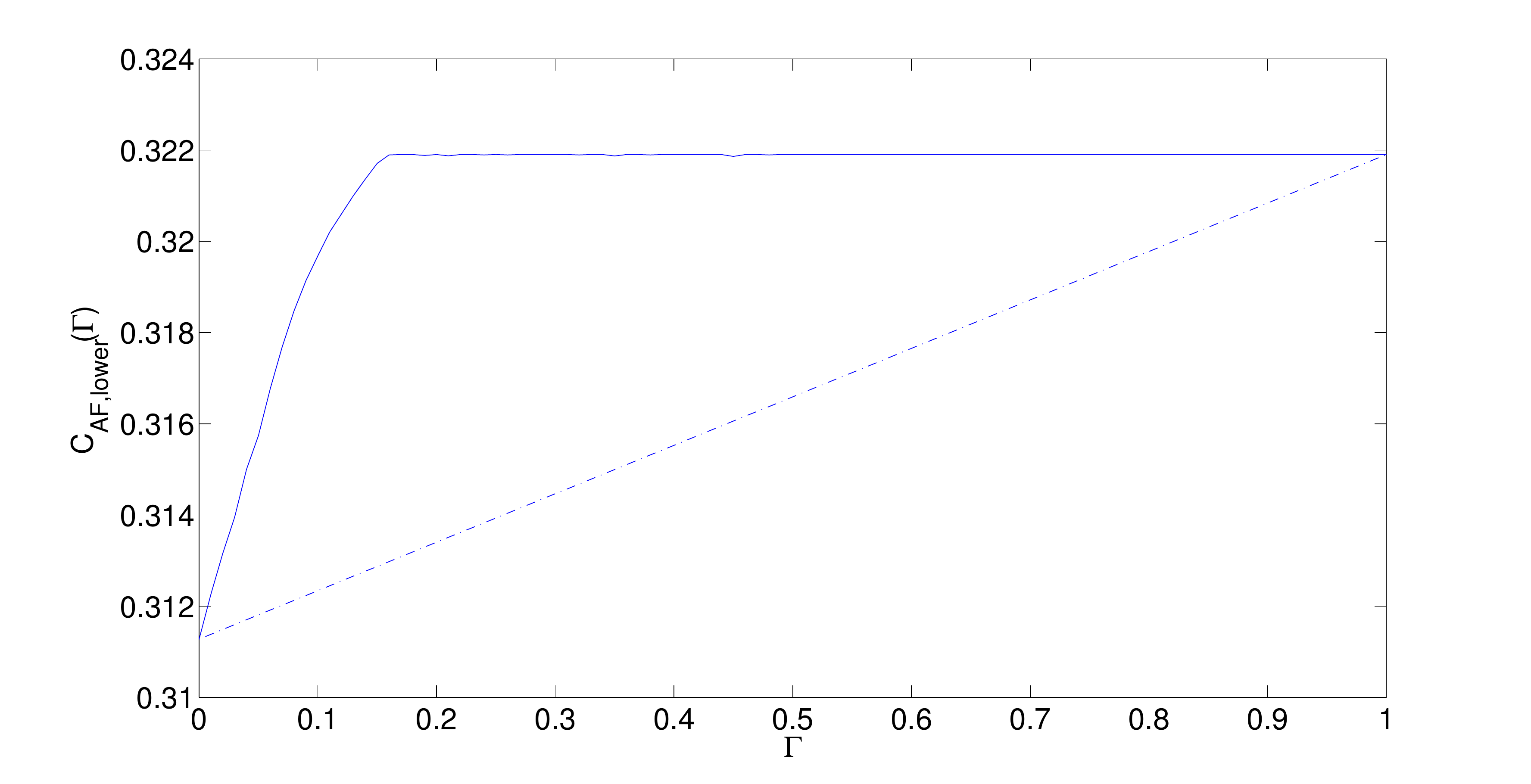}
\caption{Cost-capacity trade off for example in Fig.
\ref{activefeedbackexample}.
$C_{AF,lower}$ is the lower bound on capacity. The dashed straight line represents
a naive time sharing scheme. It is seen that not only is time-sharing suboptimal, but that the feedback capacity can be achieved in full even if observing only a small fraction of the symbols fed back. 
}
\label{exampleaf}
\end{center}
\end{figure}
\section{Blahut Arrimoto Algorithm for Action Dependent Feedback}
\label{algorithm}
In this section we will develop a numerical algorithm, as an extension to Blahut
Arrimoto Algorithm for Action dependent
feedback (BAA-Action), when the encoder takes the action to determine the
quality and availability of the feedback from the decoder. Blahut, \cite{Blahut} and Arimoto, \cite{Arimoto}
suggested an algorithm based on alternating maximization to compute the mutual
information, and this was extended to computing directed information in
\cite{Naiss_Permuter_BAA}. Our approach is similar to the latter, with the
difference being that in our case the aqcuisition of the feedback
is determined by a cost constrained action. Our algorithm also works for the case when 
there is a joint cost constraint on both action and channel input symbols, thereby generalizing the result in \cite{Naiss_Permuter_BAA} to compute directed information with cost constraints on channel input symbols. 
\subsection{Algorithm : BAA-Action} \label{subsec:BAA}
\begin{algorithm}
\caption{BAA-Action : Block length $N$ and the channel $p(y^N\parallel x^N)$ is given. The Lagrangian multiplier $\lambda$ is fixed. }
\begin{algorithmic}
\label{BAA_Action}
\State \textbf{Initialize} $q(x^N,a^N|y^N)$. For eg. initializing with uniform
distribution, i.e., $q(x^N,a^N|y^N)\gets\ \card{\mathcal{X}}^{-N}$.
\For {$i = N \to 1 $}
\State
$r'(x^i,a^i,z^{i-1})\gets \prod_{x_{i+1}^N,a_{i+1}^N,y_i^N}\prod_{\Ascr_{i,z}}
\Bigg { [ }
\frac{q(x^N,a^N|y^N)2^{-N\lambda\Lambda(a^N)}}{\prod_{j=i+1}^Nr(x_j,
a_j|x^{j-1},a^{j-1},z^{j-1})}\Bigg{]}^{\frac{p(y^N\parallel
x^N)\prod_{j=i+1}^{N}r(x_j,a_j\parallel
x^{j-1},a^{j-1},z^{j-1})}{\sum_{A_{i,z}}\prod_{j=1}^{i-1}p(y_j|x^j,y^{j-1})}}
$\newline\newline
\State 
$r(x_i,a_i|x^{i-1},a^{i-1},z^{i-1})\gets\frac{r'(x^i,a^i,z^{i-1})}{\sum_{x_i ,
a_i}
r'(x^i,a^i,z^{i-1})}$\newline
\State where $A_{i,z}=\{y^{i-1}:f(a^{i-1},y^{i-1})=z^{i-1}\}$
\EndFor
\State \textbf{Compute} $r(x^N,a^N\parallel
z^{N-1})\gets \prod_{i=1}^{N}r(x_i,a_i|x^{i-1},a^{i-1},z^{i-1})$
\newline
\State \textbf{Compute} $q(x^N,a^N|y^N)\gets \frac{r(x^N,a^N\parallel
z^{N-1})p(y^N\parallel
x^N)}{\sum_{x^N,a^N}r(x^N,a^N\parallel z^{N-1})p(y^N\parallel x^N)}$
\newline
\State \textbf{Calculate} $I_U-I_L$ where,\newline
\State $I_L\gets \frac{1}{N}\sum_{x^N,a^N,y^N}r(x^N,a^N\parallel
z^{N-1})p(y^N\parallel x^N)\log
\frac{q(x^N,a^N|y^N)}{r(x^N,a^N\parallel z^{N-1})}.$
\State $I_U\gets \frac{1}{N}\max_{x_1,a_1}\sum_{y_1}\max_{x_2,a_2}\cdots\max_{
x_N , a_N } \sum_ { y_N}p(y^N\parallel x^N)\log\frac{p(y^N\parallel
x^N)2^{-N\lambda\Lambda(a^N)}}{\sum_{x^N,a^N}p(y^N\parallel
x^N)r(x^N,a^N\parallel z^{N-1})}.$\newline
\If{$I_U-I_L>\epsilon$} Goto the loop again. 
\EndIf\newline
\State \textbf{Compute} $C_N^{(\lambda)}\gets I_U$.
\newline\State \textbf{Compute} $\Gamma^{(\lambda)}\gets \sum_{x^N,a^N,y^N}r(x^N,a^N\parallel
z^{N-1})p(y^N\parallel x^N)\Lambda(A^N)$.
\end{algorithmic}
\end{algorithm}
We formally state the algorithm which will
be used later to give a series of computable upper and lower bounds. The goal is to maximize the normalized directed information, 
\bea
\frac{1}{N}I(X^N\rightarrow Y^N)=\frac{1}{N}\sum_{x^N,a^N,y^N}p(x^N,a^N\parallel z^{N-1})p(y^n\parallel
x^N)\1_{\{z^N=f(a^N,y^N)\}} \log \frac{p(y^N\parallel x^N)}{p(y^N)}	
\eea
where maximum is over joint probability distributions $p(x^N,a^N\parallel
z^{N-1})p(y^n\parallel x^N)\1_{\{z^N=f(a^N,y^N)\}} $ such that
$\E[\Lambda(A^N)]\le \Gamma$. We will henceforth refrain from explicitly writing 
$\1_{\{z^N=f(a^N,y^N)\}} $ as it is clear from the context how $z^N$ is a
deterministic function of $x^N$ and $a^N$. Let us denote $p(x^N,a^N\parallel
z^{N-1})$ by $r(x^N,a^N\parallel z^{N-1})$. Note further that effectively the
maximization is over $r(\cdot)$ as $p(y^N\parallel x^N)$ is the property of the
channel and $r(\cdot)$ determines whether the cost constraints are satisfied or
not. Note that,
\bea
p(x^N,a^N,y^N)&=&r(x^N,a^N\parallel z^{N-1})p(y^N\parallel x^N)\\
&=&p(y^N)q(x^N,a^N|y^N).
\eea
Thus the directed information can equivalently be written as,
\bea
 I(X^N\rightarrow Y^N)&=&\sum_{x^N,a^N,y^N}r(x^N,a^N\parallel
z^{N-1})p(y^n\parallel x^N)\log \frac{q(x^N,a^N|y^N)}{r(x^N,a^N\parallel
z^{N-1})}\\
 &\stackrel{\triangle}{=}&\mathcal{I}(\mathbf{r},\mathbf{q})\label{cn},
\eea
where the shorthand notations are defined as
$\mathbf{r}\stackrel{\triangle}{=}r(x^N,a^N\parallel z^{N-1})$ and
$\mathbf{q}\stackrel{\triangle}{=}q(x^N,a^N | y^N)$.  
\par
Our algorithm (presented above as Algorithm 1) computes the normalized directed information (Eq. \ref{cn}) by using the Lagrangian approach (outlined in Appendix \ref{BAA-Algorithm}). The Lagrangian multiplier, $\lambda$, corresponds to the tradeoff between cost and the normalized directed information. Hence we define for
$\lambda\ge 0$,
$\mathcal{I}(\mathbf{r},\mathbf{q},\lambda)=\mathcal{I}(\mathbf{r},\mathbf{q}
)-\lambda\E[\Lambda(A^N)]$. Denote $C_N^{(\lambda)}=\max \mathcal{I}(\mathbf{r},\mathbf{q},\lambda)$ and $\Gamma^{(\lambda)}$ is the corresponding cost incurred at the maximizer of $C_N^{(\lambda)}$. The evaluations as described in Algorithm 1, of $C_N^{(\lambda)}$ and $\Gamma^{(\lambda)}$ together,  characterize this tradeoff, and this tradeoff curve is obtained by appropriately sweeping through the values of $\lambda$.
\par
The algorithm takes as input a particular block length $N$, channel $p(y^N\parallel x^N)$ and the Lagrangian multiplier $\lambda$. To begin with, $q(\cdot)$ is initialized with the uniform distribution as shown in Algorithm 1. This $q(\cdot)$ then is used to update $r(\cdot)$, which is then used to update $q(\cdot)$. The update rule is chosen so that directed information is maximized, hence this is an alternate maximization procedure, as in \cite{Blahut}, \cite{Arimoto}, \cite{Naiss_Permuter_BAA}. The lower and upper bounds $I_U, I_L$ converge to $C_N^{(\lambda)}$ for increasing number of iterations. 
\subsection{Numerical Evaluation}
Here we propose and evaluate upper and lower bounds for the Example described in Fig. \ref{markovian}, where states are generated through a Markovian channel and only the encoder takes actions to decide whether the states, which are known to the decoder, will be fed back to the encoder or not. We will also contrast the bounds with the analytical lower bound $C_{enc,lower}(\Gamma)$ obtained in Section \ref{example1}. Our investigation in this section yields a contrasting upper bound to the capacity, which along with analytical lower bound $C_{enc,lower}(\Gamma)$ provides tight and computable bounds for capacity.
\par
From Section \ref{example1}, the capacity of this channel (which is finite indecomposable) is 
\bea
\lim_{N\rightarrow\infty}\frac{1}{N}\max_{\mathbf{r}, s.t. \E[\Lambda(A^N)]\le\Gamma}I(X^N\rightarrow Y^N,S^N)&=&\lim_{N\rightarrow\infty}\frac{1}{N}\max_{\mathbf{r}, s.t. \E[\Lambda(A^N)]\le\Gamma}\min_{s\in\mathcal{S}}I(X^N\rightarrow Y^N,S^N|S_0=s)\\
&\stackrel{(a)}{=}&\lim_{N\rightarrow\infty}\frac{1}{N}\max_{\mathbf{r}, s.t. \E[\Lambda(A^N)]\le\Gamma}I(X^N\rightarrow Y^N,S^N|S_0=0)
\label{limit1}
\eea
where (a) follows from the structure of the channel due to the fact that Z and S channel induce similar joint distributions due to symmetry in their structure (one channel can be obtained by replacing 1 with 0 and 0 with 1), so we have $\max_{r(x^N,a^N\parallel z^{N-1},s_0=0)} I(X^N\rightarrow Y^N|S_0=0)=\max_{r(x^N,a^N\parallel z^{N-1},s_0=1)} I(X^N\rightarrow Y^N|S_0=1)$.  Define, 
\bea
C_N(\Gamma)&\stackrel{\triangle}{=}&\max \frac{1}{N} I(X^N\rightarrow Y^N,S^N|S_0=0)
\eea
where maximum is over joint probability distributions $r(x^N,a^N\parallel
z^{N-1})p(y^n\parallel x^N,s_0=0)\1_{\{z^N=f(a^N,y^N)\}} $ such that
$\E[\Lambda(A^N)]\le \Gamma$.  We now have the following theorem, (the proof is deferred to Appendix \ref{BAA-Algorithm}),
 \begin{theorem}\label{theorem-evaluation}
For the channel in Fig. \ref{markovian}, there exists computable bounds for $C(\Gamma)$ defined for $N\ge 1$, where the lower bound is, $C_N(\Gamma-\frac{1}{N})\le C(\Gamma)$ for $\Gamma \in[\frac{1}{N},1]$ and the upper bound is, $C(\Gamma)\le C_N(\Gamma)$, for $\Gamma\in[0,1]$.
 \end{theorem}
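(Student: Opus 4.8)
The plan is to read both inequalities as statements about the length-$N$ ``lifted'' version of the channel of Fig.~\ref{markovian}, in which $N$ consecutive uses are grouped into one super-symbol whose super-state is the boundary state $S_{kN}$. Since the decoder tracks the state, the effective output is the pair $(Y,S)$, and I will first record the structural facts this buys us: writing $I(X^{KN}\to Y^{KN},S^{KN}\mid S_0=0)=\sum_{i=1}^{KN}I(X^i;Y_i,S_i\mid (Y,S)^{i-1},S_0=0)$, the autonomous Markov nature of $S$ (whose value at each time lies in the conditioning, cf.\ the chain (\ref{MC0}) and step (a) of Theorem~\ref{theorem11}) makes this directed information \emph{additive across the $K$ blocks}, each block contributing a length-$N$ directed information conditioned on its boundary state. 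I will also use two auxiliary facts: that $\Gamma\mapsto C_N(\Gamma)$ is non-decreasing and concave (the feasible set $\{\mathbf r:\E[\Lambda(A^N)]\le\Gamma\}$ is convex and grows with $\Gamma$, and $\mathcal{I}(\mathbf r,\mathbf q)$ is concave in $\mathbf r$ for the fixed channel $p(y^N\parallel x^N)$, so the perturbation value $C_N(\cdot)$ is concave); and the $0\leftrightarrow1$ symmetry of the channel used already in step (a) of (\ref{limit1}), which makes the value of the length-$N$ optimization independent of the conditioning initial state, so conditioning a block on any boundary state $s$ gives the same bound as conditioning on $S_0=0$.

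For the upper bound $C(\Gamma)\le C_N(\Gamma)$ I will prove the sub-additive inequality $C_{KN}(\Gamma)\le C_N(\Gamma)$ for every $K$ and then let $K\to\infty$, using that the limit $C(\Gamma)=\lim_M C_M(\Gamma)$ exists, so the subsequence $M=KN$ has the same limit. Fixing an optimal input distribution for block length $KN$, I decompose $I(X^{KN}\to Y^{KN},S^{KN}\mid S_0=0)$ into the $K$ per-block contributions $D_k$. Conditioning $D_k$ on its boundary state $S_{kN}$ and invoking the symmetry, each $D_k$ is at most $N\,C_N(\Gamma_k)$, where $\Gamma_k=\E[\tfrac1N\sum_{j=1}^N\Lambda(A_{kN+j})]$ is the expected cost charged to block $k$. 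Averaging and using concavity with monotonicity of $C_N(\cdot)$ gives $\tfrac1K\sum_k C_N(\Gamma_k)\le C_N\bigl(\tfrac1K\sum_k\Gamma_k\bigr)\le C_N(\Gamma)$, since $\tfrac1K\sum_k\Gamma_k=\E[\tfrac{1}{KN}\sum_{i}\Lambda(A_i)]\le\Gamma$. Dividing by $KN$ yields $C_{KN}(\Gamma)\le C_N(\Gamma)$, hence the upper bound.

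For the lower bound $C_N(\Gamma-\tfrac1N)\le C(\Gamma)$ I will use the achievability machinery of Section~\ref{achievability} with a concatenated input distribution, exactly as in the proof of Theorem~\ref{theorem1}. Over a transmission block of length $KN$, take the $K$-fold concatenation of the length-$N$ causal-conditioning distribution attaining $C_N(\Gamma-\tfrac1N)$ (designed for initial state $0$), and modify it so that the first action of each sub-block is set to $A=1$, revealing that sub-block's initial state $S_{(k-1)N}$ to the encoder through the feedback (the decoder already tracks the state). By the $0\leftrightarrow1$ symmetry the encoder applies the relabeled $S_0=0$-optimal distribution within the sub-block, so each sub-block contributes the full $N\,C_N(\Gamma-\tfrac1N)$ rather than the boundary-state-averaged quantity; this is precisely what converts the $\tfrac{\log|\mathcal{S}|}{N}$ initial-state penalty appearing in Theorem~\ref{theorem1} into a pure cost penalty. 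The synchronization action costs one unit per sub-block of $N$ symbols, i.e.\ $\tfrac1N$ per symbol, so the per-symbol cost is at most $(\Gamma-\tfrac1N)+\tfrac1N=\Gamma$, and the error-exponent analysis of Section~\ref{achievability} (from Theorem~\ref{theorem2} onward) gives reliable communication at every rate below $C_N(\Gamma-\tfrac1N)$ as $K\to\infty$. Thus $C(\Gamma)\ge C_N(\Gamma-\tfrac1N)$, and since $C_N(\Gamma)$ is exactly the curve that BAA-Action (Algorithm~\ref{BAA_Action}) traces out as $\lambda$ is swept, both bounds are computable.

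The main obstacle is the cost bookkeeping in the block decomposition, which is exactly the place where the general sub-additivity of the paper breaks (as flagged in the note after Theorem~\ref{theorem5}): the per-block costs $\Gamma_k$ are data-dependent and unequal, so the naive additive bound does not close, and the rescue is the concavity together with monotonicity of $\Gamma\mapsto C_N(\Gamma)$. I expect the delicate points of the write-up to be (i) justifying the additivity of the directed information across blocks and the reduction of each block term to a length-$N$ directed information conditioned only on its boundary state, for which the autonomous-Markov, fully-observed-state structure is essential; and (ii) verifying that conditioning each $D_k$ on its random boundary state and then applying the concavity averaging legitimately bounds $D_k$ by $N\,C_N(\Gamma_k)$ with \emph{no} residual initial-state term, the $0\leftrightarrow1$ symmetry being precisely what removes that residual. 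On the achievability side the only subtlety is charging the one synchronization action per sub-block, which is what produces the $\tfrac1N$ cost shift in the statement.
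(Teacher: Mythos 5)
Your proposal is correct and follows essentially the same route as the paper's proof: the upper bound via decomposing the length-$B=KN$ directed information into per-sub-block terms conditioned on boundary states, invoking the $0\leftrightarrow 1$ symmetry and then Jensen plus monotonicity of $\Gamma\mapsto C_N(\Gamma)$ to absorb the unequal per-block costs; and the lower bound via spending one synchronization action per length-$N$ sub-block, giving the $\tfrac1N$ cost shift. The only slip is a timing detail in the achievability step: setting the \emph{first} action of sub-block $k$ to $A=1$ reveals $S_{(k-1)N+1}$ only at time $(k-1)N+2$, not the state $S_{(k-1)N}$ governing the sub-block's first channel use, so the action should be placed at the \emph{last} epoch of the preceding sub-block (as the paper does) or the sub-blocks shifted by one; this does not affect the cost accounting or the conclusion.
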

 \par
Note that the bounds are tight and converge to the capacity as $N\rightarrow\infty$. The computation is performed using the algorithm outlined above. For the case of $\alpha=\beta=\delta=\epsilon=0.5$, the bounds computed using the algorithm are shown in Fig. \ref{BAA-Action-bounds} where computation is performed for $N=2$ and $N=3$. Fig \ref{capacity-bounds} contrast the upper bound of $N=3$ along with the analytical lower bound $C_{enc,lower}(\Gamma)$. Note from the graph, the benefit of using the full feedback is obtained around $\Gamma\sim 0.2034$.  The code is available at \cite{BAA-Action-Code}.
\begin{figure}[htbp]
\begin{center}
\includegraphics[scale=0.5]{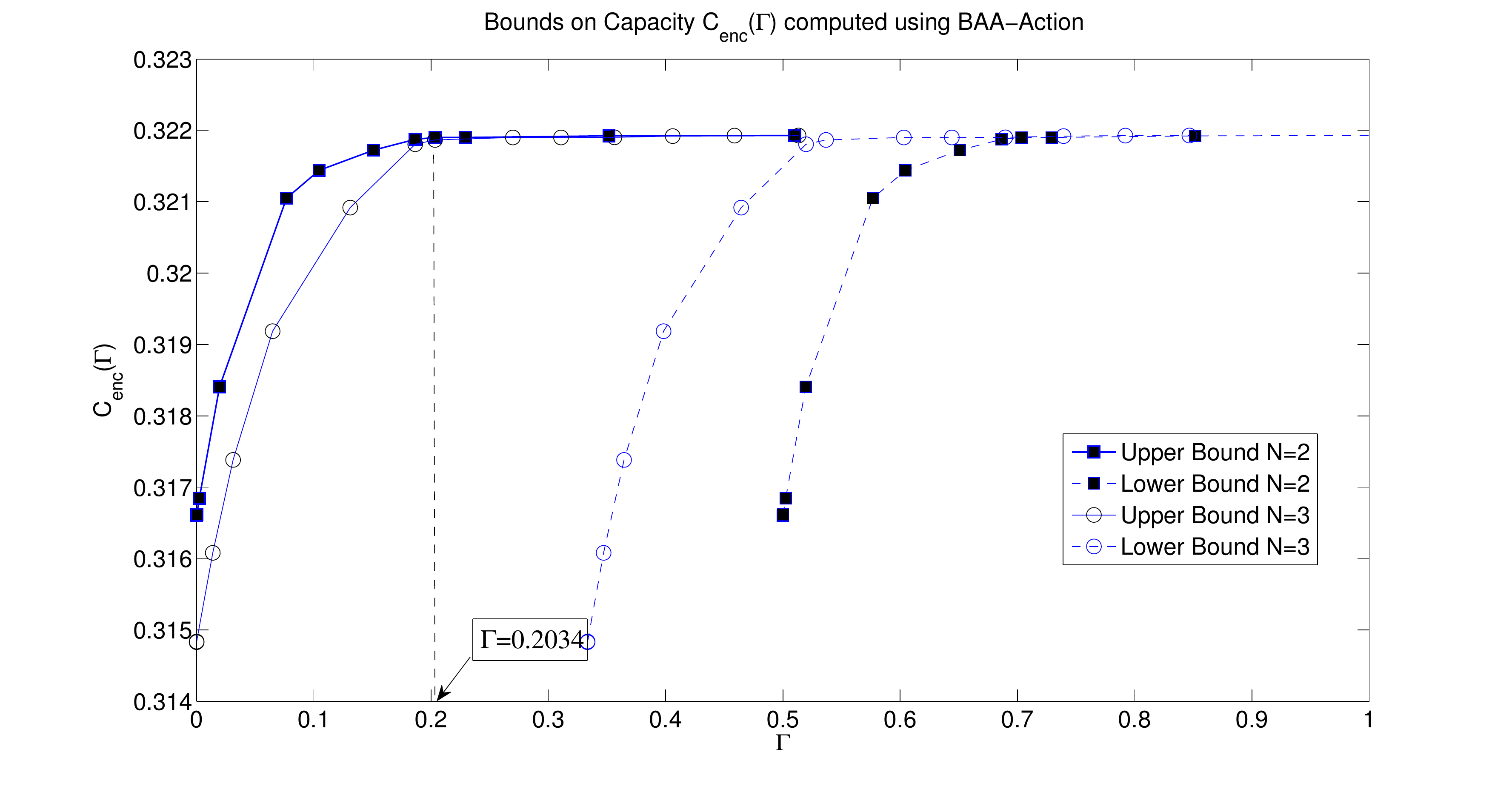}
\caption{Cost-capacity trade off for example in Fig.
\ref{markovian}. Bounds correspond to our calculation from the algorithm for $N=2$ and $N=3$.}
\label{BAA-Action-bounds}
\end{center}
\end{figure}
\begin{figure}[htbp]
\begin{center}
\includegraphics[scale=0.5]{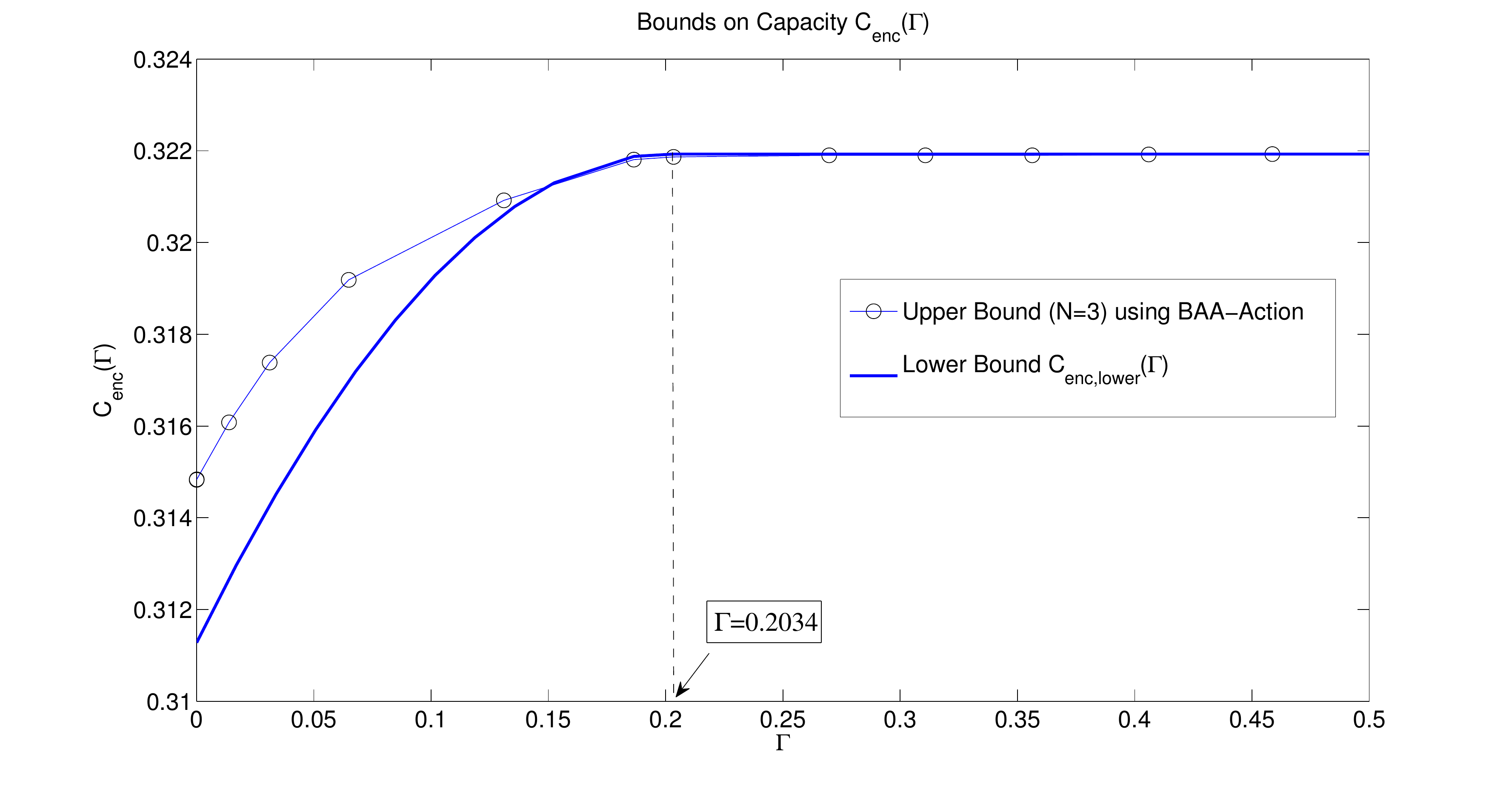}
\caption{Cost-capacity trade off for example in Fig.
\ref{markovian}. $C_{enc,lower}(\Gamma)$ is the lower bound on capacity computed analytically from the expression in Section \ref{example1}. The computed upper bound for $N=3$ using BAA-Action provides a tight upper bound to the capacity as shown in the figure. Note from the graph, the benefit of using the full feedback is obtained around $\Gamma\sim 0.2034$. }
\label{capacity-bounds}
\end{center}
\end{figure}

\section{Conclusion}
In this paper, we studied communication systems with finite state channels
(FSCs), where the encoder and decoder adaptively decide what to feed 
back from the decoder to encoder to optimize for the rate of reliable
communication, under an average cost constraint. For FSCs where
probability of initial state is
positive for
all states or for stationary indecomposable FSCs, we have the exact
characterization of the capacity. We also discuss the special case of \textit{to
feed or not to feed back} where either the encoder or the decoder takes binary
actions that determine whether or not a deterministic function of channel output
will be fed back to the encoder. As another special case, we characterize the
capacity in case of \textit{coding on the backward link} for FSCs. In case of
Markovian channels, with explicit computation we show that the naive time sharing schemes can be highly suboptimal. Finally, we proposed a Blahut-Arimoto type algorithm based on alternate maximization to give computable upper and lower bounds for a class of Markovian Channel.  

\label{conclusion}

\bibliography{fsc}
\bibliographystyle{IEEEtran}

\appendices
\section{Some Properties of Causal Conditioning and Directed Information}
\label{basic}
Here we present some of the basic properties of causal conditioning and directed
information. The proofs are omitted as being similar to the corresponding
Lemmas in \cite{HaimTsachyGoldsmith}.
\begin{itemize}
 \item \textit{Property 1 :}[Chain rule for causal conditioning]
\bea
P(x^N,a^N,y^N)=P(y^N\parallel x^N,a^N)P(x^N,a^N\parallel
y^{N-1}).
\eea
Similarly,
\bea
P(x^N,a^N,y^N,s_0)=P(y^N\parallel x^N,a^N,s_0)P(x^N,a^N\parallel
y^{N-1},s_0).
\eea
\item \textit{Property 2 :}
\par  
 $P(x^N,a_e^N\parallel z^{N-1})$ uniquely determines
$P(x_i,a_{e,i}|x^{i-1},a_e^{i-1},z^{i-1})$ $\forall$ $1\le i\le N$ and
all the arguments $(x^{i-1},a_e^{i-1},z^{i-1})$, for which
$P(x^{i-1},a_e^{i-1},z^{i-1})>0$. Similar results holds for $P(a_{d}^N\parallel
y^{N-1})$.
\item \textit{Property 3 :} 
\par
$|I(X^N\rightarrow Y^N)-I(X^N\rightarrow Y^N|S)|\le
H(S)\le \log \card{\mathcal{S}}$.
\end{itemize}


\section{Proof of Theorem \ref{theorem2}}\label{proof2}
From our achievability scheme and Lemma \ref{achjoint} we have
\bea
P(x^N,a_e^N,a_d^N,y^N)&=&\sum_{s_0}P(s_0,x^N,a_e^N,a_d^N,y^N)\\
&=&Q(x^N,a_e^N\parallel z^{N-1})Q(a_d^N\parallel
y^{N-1})\sum_{s_0}P(s_0)P(y^N\parallel x^N,s_0)\\
&=&Q(x^N,a_e^N\parallel z^{N-1})Q(a_d^N\parallel y^{N-1})P(y^N\parallel x^N),
\eea
where last equality follows from Eq. (\ref{channelmodel1}).
Hence we have
\bea
E(P_{e,m})&=&\sum_{y^N}\sum_{x^N,a_e^N,a_d^N}P(x^N,a_e^N,a_d^N,y^N)P(\mbox{error
} |m , x^N , a_e^N , a_d^N, 
y^N)\\
&=&\sum_{y^N}\sum_{x^N,a_e^N,a_d^N}Q(x^N,a_e^N\parallel
z^{N-1})Q(a_d^N\parallel
y^{N-1})P(y^N\parallel x^N)P(\mbox{error}|m,x^N,a_e^N,a_d^N,y^N).
\label{erroreqn}
\eea
Let $A_{m'}=\{\mbox{event such that } P_{y^N|m'}>P_{y^N|m}
\mbox{ for }m'\neq m \}$. Alternatively if for a message $m$,
encoder generated $x^N$, $A_{m'}=\{\mbox{event such
that } P(y^N\parallel x^{'N})>P(y^N\parallel x^N) \mbox{ for }
x^{'N}\neq x^N \}$
\bea
P(A_{m'}|m,x^N,a_e^N,a_d^N,y^N)]&=&\sum_{x^{'N},a_e^{'N},a_d^{'N}}Q(x^{'N},
a_e^{'N}\parallel
z^{N-1})Q(a_d^{'N}\parallel
y^{N-1})\1_{\{P(y^N\parallel x^{'N})>P(y^N\parallel x^N)\}}\\
P(A_{m'}|m,x^N,a_e^N,a_d^N,y^N)]&\le&\sum_{x^{'N},a_e^{'N},a_d^{'N}}Q(x^{'N},
a_e^{'N}\parallel
z^{N-1})Q(a_d^{'N}\parallel
y^{N-1})\left[\frac{P(y^N\parallel x^{'N})}{P(y^N\parallel
x^N)}\right]^s\mbox{
any $s>0$}.
\eea
Hence,
\bea
P(\mbox{error}|m,x^N,a_e^N,a_d^N,y^N)&=&P(\cup_{m'\neq
m}A_{m'}|m,x^N,a_e^N,a_d^N,y^N)\\
&\le&\min\left[\sum_{m'\neq m}P(A_{m'}|m,x^N,a_e^N,a_d^N,y^N),1\right]\\
&\le&\left[\sum_{m'\neq m}P(A_{m'}|m,x^N,a_e^N,a_d^N,y^N)\right]^\rho\mbox{,
for any $0\le\rho\le 1$} \\
&\le&\left[(M-1)\sum_{x^{'N},a_e^{'N},a_d^{'N}}Q(x^{'N},a_e^{'N}\parallel
z^{N-1})Q(a_d^{'N}\parallel
y^{N-1})\left(\frac{P(y^N\parallel
x^{'N})}{P(y^N\parallel x^N)}\right)^s\right]^\rho\label
{temp1}\nonumber.\\
\eea
Now substituting (\ref{temp1}) in (\ref{erroreqn}) and using
$s=\frac{1}{\rho+1}$ we obtain,
\bea
E_{P_{e,m}}&\le&(M-1)^\rho\sum_{y^N}\left[\sum_{x^N,a_e^N,a_d^N}Q(x^N,
a_e^N\parallel
z^{N-1})Q(a_d^N\parallel
y^{N-1})P(y^N\parallel
x^N)^{\frac{1}{\rho+1}}\right]^{\rho+1}.
\eea

\section{Proof of Some Markov Chains}\label{markov}
If a given scheme satisfies joint probability distribution , as in
Eq. (\ref{mainjoint}), we have the following markov chains,
\begin{itemize}
 \item [MC1] $Y_i-(X^i,Y^{i-1},S_0)-(M,A_e^i,A_d^i)$. 
 \item [MC2]
$(X_i,A_{e,i})-(X^{i-1},A_e^{i-1},Z^{i-1})-(Y^{i-1},A_d^{i-1},S_0)$.
 \item [MC3] $A_{d,i}-(Y^{i-1},A_d^{i-1})-(X^i,A_e^i,Z^{i-1},S_0)$.
\end{itemize}
To prove MC1, consider again the joint probability distribution induced by a
given scheme,
\bea\label{mainjoint1}
&&P_{M,A_{e}^N,A_d^{N},Z^N,X^N,S_0^N,Y^N,\hat{M}}(m,a_{e}^N,a_d^{N},z^N,x^N,
s_0^N ,
y^N , \hat{m})\nonumber\\
&=&\frac{1}{\card{\mathcal{M}}} P_S(s_0)\prod_{i=1}^{n}
\1_{\{a_{d,i}=f_{A_{d,i}}(y^{i-1)}\}}
\1_{\{a_{e,i}=f_{A_{e,i}}(m,z^{i-1})\}}\nonumber\\
&&\times\prod_{i=1}^{n}\1_{\{x_i=f_{e,i}(m,z^{i-1})\}}P(y_i,s_i|x_i,
s_{i-1})\1_{\{z_{i}=f(a_{e,i},a_{d,i},y_i)\}}\times\1_{\{\hat{m}=f_d(y^n)\}}
.\nonumber\\
\eea
Summing over
$(\hat{M},X_{i+1}^{N},Z_i^{N},S_{i+1}^N,Y_{i+1}^N,A_{e,i+1}^N,A_{d,i+1}^N)$ in
Eq. (\ref{mainjoint1}) we obtain,
\bea
&&P_{M,A_{e}^i,A_d^{i},Z^{i-1},X^i,S_0^i,Y^i}(m,a_{e}^i,a_d^{i},z^{i-1},x^i,
s_0^i,y^i)\\
&=&\frac{1}{\card{\mathcal{M}}}\prod_{j=1}^{i}
\1_{\{a_{d,j}=f_{A_{d,j}}(y^{j-1)}\}}
\1_{\{a_{e,j}=f_{A_{e,j}}(m,z^{j-1})\}}\1_{\{x_j=f_{e,j}(m,z^{j-1})\}}
\times\prod_{j=1}^{i-1}\1_{\{z_{i
}=f(a_{e,i},a_{d,i},y_i)\}}\nonumber\\&&\times\prod_{j=1}^{i}P(y_j,s_j|x_j,
s_{j-1})\times P_S(s_0)\label{mainjoint2}\\
&=&\Phi_1(M,A_e^i,A_d^i,Z^{i-1},Y^{i-1},X^i)\Phi_2(Y^i,S_0^i,X^i)\\
&=&\Phi'_1(M,A_e^i,A_d^i,Z^{i-1},Y^{i-1},X^i)\Phi_2(Y^i,S_0^i,X^i),
\eea
which implies markov chain
$(M,A_e^i,A_d^i,Z^{i-1})-(Y^{i-1},X^i,S_0)-(Y_i,S^i)$
which implies MC1.
\begin{lemma}
\label{cor1}
If MC1 holds,
\bea
P(y^N\parallel x^N,a_e^N,a_d^N,s_0)=P(y^N\parallel x^N,s_0).
\eea
\end{lemma}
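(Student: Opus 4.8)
The plan is to reduce the claimed identity to a term-by-term consequence of MC1 by first expanding both causal conditioning expressions via their definitions. Recall from the definition of causal conditioning in Section \ref{problem} that, with the extra conditioning objects $(a_e^N, a_d^N, s_0)$ carried along, the left-hand side factorizes as
\bea
P(y^N\parallel x^N,a_e^N,a_d^N,s_0)=\prod_{i=1}^{N}P(y_i|x^i,a_e^i,a_d^i,y^{i-1},s_0),
\eea
while the right-hand side is
\bea
P(y^N\parallel x^N,s_0)=\prod_{i=1}^{N}P(y_i|x^i,y^{i-1},s_0).
\eea
So it suffices to show the two products agree factor by factor, i.e. that for each $i$,
\bea\label{termwise}
P(y_i|x^i,a_e^i,a_d^i,y^{i-1},s_0)=P(y_i|x^i,y^{i-1},s_0).
\eea

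The key step is to invoke MC1, namely $Y_i-(X^i,Y^{i-1},S_0)-(M,A_e^i,A_d^i)$, which has already been established from the induced joint distribution (\ref{mainjoint1}) in Appendix \ref{markov}. Conditional independence encoded by this Markov chain states precisely that, given $(X^i,Y^{i-1},S_0)$, the output $Y_i$ is independent of $(M,A_e^i,A_d^i)$; in particular it is independent of the sub-collection $(A_e^i,A_d^i)$. This is exactly the statement \eqref{termwise}. I would note that care is needed only in reading off that the conditioning sets match up: the factor on the left conditions on $(x^i,a_e^i,a_d^i,y^{i-1},s_0)$, and dropping the $(a_e^i,a_d^i)$ components is justified by MC1 since these are among the variables on the far side of the chain.

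Multiplying \eqref{termwise} over $i=1,\dots,N$ and recognizing the resulting product as $P(y^N\parallel x^N,s_0)$ completes the argument. I do not anticipate a genuine obstacle here: the content is entirely in MC1, which is assumed, and the remaining work is the bookkeeping of the causal-conditioning factorization. The only subtlety worth flagging is that the identity is a statement about the causal conditioning distributions (functions of $x^N,a_e^N,a_d^N,y^N$ for the given channel), so the equalities should be understood to hold for all arguments on which the conditionals are well-defined, which is the standard convention used throughout the paper.
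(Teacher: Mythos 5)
Your proof is correct and matches the paper's argument essentially verbatim: both expand the causal conditioning via its defining product, apply MC1 factor by factor to drop $(a_e^i,a_d^i)$ from each conditional, and reassemble the product as $P(y^N\parallel x^N,s_0)$. No differences worth noting.
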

\begin{proof}
This follows by chain rule in expanding $P(y^N\parallel x^N,a_e^N,a_d^N,s_0)$,
\bea
P(y^N\parallel x^N,a_e^N,a_d^N,s_0)&=&\prod_{i=1}^{N}
P(y_i|y^{i-1},x^i,a_e^i,a_d^i,s_0)\\
&\stackrel{(\ast)}{=}&\prod_{i=1}^{N}
P(y_i|y^{i-1},x^i,s_0)\\
&=&P(y^N\parallel x^N,s_0),
\eea
where ($\ast$) follows from MC1.
\end{proof}
To prove MC2, again summing over
$(\hat{M},X_{i+1}^{N},Z_i^{N},S_{i}^N,Y_{i}^N,A_{e,i+1}^N,A_{d,i}^N)$ in
Eq. (\ref{mainjoint1}) we obtain,
\bea
&&P_{M,A_{e}^i,A_d^{i-1},Z^{i-1},X^i,S_0^{i-1},Y^{i-1}}(m,a_{e}^i,a_d^{i-1},z^{
i-1
} , x^i ,
s_0^{i-1},y^{i-1})\\
&=&\frac{1}{\card{\mathcal{M}}}\1_{\{x_i=f_{e,i}(m,z^{i-1})\}}\1_{\{a_{e,i}=f_{
A_{e,i}}(m,z^{i-1})\}} \prod_{j=1}^{i-1}
\1_{\{a_{e,j}=f_{A_{e,j}}(m,z^{j-1})\}}\1_{\{x_j=f_{e,j}(m,z^{j-1})\}}
\nonumber\\
&&\times\prod_{j=1}^{i-1}\1_{\{a_{d,j}=f_{A_{d,j}}(y^{j-1)}\}}P(y_j,s_j|x_j,s_{
j-1 })
\1_{\{z_{i
}=f(a_{e,i},a_{d,i},y_i)\}}\times P_S(s_0)\\
&=&\Phi_1(M,X_i,A_{e,i},X^{i-1},A_e^{i-1},Z^{i-1})\Phi_2(X^{i-1},A_e^{i-1},Z^{
i-1},A_d^{ i-1 },S_0^{i-1},Y^{i-1}),
\eea
which implies the markov chain,
$(M,X_i,A_{e,i})-(X^{i-1},A_e^{i-1},Z^{i-1})-(Y^{i-1},S_0^{i-1},A_d^{i-1})$
which
implies MC2.
\par
To prove MC3, we sum over
$(\hat{M},X_{i+1}^{N},Z_i^{N},S_{i}^N,Y_{i}^N,A_{e,i+1}^N,A_{d,i+1}^N)$ in
Eq. (\ref{mainjoint1}) and obtain,
\bea
&&P_{M,A_{e}^i,A_d^{i},Z^{i-1},X^i,S_0^{i-1},Y^{i-1}}(m,a_{e}^i,a_d^{i},z^{i-1
} , x^i ,
s_0^{i-1},y^{i-1})\nonumber\\
&=&\frac{P_S(s_0)}{\card{\mathcal{M}}}\1_{\{x_i=f_{e,i}(m,z^{i-1}),a_{e
,i}=f_{
A_{e,i}}(m,z^{i-1})\}} \prod_{j=1}^{i-1}
\1_{\{a_{e,j}=f_{A_{e,j}}(m,z^{j-1}),x_j=f_{e,j}(m,z^{j-1}),z_{i
}=f(a_{e,i},a_{d,i},y_i)\}}P(y_j,
s_j|x_j,s_{
j-1 })
\nonumber\\
&&\times\prod_{j=1}^{i-1}\1_{\{a_{d,j}=f_{A_{d,j}}(y^{j-1)}\}}\\
&=&\Phi_1(M,S_0^{i-1},X^i,A_e^i,Z^{i-1},A_d^{i-1},Y^{i-1})\Phi_2(A_{d}^{i-1},Y^{
i-1 } , A_ { d , i } ),
\eea
which implies markov chain,
$A_{d,i}-(Y^{i-1},A_d^{i-1})-(X^i,A_e^i,Z^{i-1},S_0^{i-1},M)$, which implies
MC3.
\begin{lemma}
 \label{cor2}
If MC2 and MC3 holds, then,
\bea
Q(x^N,a_e^N,a_d^N\parallel y^{N-1},s_0)=Q(x^N,a_e^N\parallel
z^{N-1})Q(a_d^N\parallel y^{N-1}).
\eea
\end{lemma}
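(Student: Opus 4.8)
The plan is to unfold the left-hand side using the definition of causal conditioning, peel off one time-step at a time via the chain rule, and then collapse each resulting single-letter conditional using MC2 and MC3. Concretely, by definition
\bea
Q(x^N,a_e^N,a_d^N\parallel y^{N-1},s_0)=\prod_{i=1}^{N}P(x_i,a_{e,i},a_{d,i}|x^{i-1},a_e^{i-1},a_d^{i-1},y^{i-1},s_0).
\eea
For each $i$ I would split the joint conditional through the chain rule into an ``encoder factor'' carrying $(x_i,a_{e,i})$ and a ``decoder factor'' carrying $a_{d,i}$, namely $P(x_i,a_{e,i}|x^{i-1},a_e^{i-1},a_d^{i-1},y^{i-1},s_0)$ times $P(a_{d,i}|x^i,a_e^i,a_d^{i-1},y^{i-1},s_0)$.

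The two Markov chains then act on the two factors separately. First I would apply MC2, $(X_i,A_{e,i})-(X^{i-1},A_e^{i-1},Z^{i-1})-(Y^{i-1},A_d^{i-1},S_0)$, to the encoder factor. The point requiring care here is that the conditioning event in the encoder factor is $(x^{i-1},a_e^{i-1},a_d^{i-1},y^{i-1},s_0)$ rather than $(x^{i-1},a_e^{i-1},z^{i-1})$; but since $z^{i-1}=f(a_e^{i-1},a_d^{i-1},y^{i-1})$ is a deterministic function, $z^{i-1}$ is already pinned down by what we condition on, and MC2 lets me drop $(y^{i-1},a_d^{i-1},s_0)$, leaving exactly $P(x_i,a_{e,i}|x^{i-1},a_e^{i-1},z^{i-1})$. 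Dually, applying MC3, $A_{d,i}-(Y^{i-1},A_d^{i-1})-(X^i,A_e^i,Z^{i-1},S_0)$, to the decoder factor (again noting that $z^{i-1}$ is determined by the conditioning variables) collapses it to $P(a_{d,i}|a_d^{i-1},y^{i-1})$.

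Finally I would reassemble the product. Substituting the two simplified factors back and separating the product over $i$ gives $\prod_{i=1}^N P(x_i,a_{e,i}|x^{i-1},a_e^{i-1},z^{i-1})\cdot\prod_{i=1}^N P(a_{d,i}|a_d^{i-1},y^{i-1})$, which are by definition $Q(x^N,a_e^N\parallel z^{N-1})$ and $Q(a_d^N\parallel y^{N-1})$ respectively, yielding the claim. The computation itself is entirely mechanical; the only genuine subtlety --- and the one place I expect a careful reader to pause --- is the bookkeeping that converts the conditioning set $(a_e^{i-1},a_d^{i-1},y^{i-1})$ arising naturally from causal conditioning into the $z^{i-1}$ appearing in MC2 and MC3, which works precisely because $z^{i-1}$ is a deterministic function of those variables.
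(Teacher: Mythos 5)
Your proposal is correct and follows essentially the same route as the paper's own proof: expand the causally conditioned distribution by the chain rule, split each factor into an encoder part and a decoder part, insert $z^{i-1}$ into the conditioning (legitimate since it is a deterministic function of $(a_e^{i-1},a_d^{i-1},y^{i-1})$), and then invoke MC2 and MC3 to drop the extraneous conditioning variables before reassembling the product. The one subtlety you flag --- converting the conditioning set into the $z^{i-1}$ form required by MC2 and MC3 --- is exactly the step the paper marks as $(\ast)$.
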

\begin{proof}
 Applying chain rule for causal conditioning,
\bea
Q(x^N,a_e^N,a_d^N\parallel y^{N-1},s_0)&=&\prod_{i=1}^N
Q(x_i,a_{e,i},a_{d,i}|x^{i-1},a_e^{i-1},a_d^{i-1},y^{i-1},s_0)\\
&=&Q(x_i,a_{e,i}|x^{i-1},a_e^{i-1},a_d^{i-1},y^{i-1},s_0)Q(a_{d,i}|x^{
i},a_e^{i},a_d^{i-1},y^{i-1},s_0)\\
&\stackrel{(\ast)}{=}&Q(x_i,a_{e,i}|x^{i-1},a_e^{i-1},a_d^{i-1},y^{i-1},z^{i-1} 
,s_0)Q(a_{d,i}|x^{
i},a_e^{i},a_d^{i-1},y^{i-1},s_0)\\
&\stackrel{(\ast\ast)}{=}&Q(x_i,a_{e,i}|x^{i-1},a_e^{i-1},z^{i-1})Q(a_{d,i}|a_d^
{
i-1},y^{i-1})\\
&=&Q(x^N,a_e^N\parallel
z^{N-1})Q(a_d^N\parallel y^{N-1})P(y^N\parallel x^N),
\eea
where ($\ast$) follows from the fact that, $z_i=f(a_{e,i},a_{d,i},y_i)$, while
($\ast\ast$) follows from MC2 and MC3.
\end{proof}
\begin{lemma}
 \label{cor3}
If a given scheme satisfies joint as in Eq. (\ref{mainjoint}) then,
\bea
P(s_0,x^N,a_e^N,a_d^N,y^N)&=&P(s_0)Q(x^N,a_e^N\parallel
z^{N-1})Q(a_d^N\parallel y^{N-1})P(y^N\parallel x^N,s_0)\\
P(x^N,a_e^N,a_d^N,y^N)&=&Q(x^N,a_e^N\parallel
z^{N-1})Q(a_d^N\parallel y^{N-1})P(y^N\parallel x^N).\\
\eea
\end{lemma}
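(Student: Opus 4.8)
The plan is to assemble this factorization from the chain rule for causal conditioning (Property 1 of Appendix \ref{basic}) together with the three Markov chains MC1, MC2, MC3 established earlier in this appendix, following the same route as the proof of Lemma \ref{achjoint}, but now invoking only the generic structure shared by \emph{every} scheme inducing the joint in Eq. (\ref{mainjoint}), rather than the specific code-tree construction.

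First I would apply the $s_0$ version of Property 1 to split the joint into a causally-conditioned ``input'' part and a causally-conditioned ``channel'' part,
\bea
P(s_0,x^N,a_e^N,a_d^N,y^N)=P(s_0)Q(x^N,a_e^N,a_d^N\parallel y^{N-1},s_0)P(y^N\parallel x^N,a_e^N,a_d^N,s_0).
\eea
This is a tautological decomposition valid for any joint. Next I would invoke Lemma \ref{cor1} (a consequence of MC1) to collapse the channel factor to $P(y^N\parallel x^N,s_0)$, and Lemma \ref{cor2} (a consequence of MC2 and MC3) to factor the input part as $Q(x^N,a_e^N\parallel z^{N-1})Q(a_d^N\parallel y^{N-1})$. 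Substituting both yields the first claimed identity,
\bea
P(s_0,x^N,a_e^N,a_d^N,y^N)=P(s_0)Q(x^N,a_e^N\parallel z^{N-1})Q(a_d^N\parallel y^{N-1})P(y^N\parallel x^N,s_0).
\eea

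For the second identity I would marginalize over $s_0$, pull the two $Q$-factors (which carry no $s_0$ dependence) outside the sum, and apply the channel characterization from Eq. (\ref{channelmodel1}), $\sum_{s_0}P(s_0)P(y^N\parallel x^N,s_0)=P(y^N\parallel x^N)$, to obtain
\bea
P(x^N,a_e^N,a_d^N,y^N)=Q(x^N,a_e^N\parallel z^{N-1})Q(a_d^N\parallel y^{N-1})P(y^N\parallel x^N).
\eea

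The only genuine content—and where care is needed—is that the Markov chains MC1--MC3, and hence Lemmas \ref{cor1} and \ref{cor2}, must hold for the general joint in Eq. (\ref{mainjoint}) by virtue of its functional (indicator-and-kernel) structure, not by appeal to the achievability code-tree. Since that verification has already been carried out earlier in this appendix by grouping the indicators and channel kernels into the appropriate $\Phi_1,\Phi_2$ factors, I expect the present lemma to reduce to a purely mechanical substitution, with no obstacle beyond careful bookkeeping of the conditioning variables.
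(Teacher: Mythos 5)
Your proposal is correct and follows essentially the same route as the paper's own proof: Property 1 to split the joint, Lemma \ref{cor1} (via MC1) to reduce the channel factor, Lemma \ref{cor2} (via MC2, MC3) to factor the input part, and marginalization over $s_0$ with Eq. (\ref{channelmodel1}) for the second identity. Your closing remark—that the real content is that MC1--MC3 hold for the generic joint of Eq. (\ref{mainjoint}) rather than only for the code-tree construction—is precisely the point the paper relies on when it says these chains were "already proved" in the appendix.
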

\begin{proof}
Using Property 1 in Appendix \ref{basic}, we have, 
\bea
P(s_0,x^N,a_e^N,a_d^N,y^N)=P(s_0)Q(x^N,a_e^N,a_d^N\parallel
y^{N-1},s_0)P(y^N\parallel
x^N,a_e^N,a_d^N,s_0),
\eea
but already proved that MC1, MC2 and MC3 holds which implies by Lemmas
\ref{cor1} and \ref{cor2} that,
\bea
P(y^N\parallel x^N,a_e^N,a_d^N,s_0)&=&P(y^N\parallel x^N,s_0)\\
Q(x^N,a_e^N,a_d^N\parallel y^{N-1},s_0)&=&Q(x^N,a_e^N\parallel
z^{N-1})Q(a_d^N\parallel y^{N-1})P(y^N\parallel x^N),
\eea
which implies,
\bea
P(s_0,x^N,a_e^N,a_d^N,y^N)&=&P(s_0)Q(x^N,a_e^N\parallel
z^{N-1})Q(a_d^N\parallel y^{N-1})P(y^N\parallel x^N,s_0).
\eea
Summing over $s_0$,
\bea
P(x^N,a_e^N,a_d^N,y^N)&=&\sum_{s_0}P(s_0,x^N,a_e^N,a_d^N,y^N)\\
&=&Q(x^N,
a_e^N\parallel
z^{N-1})Q(a_d^N\parallel y^{N-1})\sum_{s_0}P(s_0)P(y^N\parallel x^N,s_0),
\eea
where last equality follows from Eq. (\ref{channelmodel1}).
\end{proof}
\section{Derivation of BAA-Action}
\label{BAA-Algorithm}
Here we will derive the BAA-Action algorithm (Algorithm 1) and also state the convergence results. We will state   similar Lemmas as in \cite{Naiss_Permuter_BAA}. Some proofs
are completely omitted as they follow verbatim from the corresponding lemmas in \cite{Naiss_Permuter_BAA}.
\begin{lemma} 
\label{lemma_baa_1}
For a fixed $\lambda\ge 0$, $\mathcal{I}(\mathbf{r},\mathbf{q},\lambda)$ is
concave, continuous, and has continuous partial derivatives in $\mathbf{r}$ and
$\mathbf{q}$.
\end{lemma}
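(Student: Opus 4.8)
The plan is to reduce the claim to two elementary scalar facts—convexity of $t\mapsto t\log t$ and concavity of $t\mapsto \log t$—after peeling off the cost penalty. First I would dispose of the Lagrangian term. Writing
\[
\E[\Lambda(A^N)]=\sum_{x^N,a^N,y^N}r(x^N,a^N\parallel z^{N-1})\,p(y^N\parallel x^N)\,\Lambda(a^N),
\]
we see it is affine (in fact linear) in $\mathbf r$ and constant in $\mathbf q$. Hence $-\lambda\E[\Lambda(A^N)]$ is affine in $(\mathbf r,\mathbf q)$ and infinitely differentiable, so it preserves concavity in each argument, joint continuity, and the existence of continuous partials. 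It therefore suffices to prove the three properties for $\mathcal{I}(\mathbf r,\mathbf q)$ itself, over the (convex) feasible set of valid causally conditioned distributions.

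Next I would split $\mathcal{I}(\mathbf r,\mathbf q)=T_1(\mathbf r,\mathbf q)-T_2(\mathbf r)$, where $T_1=\sum r(x^N,a^N\parallel z^{N-1})\,p(y^N\parallel x^N)\log q(x^N,a^N|y^N)$ collects the cross term and $T_2=\sum r(x^N,a^N\parallel z^{N-1})\,p(y^N\parallel x^N)\log r(x^N,a^N\parallel z^{N-1})$ the self term. For concavity in $\mathbf q$ with $\mathbf r$ fixed, note that $T_2$ is constant in $\mathbf q$ while $T_1$ is a nonnegative-weighted sum of terms $\log q(\cdot)$; since $\log$ is concave and the weights $r\,p\ge 0$, $T_1$, and hence $\mathcal{I}$, is concave in $\mathbf q$. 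For concavity in $\mathbf r$ with $\mathbf q$ fixed, $T_1$ is linear in $\mathbf r$, whereas each summand of $T_2$ has the form $p(y^N\parallel x^N)\,\phi\bigl(r(x^N,a^N\parallel z^{N-1})\bigr)$ with $\phi(t)=t\log t$ convex. The value $r(x^N,a^N\parallel z^{N-1})$ is a coordinate projection of $\mathbf r$, so $-\phi$ composed with this linear readout is concave; a nonnegative-weighted sum of such functions is concave, giving that $-T_2$, and therefore $\mathcal{I}$, is concave in $\mathbf r$.

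Finally, for continuity and continuous partial derivatives I would argue on the region where the relevant masses are positive, extending $T_1,T_2$ to the simplex boundary via the standard conventions $0\log 0=0$. On the interior the summand is a finite combination of products and logarithms of the coordinates, hence smooth; explicitly $\partial(r\log r)/\partial r=\log r+1$ and $\partial(r\log q)/\partial q=r/q$ are continuous wherever $r,q>0$, which yields continuous partials in both $\mathbf r$ and $\mathbf q$, and joint continuity follows in the same way.

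The subtle points—and the main obstacle—are twofold. The first is the recurrence of a single coordinate $r(x^N,a^N\parallel z^{N-1})$ across several $y^N$ that induce the same $z^{N-1}$ through $z_i=f(a_i,y_i)$; this is harmless because concavity is preserved under composition with the linear coordinate map and under nonnegative summation, so I would make that preservation explicit rather than treating the coordinates as independent. The second is the boundary behavior of the cross term $r\log q$ as $q\to 0$ with $r>0$; here I would restrict attention to the relative interior of the probability simplices, where the alternating-maximization iterates live, observing that the $\mathbf q$ maximizing $\mathcal{I}$ for fixed $\mathbf r$ is exactly the posterior $q(x^N,a^N|y^N)=r(x^N,a^N\parallel z^{N-1})p(y^N\parallel x^N)/p(y^N)$ and hence never drives the pathological boundary during the iteration.
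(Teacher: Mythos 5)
Your proposal is correct and follows essentially the same route as the paper: the paper's proof likewise disposes of the cost penalty by noting $\E[\Lambda(A^N)]$ is linear in $\mathbf{r}$ and constant in $\mathbf{q}$, and then imports the concavity/continuity/differentiability of $\mathcal{I}(\mathbf{r},\mathbf{q})$ from Lemma~2 of the cited Blahut--Arimoto extension, whose argument is precisely your decomposition into the $r\log q$ and $r\log r$ terms with convexity of $t\log t$ and concavity of $\log t$. You merely make explicit the details the paper delegates to the citation, including the (correctly handled) boundary and coordinate-repetition caveats.
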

\begin{proof}
$\mathcal{I}(\mathbf{r},\mathbf{q})$ has continuous partial derivatives in
$\mathbf{r}$ and $\mathbf{q}$ (follow the proof of Lemma 2 in
\cite{Naiss_Permuter_BAA} with $r(x^N\parallel y^{N-1})$ replaced by
$r(x^N,a^N\parallel z^{N-1})$ and $q(x^N|y^N)$ replaced by  $q(x^N,a^N|y^N)$.
$\E[\Lambda(A^N)]$ depends only on $\mathbf{r}$ and is linear in $\mathbf{r}$.
\end{proof}
\begin{lemma}
\label{lemma_baa_2}
For a fixed $\mathbf{r}$, $\mathbf{q}^*=\argmax_{\mathbf{q}}\
\mathcal{I}(\mathbf{r},\mathbf{q},\lambda)$, where
\bea
\mathbf{q}^*=q^*(x^N,a^N|y^N)=\frac{r(x^N,a^N\parallel z^{N-1})p(y^N\parallel
x^N)}{\sum_{x^N,a^N}r(x^N,a^N\parallel z^{N-1})p(y^N\parallel x^N)}.
\eea
\end{lemma}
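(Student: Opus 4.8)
The plan is to exploit the fact that the Lagrangian penalty $\lambda\E[\Lambda(A^N)]$ depends only on $\mathbf{r}$ and is therefore a constant when $\mathbf{r}$ is held fixed; consequently $\argmax_{\mathbf{q}}\mathcal{I}(\mathbf{r},\mathbf{q},\lambda)=\argmax_{\mathbf{q}}\mathcal{I}(\mathbf{r},\mathbf{q})$, and it suffices to maximize the unpenalized objective over the conditional pmf $\mathbf{q}=q(x^N,a^N|y^N)$ subject only to the normalization $\sum_{x^N,a^N}q(x^N,a^N|y^N)=1$ for each $y^N$. First I would introduce the joint distribution $p(x^N,a^N,y^N)=r(x^N,a^N\parallel z^{N-1})p(y^N\parallel x^N)$ together with its marginal $p(y^N)=\sum_{x^N,a^N}p(x^N,a^N,y^N)$, and observe that the claimed maximizer $\mathbf{q}^*$ is exactly the induced posterior $q^*(x^N,a^N|y^N)=p(x^N,a^N|y^N)$.

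The key step is a Gibbs-type comparison. Writing $\mathcal{I}(\mathbf{r},\mathbf{q})=\sum_{x^N,a^N,y^N}p(x^N,a^N,y^N)\log\frac{q(x^N,a^N|y^N)}{r(x^N,a^N\parallel z^{N-1})}$, the $\mathbf{r}$-dependent denominator cancels in the difference of the two objectives, so that
\[
\mathcal{I}(\mathbf{r},\mathbf{q}^*)-\mathcal{I}(\mathbf{r},\mathbf{q})=\sum_{x^N,a^N,y^N}p(x^N,a^N,y^N)\log\frac{q^*(x^N,a^N|y^N)}{q(x^N,a^N|y^N)}.
\]
Since $p(x^N,a^N,y^N)=p(y^N)\,q^*(x^N,a^N|y^N)$, grouping the terms by $y^N$ rewrites this difference as $\sum_{y^N}p(y^N)\,D\bigl(q^*(\cdot|y^N)\,\|\,q(\cdot|y^N)\bigr)$, a non-negative weighted average of relative entropies. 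Non-negativity of the KL divergence then yields $\mathcal{I}(\mathbf{r},\mathbf{q}^*)\ge\mathcal{I}(\mathbf{r},\mathbf{q})$ for every admissible $\mathbf{q}$, with equality iff $q=q^*$ wherever $p(y^N)>0$, which establishes that $\mathbf{q}^*$ is the maximizer. Alternatively the same conclusion follows from a Lagrange-multiplier computation enforcing the per-$y^N$ normalization, whose unique stationary point is certified as a global maximum by the concavity of $\mathcal{I}$ in $\mathbf{q}$ from Lemma \ref{lemma_baa_1}; I prefer the divergence argument because it delivers global optimality directly.

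The only point needing care is the support of $p(y^N)$: the posterior $q^*(\cdot|y^N)$ is defined only when $\sum_{x^N,a^N}r(x^N,a^N\parallel z^{N-1})p(y^N\parallel x^N)>0$, and I would note that any $y^N$ with $p(y^N)=0$ contributes nothing to either objective, so the value of $\mathbf{q}$ there is immaterial. Beyond this, the argument is a direct transcription of the corresponding step in \cite{Naiss_Permuter_BAA}, with $r(x^N\parallel y^{N-1})$ replaced by $r(x^N,a^N\parallel z^{N-1})$ and $q(x^N|y^N)$ by $q(x^N,a^N|y^N)$.
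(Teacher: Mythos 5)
Your proposal is correct and takes essentially the same route as the paper: the paper's proof likewise notes that $\lambda\E[\Lambda(A^N)]$ is constant for fixed $\mathbf{r}$ and then invokes the standard nonnegativity argument $\mathcal{I}(\mathbf{r},\mathbf{q}^*)-\mathcal{I}(\mathbf{r},\mathbf{q})\ge 0$ from Lemma 4 of \cite{Naiss_Permuter_BAA}, which is exactly the weighted-KL-divergence comparison you carry out explicitly. Your write-up simply fills in the details the paper delegates to the reference, including the harmless support caveat for $p(y^N)=0$.
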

\begin{proof}
Note that for a fixed $\mathbf{r}$, $\lambda\E[\Lambda(A^N)]$ is fixed.
It is easy to prove, as in Lemma 4 in \cite{Naiss_Permuter_BAA}, that
$\mathcal{I}(\mathbf{r},\mathbf{q}^*)-\mathcal{I}(\mathbf{r},\mathbf{q})\ge 0$
for all $\mathbf{q}$.
\end{proof}
From Lemma \ref{lemma_baa_2}, the following corollary is immediate, 
\begin{corollary}
\label{lemma_baa_3}
\bea
\max_{\mathbf{r}}\mathcal{I}(\mathbf{r},\mathbf{q},\lambda)&=&\max_{\mathbf{r}}
\max_{\mathbf{q}}\mathcal{I}(\mathbf{r},\mathbf{q},\lambda)\eea
\end{corollary}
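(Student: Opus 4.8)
The plan is to obtain this identity as an immediate consequence of Lemma \ref{lemma_baa_2}, which already supplies the pointwise inner maximizer. First I would make explicit the reading under which the statement is meaningful: on the left-hand side the object $\mathbf{q}$ is to be understood as the induced optimal conditional distribution $\mathbf{q}^{*}=\mathbf{q}^{*}(\mathbf{r})$ furnished by Lemma \ref{lemma_baa_2}, namely the one obtained from $\mathbf{r}$ via $q^{*}(x^N,a^N|y^N)=r(x^N,a^N\parallel z^{N-1})p(y^N\parallel x^N)/\sum_{x^N,a^N}r(x^N,a^N\parallel z^{N-1})p(y^N\parallel x^N)$. With any other (non-optimal) $\mathbf{q}$ the claimed equality would fail, so this is the only interpretation that makes the corollary a genuine consequence of the preceding lemma.

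Next, for each fixed $\mathbf{r}$, Lemma \ref{lemma_baa_2} asserts precisely that $\mathbf{q}^{*}$ attains the inner maximum, i.e.\ $\mathcal{I}(\mathbf{r},\mathbf{q}^{*},\lambda)=\max_{\mathbf{q}}\mathcal{I}(\mathbf{r},\mathbf{q},\lambda)$. The only bookkeeping point worth recording here is the observation already made in the proof of that lemma: for fixed $\mathbf{r}$ the penalty term $\lambda\,\E[\Lambda(A^N)]$ is a constant, so it does not influence which $\mathbf{q}$ is optimal, and the maximizer for the penalized objective $\mathcal{I}(\mathbf{r},\mathbf{q},\lambda)$ coincides with that of the unpenalized $\mathcal{I}(\mathbf{r},\mathbf{q})$.

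Finally I would take the maximum over $\mathbf{r}$ of both sides of this pointwise equality. Since the equality $\mathcal{I}(\mathbf{r},\mathbf{q}^{*},\lambda)=\max_{\mathbf{q}}\mathcal{I}(\mathbf{r},\mathbf{q},\lambda)$ holds for \emph{every} admissible $\mathbf{r}$, the two functions of $\mathbf{r}$ are identical, and hence their suprema over $\mathbf{r}$ agree: $\max_{\mathbf{r}}\mathcal{I}(\mathbf{r},\mathbf{q}^{*},\lambda)=\max_{\mathbf{r}}\max_{\mathbf{q}}\mathcal{I}(\mathbf{r},\mathbf{q},\lambda)$, which is exactly the assertion of the corollary.

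There is essentially no obstacle to overcome; the result is a one-line reduction to Lemma \ref{lemma_baa_2}, and the only care required is the notational clarification that $\mathbf{q}$ on the left denotes the lemma's optimizer $\mathbf{q}^{*}(\mathbf{r})$ rather than a free variable. The value of the statement lies not in its difficulty but in its structural consequence: it recasts the penalized directed-information objective as a \emph{joint} maximization over the pair $(\mathbf{r},\mathbf{q})$, which is the form that justifies the alternating-maximization updates of Algorithm \ref{BAA_Action} and underlies the convergence of the $I_L,I_U$ bounds to $C_N^{(\lambda)}$.
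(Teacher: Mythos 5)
Your proof is correct and follows exactly the paper's (implicit) argument: the paper simply declares the corollary ``immediate'' from Lemma \ref{lemma_baa_2}, and your one-line reduction --- reading $\mathbf{q}$ on the left as the induced posterior $\mathbf{q}^{*}(\mathbf{r})$, invoking the lemma pointwise in $\mathbf{r}$, and then maximizing over $\mathbf{r}$ --- is precisely that reduction spelled out. The clarification that the penalty $\lambda\,\E[\Lambda(A^N)]$ depends only on $\mathbf{r}$ and hence does not disturb the inner maximizer is also the same observation made in the proof of Lemma \ref{lemma_baa_2}.
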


Let the set $\Ascr_{i,z}=\{y^{i-1}:f(a^{i-1},y^{i-1})=z^{i-1}\}$, $\forall\
i\in[1:N]$. After proving the update of $q$ as in Lemma \ref{lemma_baa_2} above, the main theorem in derivation of algorithm is the update rule for $r$ which is as follows.
\begin{theorem}
\label{lemma_baa_4}
Fix $\lambda\ge 0$, and $\mathbf{q}=q(x^N,a^N|y^N)$, then
$\mathbf{r}^*=\argmax_{\mathbf{r}}\mathcal{I}(\mathbf{r},\mathbf{q},\lambda)$,
where
\bea
\mathbf{r^*}&=&r^*(x^N,a^N\parallel z^{N-1})\\
&=&\prod_{i=1}^{N}r(x_i,a_i|x^{i-1},a^{i-1},z^{i-1})\\
r(x_i,a_i|x^{i-1},a^{i-1},z^{i-1})&=&\frac{r'(x^i,a^i,z^{i-1})}{\sum_{x_i,a_i}
r'(x^i,a^i,z^{i-1})},
\eea
where
\bea
r'(x^i,a^i,z^{i-1})=\prod_{x_{i+1}^N,a_{i+1}^N,y_i^N}\prod_{\Ascr_{i,z}}\Bigg{[}
\frac{q(x^N,a^N|y^N)2^{-N\lambda\Lambda(a^N)}}{\prod_{j=i+1}^Nr(x_j,
a_j|x^{j-1},a^{j-1},z^{j-1})}\Bigg{]}^{\frac{p(y^N\parallel
x^N)\prod_{j=i+1}^{N}r(x_j,a_j\parallel
x^{j-1},a^{j-1},z^{j-1})}{\sum_{A_{i,z}}\prod_{j=1}^{i-1}p(y_j|x^j,y^{j-1})}}
\eea
\end{theorem}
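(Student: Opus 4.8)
The plan is to follow the Lagrangian/KKT derivation of \cite{Naiss_Permuter_BAA}, carrying out the substitution $x\mapsto(x,a)$, replacing the $y^{N-1}$-causal conditioning by $z^{N-1}$-causal conditioning, and incorporating the extra linear penalty $-\lambda\E[\Lambda(A^N)]$. First I would record that, by Lemma \ref{lemma_baa_1}, for fixed $\mathbf{q}$ and $\lambda\ge 0$ the map $\mathbf{r}\mapsto\mathcal{I}(\mathbf{r},\mathbf{q},\lambda)$ is concave and continuously differentiable, while the feasible set --- all $\mathbf{r}$ that factor as $\prod_{i=1}^N r(x_i,a_i|x^{i-1},a^{i-1},z^{i-1})$ with each factor a pmf over $(x_i,a_i)$ --- is a product of probability simplices, hence convex. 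Consequently the stationarity (KKT) conditions are both necessary and sufficient for the global maximizer, so it suffices to exhibit a feasible $\mathbf{r}^*$ together with multipliers satisfying them.

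Next I would form the Lagrangian $\mathcal{L}=\mathcal{I}(\mathbf{r},\mathbf{q},\lambda)-\sum_{i=1}^N\sum_{x^{i-1},a^{i-1},z^{i-1}}\mu_i(x^{i-1},a^{i-1},z^{i-1})\big[\sum_{x_i,a_i}r(x_i,a_i|x^{i-1},a^{i-1},z^{i-1})-1\big]$, enforcing normalization of each conditional. Writing the objective with $\log r(x^N,a^N\parallel z^{N-1})=\sum_{j=1}^N\log r(x_j,a_j|\cdots)$, I would differentiate with respect to a single conditional $r(x_i,a_i|x^{i-1},a^{i-1},z^{i-1})$. The differentiated factor enters both explicitly, through the $-\log r$ term at position $i$, and implicitly, through the weight $r(x^N,a^N\parallel z^{N-1})$ that multiplies every bracketed term; the derivative therefore splits into a logarithmic part and a part in which the differentiated factor is deleted while the remaining factors $\prod_{j\ne i}$ survive.

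The heart of the calculation is then to marginalize the resulting expression over the ``future'' variables $(x_{i+1}^N,a_{i+1}^N,y_i^N)$ and over the past outputs consistent with the observed $z^{i-1}$, i.e.\ over $\Ascr_{i,z}=\{y^{i-1}:f(a^{i-1},y^{i-1})=z^{i-1}\}$. Because $z$ is a deterministic function of $(a,y)$, several $y^{i-1}$ collapse to the same $z^{i-1}$, so this marginalization must be weighted by the output mass $\sum_{\Ascr_{i,z}}\prod_{j=1}^{i-1}p(y_j|x^j,y^{j-1})$ on the consistent outputs; this is exactly the denominator appearing in the exponent of $r'$, while the surviving future factors $\prod_{j=i+1}^N r(x_j,a_j|\cdots)$ together with $p(y^N\parallel x^N)$ make up its numerator. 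Setting the derivative to zero and solving the stationarity equation yields $r(x_i,a_i|x^{i-1},a^{i-1},z^{i-1})\propto r'(x^i,a^i,z^{i-1})$; absorbing $\mu_i$ into the normalization constant gives the claimed ratio form, whose nonnegativity and normalization are manifest, so $\mathbf{r}^*$ is feasible. Finally I would note that, although the formula for $r'(x^i,a^i,z^{i-1})$ refers to the components $r(x_j,a_j|\cdots)$ with $j>i$ of the very maximizer being characterized, the causal structure makes the coupled stationarity conditions triangular: each conditional at time $i$ depends only on conditionals at strictly later times, so the fixed point is obtained by a single backward sweep $i=N,\dots,1$ --- precisely the loop order of Algorithm 1.

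The step I expect to be the main obstacle is this marginalization and exponent-collection: correctly identifying which occurrences of $r(x_i,a_i|\cdots)$ survive after differentiation, and in particular pinning down $\Ascr_{i,z}$ and its associated weighting $\sum_{\Ascr_{i,z}}\prod_{j=1}^{i-1}p(y_j|x^j,y^{j-1})$. Everything else --- concavity, KKT sufficiency, and the final normalization --- is routine once this bookkeeping is done, and it is here that the adaptation of \cite{Naiss_Permuter_BAA} from $y^{N-1}$- to $z^{N-1}$-causal conditioning is genuinely nontrivial, since the many-to-one nature of $z=f(a,y)$ is what forces the $\Ascr_{i,z}$-sum into the formula.
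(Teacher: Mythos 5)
Your proposal follows essentially the same route as the paper: fix $\mathbf{q}$, exploit the concavity of $\mathcal{I}(\mathbf{r},\mathbf{q},\lambda)$ and the linear normalization constraints to justify a Lagrangian/KKT argument, differentiate the Lagrangian with respect to each factor $r(x_i,a_i|x^{i-1},a^{i-1},z^{i-1})$ in a backward sweep $i=N,\dots,1$, and collect the $\Ascr_{i,z}$-weighted marginalization $\sum_{\Ascr_{i,z}}\prod_{j=1}^{i-1}p(y_j|x^j,y^{j-1})$ into the exponent, with the cost constraint absorbed as the factor $2^{-N\lambda\Lambda(a^N)}$ inside the logarithm exactly as in the paper's adaptation of \cite{Naiss_Permuter_BAA}. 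You have correctly identified the one genuinely nontrivial piece of bookkeeping (the many-to-one map $z=f(a,y)$ forcing the $\Ascr_{i,z}$-sum), so the plan is sound and matches the paper's proof.
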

\begin{proof}
Here we fix $\mathbf{q}$ and try to find $\mathbf{r}=\prod_{i=1}^N r_i$,
(denote $r_i\stackrel{\triangle}{=}r(x_i,a_i|x^{i-1},a^{i-1},z^{i-1})$, and
$p_i\stackrel{\triangle}{=}p(y_i|x^i,y^{i-1})$, $\forall\ i=[1:N]$), that
maximizes the expression
$\mathcal{I}(\mathbf{r},\mathbf{q})-\lambda\E[\Lambda(A^N)]$. Note as
there is one to one correspondence between $\mathbf{r}$ and the factors
$\{r_i\}_{i=1}^N$, maximizing $\mathcal{I}(\mathbf{r},\mathbf{q},\lambda)$ over
$\mathbf{r}$ is equivalent to maximizing over the factors $\{r_i\}_{i=1}^N$.
Note that already proved, $\mathcal{I}(\mathbf{r},\mathbf{q},\lambda)$ is
concave in $\mathbf{r}$, thus is concave in $r_i$ if all the other factors are
kept constant. Since constraints are linear, i.e., $\sum_{x_i,a_i}r_i=1$, by
concavity it follows we can use Lagrangian Multiplier method and Karush Kuhn
Tucker conditions to perform this maximization from $i=N$ to $i=1$ maximizing
the Lagrangian (described shortly) in a single factor $r_i$ one at a time. The
proof has similar steps as in Appendix A in \cite{Naiss_Permuter_BAA}, except
there is a different Lagrangian, 
\bea
J&=&\frac{1}{N}\sum_{x^N,a^N,y^N}\Bigg{(}p(y^n\parallel
x^N)\prod_{j=1}^{N}r_j\bigg{[}\log\frac{q(x^N,a^N|y^N)}{\prod_{j=1}^{N}r_j}
-\lambda N\Lambda(a^N)\bigg{]}\Bigg{)}\nonumber\\
&+&\sum_{i=1}^N\Bigg{(}\sum_{x^{i-1},a^{i-1},z^{i-1}}\nu_{i,x^{i-1},a^{i-1},z^{i
-1} } \bigg { ( } \sum_ { x_i , a_i } r_i-1\bigg{)}\Bigg{)}\\
&=&\frac{1}{N}\sum_{x^N,a^N,y^N}\Bigg{(}p(y^n\parallel
x^N)\prod_{j=1}^{N}r_j\bigg{[}\log\frac{q(x^N,a^N|y^N)2^{
-N\lambda\Lambda(a^N)}}{\prod_{j=1}^{N}r_j}\bigg{]}\Bigg{)}\nonumber\\
&+&\sum_{i=1}^N\Bigg{(}\sum_{x^{i-1},a^{i-1},z^{i-1}}\nu_{i,x^{i-1},a^{i-1},z^{i
-1} }\bigg{(}\sum_{x_i,a_i}
r_i-1\bigg{)}\Bigg{)},
\eea
where $\nu_{i,x^{i-1},a^{i-1},z^{i
-1} }>0$ are lagrangian multipliers. Hence for every,
$i\in\{1,\cdots,N\}$ we have,
\bea
\frac{\partial J}{\partial r_i}
&=&\frac{1}{N}\sum_{x_{i+1}^N,a_{i+1}^N,y_i^N}\sum_{A_{i,z}}\Bigg{(}p(y^n\parallel
x^N)\prod_{j=1,j\neq i}^{N}r_j\bigg{[}\log\frac{q(x^N,a^N|y^N)2^{
-N\lambda\Lambda(a^N)}}{\prod_{j=1}^{N}r_j}-1\bigg{]}\Bigg{)}+\nu_{i,x^{i-1},
a^{i-1},z^{i
-1} }\\
&=&\frac{1}{N}\sum_{x_{i+1}^N,a_{i+1}^N,y_i^N}\sum_{A_{i,z}}\Bigg{(}p(y^n\parallel
x^N)\prod_{j=1,j\neq i}^{N}r_j\bigg{[}\log\frac{q(x^N,a^N|y^N)2^{
-N\lambda\Lambda(a^N)}}{\prod_{j=1}^{i+1}r_j}-\log(\prod_{j=1}^{i-1}r_i)-\log
(r_i)-1\bigg{]}\Bigg{)}\nonumber\\ 
&&+\ \nu_ { i , x^ { i-1 } ,
a^{i-1},z^{i
-1} }\\
&=&0.
\eea
As for a given $r_i$, $(x^{i-1},a^{i-1},z^{i-1})$ are constants, and the term
$\prod_{j=1}^{i-1}r_i$ is constant and indepedent of $A_{i,z}$, and hence can be
taken out of the summation and can divide the whole equation to get a new
$\nu^{\ast}(i,x^{i-1},a^{i-1},z^{i-1})$ since
$\nu^{}(i,\cdot)$ is a function of $(x^{i-1},a^{i-1},z^{i-1})$. Also the other
three terms $(\log (\prod_{j=1}^{i-1}r_j), \log r_i, 1)$ are constants with
their coefficient being, 
\bea
\sum_{x_{i+1}^N,a_{i+1}^N,y_i^N}\sum_{A_{i,z}}\Bigg{(}p(y^n\parallel
x^N)\prod_{j=i+1}^{N}r_j\Bigg{)}=\sum_{A_{i,z}}\prod_{j=1}^{i-1}p_j.
\eea
Rest of the proof in Appendix A in
referenced paper follows verbatim with $q(\cdot)$ inside the logarithm being
replaced with $q(\cdot)2^{
-N\lambda\Lambda(a^N)}$.
\end{proof}
Thus following the above lemmas we have similar Blahut-Arrimoto-Algorithm of
alternating maximization. The above lemmas similarly as in
\cite{Naiss_Permuter_BAA} provide natural lower bounds $I_L^{(k)}(\lambda)$
indexed by $k$, the number of iterations and $\lambda$, the lagrangian
multiplier, given by,
\bea
I_L^{(k)}(\lambda)=\frac{1}{N}\sum_{x^N,a^N,y^N}r^{(k)}(x^N,a^N\parallel
z^{N-1})p(y^n\parallel x^N)\log
\frac{q^{(k)}(x^N,a^N|y^N)}{r^{(k)}(x^N,a^N\parallel z^{N-1})}.
\eea
Due to Lemma \ref{lemma_baa_1} above, which states, $\mathcal{I}(\mathbf{r},\mathbf{q},\lambda)$ is
concave, continuous, and has continuous partial derivatives in $\mathbf{r}$ and
$\mathbf{q}$, the alternating maximization procedure converges (cf. proof of Lemma 1 in \cite{Naiss_Permuter_BAA}), 
\bea
I_L^{(k)}(\lambda)\uparrow\max_{\mathbf{r}}\Bigg{\{}\frac{1}{N}I(X^N\rightarrow
Y^N)-\lambda\E[\Lambda(A^N)]\Bigg{\}}\stackrel{\triangle}{=}
C_N^{(\lambda)},
\eea
where $\uparrow$ implies convergence from below as $k\rightarrow\infty$. 
Similarly there exist upper bounds,
\bea
I_U^{(k)}(\lambda)=\frac{1}{N}\max_{x_1,a_1}\sum_{y_1}\max_{x_2,a_2}\cdots\max_{
x_N , a_N } \sum_ { y_N}p(y^N\parallel x^N)\log\frac{p(y^N\parallel
x^N)2^{
-N\lambda\Lambda(a^N)}}{\sum_{x^N,a^N}p(y^N\parallel
x^N)r^{(k)}(x^N,a^N\parallel z^{N-1})}.
\eea
To describe the mentioned nature of upper bound $I_U^{(k)}(\lambda)$, we have
the following lemmas corresponding to those in \cite{Naiss_Permuter_BAA} (most
proofs are verbatim and hence omitted or described briefly).
\begin{lemma}
\label{lemma_baa_5}
Let $I_{r_1}(X^N\rightarrow Y^N)$ correspond to $r_1(x^N,a^N\parallel z^{N-1})$
and $\E_{r_1}[\Lambda(A^N)]$ the corresponding cost incurred, then for any
$r_0(x^N,a^N\parallel z^{N-1})$ and $\lambda\ge 0$,
\bea
&&\frac{1}{N}I_{r_1}(X^N\rightarrow Y^N)-\lambda \E_{r_1}[\Lambda(A^N)]\nonumber\\
&&\le \sum_{x^N,a^N,y^{N-1}}r_1(x^N,a^N\parallel
z^{N-1})\sum_{y_N}p(y^N\parallel x^N)\log \frac{p(y^N\parallel
x^N)2^{
-N\lambda\Lambda(a^N)}}{ \sum_{x^N,a^N}r_0(x^N,a^N\parallel
z^{N-1})p(y^N\parallel x^N)}\nonumber\\
\eea
\end{lemma}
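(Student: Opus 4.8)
The plan is to reduce the claimed bound to the non-negativity of a relative entropy between two output distributions, exactly as in the corresponding upper-bound lemma of \cite{Naiss_Permuter_BAA}, with the Lagrangian cost term simply carried through the logarithm.

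First I would fix notation for the two output laws induced by the input laws $r_0$ and $r_1$. Writing the joint induced by $r_1$ as $P_{r_1}(x^N,a^N,y^N)=r_1(x^N,a^N\parallel z^{N-1})p(y^N\parallel x^N)$, with $z_j$ the deterministic feedback $f(a_j,y_j)$, the induced output marginal is $p_{r_1}(y^N)=\sum_{x^N,a^N}r_1(x^N,a^N\parallel z^{N-1})p(y^N\parallel x^N)$, and likewise $p_{r_0}(y^N)=\sum_{x^N,a^N}r_0(x^N,a^N\parallel z^{N-1})p(y^N\parallel x^N)$, which is precisely the denominator appearing on the right-hand side of the claim. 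Recalling from Eq.~(\ref{cn}) that directed information under $r_1$ equals $I_{r_1}(X^N\to Y^N)=\sum_{x^N,a^N,y^N}r_1\,p(y^N\parallel x^N)\log\frac{p(y^N\parallel x^N)}{p_{r_1}(y^N)}$, I would absorb the penalty into the logarithm via the factor $2^{-N\lambda\Lambda(a^N)}$, using $\log 2^{-N\lambda\Lambda(a^N)}=-N\lambda\Lambda(a^N)$ and $\sum r_1\,p(y^N\parallel x^N)\,\Lambda(a^N)=\E_{r_1}[\Lambda(A^N)]$, so that after clearing the common normalization the claim is equivalent to $I_{r_1}(X^N\to Y^N)-N\lambda\,\E_{r_1}[\Lambda(A^N)]\le\sum r_1\,p(y^N\parallel x^N)\log\frac{p(y^N\parallel x^N)2^{-N\lambda\Lambda(a^N)}}{p_{r_0}(y^N)}$.

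Next I would compute the gap between the two sides term by term. Both carry the common factor $\sum r_1\,p(y^N\parallel x^N)\log\!\big[p(y^N\parallel x^N)2^{-N\lambda\Lambda(a^N)}\big]$; the sole difference is that the right-hand side divides by $p_{r_0}(y^N)$ while the expanded left-hand side divides by $p_{r_1}(y^N)$. Hence the difference collapses to
\bea
\sum_{x^N,a^N,y^N}r_1\,p(y^N\parallel x^N)\log\frac{p_{r_1}(y^N)}{p_{r_0}(y^N)}&=&\sum_{y^N}p_{r_1}(y^N)\log\frac{p_{r_1}(y^N)}{p_{r_0}(y^N)}\\
&=&D\!\left(p_{r_1}\,\|\,p_{r_0}\right)\ge 0,
\eea
where the first equality uses $\sum_{x^N,a^N}r_1\,p(y^N\parallel x^N)=p_{r_1}(y^N)$ and the inequality is Gibbs' inequality (non-negativity of relative entropy). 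This is exactly the desired bound, with equality iff $p_{r_0}=p_{r_1}$.

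The only real care needed is the bookkeeping around the causal-conditioning structure: because $z^{N-1}$ is the deterministic feedback $z_j=f(a_j,y_j)$, the collapsed sum $\sum_{x^N,a^N}r_1\,p(y^N\parallel x^N)$ must be correctly identified with the genuine output marginal $p_{r_1}(y^N)$, and the inner summation over $y_N$ together with the outer sum over $(x^N,a^N,y^{N-1})$ on the right-hand side should be read as the full expectation over $(x^N,a^N,y^N)$ under $P_{r_1}$. Once these sums are matched the argument is the exact analogue of the corresponding lemma in \cite{Naiss_Permuter_BAA}, the single new ingredient being the multiplicative factor $2^{-N\lambda\Lambda(a^N)}$, which passes through the logarithm without affecting the divergence. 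I therefore expect the main (and essentially only) obstacle to be notational rather than conceptual.
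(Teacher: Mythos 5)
Your proposal is correct and follows essentially the same route as the paper's own proof: both compute the difference between the right-hand side and the left-hand side, observe that the cost factor $2^{-N\lambda\Lambda(a^N)}$ cancels, and identify the remainder as the relative entropy $D\left(p_{r_1}(y^N)\,\|\,p_{r_0}(y^N)\right)\ge 0$ between the induced output distributions. The only extra care you take (matching the sums to the true output marginal and handling the $1/N$ normalization, which is stated slightly inconsistently in the lemma) is sound bookkeeping.
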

\begin{proof}
Consider, 
\bea
&&\frac{1}{N}\sum_{x^N,a^N,y^{N-1}}r_1(x^N,a^N\parallel z^{N-1})\sum_{y_N}p(y^N\parallel
x^N)\log \frac{p(y^N\parallel x^N)2^{
-N\lambda\Lambda(a^N)}}{
\sum_{x^N,a^N}r_0(x^N,a^N\parallel z^{N-1})p(y^N\parallel x^N)}\nonumber\\
&&-\frac{1}{N}I_{r_1}(X^N\rightarrow Y^N)+\lambda \E_{r_1}[\Lambda(A^N)]\\
&=&\frac{1}{N}\sum_{x^N,a^N,y^{N}}r_1(x^N,a^N\parallel z^{N-1})p(y^N\parallel x^N)\log
\frac{ \sum_{x^N,a^N}r_1(x^N,a^N\parallel z^{N-1})p(y^N\parallel x^N)}{
\sum_{x^N,a^N}r_0(x^N,a^N\parallel z^{N-1})p(y^N\parallel x^N)}\\
&=&D(p_1(y^N)\parallel p_0(y^N))\ge 0.
\eea
\end{proof}
\begin{lemma}
\label{lemma_baa_6}
For every $\lambda\ge 0$,
\bea
C_N^{(\lambda)}\le
\frac{1}{N}\min_{r_0}\max_{x_1,a_1}\sum_{y_1}\max_{x_2,a_2}\cdots\max_{
x_N , a_N } \sum_ { y_N}p(y^N\parallel x^N)\log\frac{p(y^N\parallel
x^N)2^{
-N\lambda\Lambda(a^N)}}{\sum_{x^N,a^N}p(y^N\parallel
x^N)r_0(x^N,a^N\parallel z^{N-1})}
\eea
\end{lemma}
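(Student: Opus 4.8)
The plan is to derive the bound from Lemma \ref{lemma_baa_5}, which holds for \emph{every} pair $(r_0,r_1)$ and every $\lambda\ge 0$, by taking the supremum over $r_1$ on both sides (using the $\frac1N$ normalization of that lemma's proof). By definition, $C_N^{(\lambda)}$ is exactly the supremum over $r_1$ of the left-hand side $\frac1N I_{r_1}(X^N\rightarrow Y^N)-\lambda\E_{r_1}[\Lambda(A^N)]$. Hence for every fixed $r_0$,
\bea
C_N^{(\lambda)}\le\sup_{r_1}\frac{1}{N}\sum_{x^N,a^N,y^{N-1}}r_1(x^N,a^N\parallel z^{N-1})\sum_{y_N}p(y^N\parallel x^N)\log\frac{p(y^N\parallel x^N)2^{-N\lambda\Lambda(a^N)}}{\sum_{x^N,a^N}r_0(x^N,a^N\parallel z^{N-1})p(y^N\parallel x^N)}.\nonumber
\eea

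Next I would evaluate this inner supremum. The objective is affine in the causally conditioned distribution $r_1(x^N,a^N\parallel z^{N-1})=\prod_{i=1}^{N}r_1(x_i,a_i|x^{i-1},a^{i-1},z^{i-1})$, and the crucial observation is that each $z^{i-1}$ is a deterministic function of $(a^{i-1},y^{i-1})$ through $z_j=f(a_j,y_j)$. Consequently any distribution causally conditioned on $z^{i-1}$ is a special case of one whose $i$-th factor may depend on the richer past $(x^{i-1},a^{i-1},y^{i-1})$: within each set $\Ascr_{i,z}$ it is merely constrained to be constant in $y^{i-1}$. Relaxing this constraint enlarges the feasible set, so the supremum above is upper bounded by the same expression maximized over all distributions depending causally on $y^{i-1}$.

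I would then evaluate the relaxed supremum by the standard interchange of maximization and expectation, via backward induction over the stages $i=N,N-1,\dots,1$. Writing the objective as nested sums in causal order --- $(x_1,a_1)$, then $y_1$ weighted by $p(y_1|x_1)$, then $(x_2,a_2)$, then $y_2$, and so on --- the innermost sum $\sum_{x_N,a_N}r_1(x_N,a_N|\cdot)[\cdot]$ is a linear functional of the conditional factor with the others held fixed, and is therefore maximized by placing unit mass on the maximizing $(x_N,a_N)$, turning it into $\max_{x_N,a_N}$. Because in the relaxed problem the conditional at each stage may be chosen independently for each realization of the full past, this collapse propagates outward one stage at a time, interleaving every $\max_{x_i,a_i}$ with the channel expectation $\sum_{y_i}p(y_i|x^i,y^{i-1})$, and yields exactly
\bea
\frac{1}{N}\max_{x_1,a_1}\sum_{y_1}\max_{x_2,a_2}\cdots\max_{x_N,a_N}\sum_{y_N}p(y^N\parallel x^N)\log\frac{p(y^N\parallel x^N)2^{-N\lambda\Lambda(a^N)}}{\sum_{x^N,a^N}p(y^N\parallel x^N)r_0(x^N,a^N\parallel z^{N-1})}.\nonumber
\eea
Since this bound on $C_N^{(\lambda)}$ holds for every $r_0$, I would finish by taking the infimum over $r_0$ on the right --- a minimum, by continuity of the expression and compactness of the simplex of causally conditioned distributions --- giving the claimed inequality.

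The hard part will be the rigorous justification of the interchange step: one must check that the objective is genuinely affine in each conditional factor when the remaining factors are frozen, so that the pointwise maximum is optimal, and that the causal ordering of the nested sums is consistent with the dependence structures $r_1(x_i,a_i|x^{i-1},a^{i-1},z^{i-1})$ and $p(y_i|x^i,y^{i-1})$. The relaxation from $z^{i-1}$-measurability to $y^{i-1}$-measurability is what converts the exact (and generally intractable) supremum into the computable sequential max--sum form, and it is precisely this enlargement that makes the statement an inequality rather than an equality; I would take care to confirm it points in the direction of an upper bound on $C_N^{(\lambda)}$. This parallels the corresponding argument in \cite{Naiss_Permuter_BAA}, with $r(x^N\parallel y^{N-1})$ there replaced by the action-augmented $r(x^N,a^N\parallel z^{N-1})$ here.
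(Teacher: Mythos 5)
Your proposal is correct and follows essentially the same route as the paper, which simply defers to the proof of Lemma 9 in \cite{Naiss_Permuter_BAA}: apply Lemma \ref{lemma_baa_5}, take the supremum over $r_1$ (whose left-hand side is $C_N^{(\lambda)}$ by definition), evaluate the resulting linear-in-$r_1$ objective by backward induction into the nested max--sum form, and finally minimize over $r_0$. The one step that is genuinely new relative to the cited "verbatim" argument --- relaxing the $z^{i-1}$-measurability of the $i$-th factor of $r_1$ to $y^{i-1}$-measurability, which enlarges the feasible set and hence preserves the direction of the upper bound --- you identify and justify correctly.
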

\begin{proof}
The proof follows verbatim from the proof of Lemma 9 in
\cite{Naiss_Permuter_BAA}.
\end{proof}
\begin{lemma}
\label{lemma_baa_7}
The upper bound in Lemma \ref{lemma_baa_6} is tight and is obtained by
$r_0(x^N,a^N\parallel z^{N-1})$ that achieves the capacity.
\end{lemma}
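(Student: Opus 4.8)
The plan is to combine the one-sided bound of Lemma~\ref{lemma_baa_6} with a matching construction. Since Lemma~\ref{lemma_baa_6} already gives $C_N^{(\lambda)}\le\min_{r_0}\Psi(r_0)$, where I abbreviate
\bea
\Psi(r_0)\stackrel{\triangle}{=}\frac{1}{N}\max_{x_1,a_1}\sum_{y_1}\cdots\max_{x_N,a_N}\sum_{y_N}p(y^N\parallel x^N)\log\frac{p(y^N\parallel x^N)2^{-N\lambda\Lambda(a^N)}}{p_{r_0}(y^N)},
\eea
with $p_{r_0}(y^N)\stackrel{\triangle}{=}\sum_{x^N,a^N}r_0(x^N,a^N\parallel z^{N-1})p(y^N\parallel x^N)$, it suffices to exhibit a single $r_0$ for which $\Psi(r_0)\le C_N^{(\lambda)}$. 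I will take $r_0=r^\ast$, the capacity-achieving causally conditioned distribution attaining $C_N^{(\lambda)}=\max_{\mathbf{r}}G(\mathbf{r})$, where $G(\mathbf{r})\stackrel{\triangle}{=}\frac{1}{N}I_{\mathbf{r}}(X^N\rightarrow Y^N)-\lambda\E_{\mathbf{r}}[\Lambda(A^N)]$; such an $r^\ast$ exists by the concavity and continuity of $G$ (Lemma~\ref{lemma_baa_1}) together with compactness of the constraint set.

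First I would re-read $\Psi(r_0)$ as a maximization of a \emph{linear} functional. For fixed $r_0$, define $F_{r_0}(\mathbf{r})\stackrel{\triangle}{=}\frac{1}{N}\sum_{x^N,a^N,y^N}r(x^N,a^N\parallel z^{N-1})\,p(y^N\parallel x^N)\log\frac{p(y^N\parallel x^N)2^{-N\lambda\Lambda(a^N)}}{p_{r_0}(y^N)}$, which is linear in $\mathbf{r}$. Because $\mathbf{r}$ factors causally as $\prod_i r(x_i,a_i|x^{i-1},a^{i-1},z^{i-1})$ and $F_{r_0}$ is linear in each factor given the others, maximizing $F_{r_0}$ over causally conditioned $\mathbf{r}$ is a finite-horizon sequential optimization whose value is obtained by peeling off the factors one at a time; the optimal policy is deterministic and the optimal value is exactly the nested $\max$--$\sum$ expression $\Psi(r_0)$. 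This identification, $\Psi(r_0)=\max_{\mathbf{r}}F_{r_0}(\mathbf{r})$, follows the Shannon-strategy / dynamic-programming argument used for the corresponding step in \cite{Naiss_Permuter_BAA}.

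Next I would show that $r^\ast$ itself maximizes the linear $F_{r^\ast}$. The decomposition already established in the proof of Lemma~\ref{lemma_baa_5} reads $F_{r^\ast}(\mathbf{r})=G(\mathbf{r})+\frac{1}{N}D\big(p_{\mathbf{r}}(y^N)\,\|\,p_{r^\ast}(y^N)\big)$; putting $\mathbf{r}=r^\ast$ makes the divergence vanish, so $F_{r^\ast}(r^\ast)=G(r^\ast)=C_N^{(\lambda)}$. For any competing $\mathbf{r}$, consider the segment $r_\theta=(1-\theta)r^\ast+\theta\mathbf{r}$. Since $G$ is concave (Lemma~\ref{lemma_baa_1}) and maximized at $r^\ast$, its one-sided directional derivative at $\theta=0$ is nonpositive. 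The term $\frac{1}{N}D(p_{r_\theta}\|p_{r^\ast})$ is nonnegative, vanishes at $\theta=0$, and has zero directional derivative there because at the base point the integrand carries the factor $\log(p_{r^\ast}/p_{r^\ast})=0$ while the perturbation $p_{\mathbf{r}}(y^N)-p_{r^\ast}(y^N)$ sums to zero. Hence the directional derivative of $F_{r^\ast}$ at $\theta=0$ equals that of $G$ and is nonpositive; by linearity this derivative equals $F_{r^\ast}(\mathbf{r})-F_{r^\ast}(r^\ast)$, so $F_{r^\ast}(\mathbf{r})\le F_{r^\ast}(r^\ast)$ for every $\mathbf{r}$. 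Therefore $\Psi(r^\ast)=\max_{\mathbf{r}}F_{r^\ast}(\mathbf{r})=F_{r^\ast}(r^\ast)=C_N^{(\lambda)}$, and chaining with Lemma~\ref{lemma_baa_6} yields $C_N^{(\lambda)}\le\min_{r_0}\Psi(r_0)\le\Psi(r^\ast)=C_N^{(\lambda)}$, so the bound is tight and the minimizer is the capacity-achieving $r^\ast$.

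I expect the main obstacle to be the two structural facts imported from \cite{Naiss_Permuter_BAA}: (i) justifying that the nested $\max$--$\sum$ expression genuinely equals the maximum of $F_{r_0}$ over the set of causally conditioned distributions (the dynamic-programming step, where one must verify that the sums over the $y_i$ carry the correct conditional weights $p(y_i|x^i,y^{i-1})$ inherited from $p(y^N\parallel x^N)$, and that randomizing a factor cannot beat a deterministic choice), and (ii) making the first-order optimality argument rigorous on the boundary of the simplex, i.e. handling zero components of $r^\ast$ through the Karush--Kuhn--Tucker conditions rather than interior derivatives. Both are routine given the verbatim correspondence with \cite{Naiss_Permuter_BAA} asserted for the preceding lemmas, so the remaining work is bookkeeping rather than new ideas.
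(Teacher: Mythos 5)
Your argument is sound in its core logic, and it is essentially a self-contained version of what the paper outsources: the paper's entire proof is one sentence deferring to Lemma 10 of \cite{Naiss_Permuter_BAA} and noting that the extra term $-\lambda\E[\Lambda(A^N)]$ in the Lagrangian is what produces the factor $2^{-N\lambda\Lambda(a^N)}$ in the bound. The mechanism there is the same KKT/first-order-optimality condition at the maximizer of the concave Lagrangian that you derive via the directional derivative; your route through the decomposition $F_{r_0}(\mathbf{r})=G(\mathbf{r})+\tfrac{1}{N}D(p_{\mathbf{r}}\|p_{r_0})$ (which is exactly the computation in Lemma \ref{lemma_baa_5}) is a clean and arguably more transparent way to package it. What your write-up buys is an explicit proof that the directional derivative of the divergence term vanishes at $\theta=0$ and that linearity of $F_{r^\ast}$ converts the nonpositive derivative into the global inequality $F_{r^\ast}(\mathbf{r})\le F_{r^\ast}(r^\ast)$; the paper gives none of this.

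One subtlety deserves to be made explicit, because it is specific to this paper's setting and does not arise in the full-feedback case of \cite{Naiss_Permuter_BAA}. Your identification $\Psi(r_0)=\max_{\mathbf{r}}F_{r_0}(\mathbf{r})$ is only valid if the nested maximizations over $(x_i,a_i)$ in $I_U$ are restricted to depend on the sampled feedback $z^{i-1}$ alone, i.e., to be constant on each set $\Ascr_{i,z}=\{y^{i-1}:f(a^{i-1},y^{i-1})=z^{i-1}\}$, as the update rule in Algorithm 1 does. Read literally, the displayed $I_U$ places $\max_{x_i,a_i}$ inside $\sum_{y_{i-1}}$ and hence ranges over policies with full $y^{i-1}$ feedback, a strictly larger class than the feasible set of $G$; your inequality $F_{r^\ast}(\mathbf{r})\le F_{r^\ast}(r^\ast)$ is proved only for feasible $\mathbf{r}$, so without the $\Ascr_{i,z}$ restriction the chain $\Psi(r^\ast)=\max_{\mathbf{r}}F_{r^\ast}(\mathbf{r})$ would not close and tightness could fail (the unconstrained max relates to the capacity of the channel with full feedback, which can exceed $C_N^{(\lambda)}$). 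With that restriction stated, and with your own caveats about boundary points of the simplex and about $y^N$ outside the support of $p_{r^\ast}$, the proof is complete.
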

\begin{proof}
The proof follows in line to the proof of Lemma 10 in \cite{Naiss_Permuter_BAA}
except that there is an additional term $-\lambda\E[\Lambda(A^N)]$ in
the Lagrangian which accounts for the term $2^{
-N\lambda\Lambda(a^N)}$
in the right hand side of expression in Lemma \ref{lemma_baa_6}.
\end{proof}
Thus we have, $I_U^{(k)}(\lambda)\downarrow C_N^{(\lambda)}$ as
$k\rightarrow\infty$, $\forall\ \lambda\ge 0$, where the down arrow $\downarrow$
implies convergence from above. Since the Lagrangian multiplier $\lambda$ characterizes the tradeoff, the corresponding point on the tradeoff curve is $(C_N^{(\lambda)},\Gamma^{(\lambda)})$, where 
\bea
\Gamma^{(\lambda)}= \sum_{x^N,a^N,y^N}\mathbf{r}^*(x^N,a^N\parallel
z^{N-1})p(y^N\parallel x^N)\Lambda(A^N).
\eea
\section{Proof of Theorem \ref{theorem-evaluation}}
\label{evaluation}
For convenience, instead of $N$, consider the block length to be $B$ divided
into $M$ sub-blocks each of length $N$.  Define, 
\bea
C_B(\Gamma)&\stackrel{\triangle}{=}&\max \frac{1}{B} I(X^B\rightarrow Y^B,S^B|S_0=0)
\eea
where $\max$ is on appropriate joint distribution that satisfies the cost constraint. Now we have, 
\bea
C_B(\Gamma)&=& \max \frac{1}{B} \sum_{i=1}^B I(X^i;Y_i,S_i|Y^{i-1},S^{i-1},S_0=0)\\
&=& \max \frac{1}{B} \sum_{i=1}^B H(Y_i,S_i|Y^{i-1},S^{i-1},S_0=0)- H(Y_i,S_i|X^i,Y^{i-1},S_0=0)\\
&\stackrel{(a)}{=}& \max \frac{1}{B} \sum_{i=1}^B H(Y_i,S_i|Y^{i-1},S^{i-1},S_0=0)-
H(Y_i,S_i|X_i,S_{i-1})\\
&=& \max \frac{1}{B} \sum_{j=1}^M\sum_{i=N(j-1)+1}^{Nj} H(Y_i,S_i|Y^{i-1},S^{i-1},S_0=0)-
H(Y_i,S_i|X_i,S_{i-1})\\
&\le& \max \frac{1}{B} \sum_{j=1}^M\sum_{i=N(j-1)+1}^{Nj}
H(Y_i,S_i|Y^{i-1}_{N(j-1)+1},S^{i-1}_{N(j-1)})- H(Y_i,S_i|X_i,S_{i-1})\\
&\stackrel{}{\le}&\frac{1}{B}\max  \sum_{j=1}^M
I(X_{N(j-1)+1}^{Nj}\rightarrow Y_{N(j-1)+1}^{Nj},S_{N(j-1)+1}^{Nj}|S_{N(j-1)})\\
&\stackrel{}{\le}&\frac{1}{B} \sum_{j=1}^M \max 
I(X_{N(j-1)+1}^{Nj}\rightarrow Y_{N(j-1)+1}^{Nj},S_{N(j-1)+1}^{Nj}|S_{N(j-1)})\\
&\stackrel{}{\le}&\frac{1}{B} \sum_{j=1}^M \max 
\Bigg{(}\sum_{s\in\mathcal{S}}p(S_{N(j-1)}=s)I(X_{N(j-1)+1}^{Nj}\rightarrow
Y_{N(j-1)+1}^{Nj},S_{N(j-1)+1}^{Nj}|S_{N(j-1)}=s)\Bigg{)}\\
&\stackrel{(b)}{\le}&\frac{1}{B} \sum_{j=1}^M  
\Bigg{(}\sum_{s\in\mathcal{S}}p(S_{N(j-1)}=s)\max I(X_{N(j-1)+1}^{Nj}\rightarrow
Y_{N(j-1)+1}^{Nj},S_{N(j-1)+1}^{Nj}|S_{N(j-1)}=s)\Bigg{)}\\
&\stackrel{(c)}{=}& \frac{1}{B} \sum_{j=1}^M\max I(X_{N(j-1)+1}^{Nj}\rightarrow
Y_{N(j-1)+1}^{Nj},S_{N(j-1)+1}^{Nj}|S_{N(j-1)}=0)
\eea
where (a) follows from the channel definition, (b) follow from that fact
that $\max (w_1f(x)+w_2g(x))\le w_1\max f(x) +w_2 \max g(x) $, and (c) follows
from the symmetry of the channel structure, $\mathcal{S}=\{0,1\}$ here. But note that $\frac{1}{N}\max  I(X_{N(j-1)+1}^{Nj}\rightarrow
Y_{N(j-1)+1}^{Nj},S_{N(j-1)+1}^{Nj}|S_{N(j-1)}=0)= C_N(\Gamma_j)$, where $\Gamma_j$ are cost
incurred in each block such that, $\frac{N}{B}\sum_{j=1}^{M}\Gamma_j\le \Gamma$.
Thus we have, 
\bea
C_B(\Gamma)&\stackrel{}{\le}& \frac{N}{B} \sum_{j=1}^M C_{N}(\Gamma_j)\\
&\stackrel{(g)}{\le}&C_{N}(\frac{N}{B}\sum_{j=1}^{M}\Gamma_j)\\
&\stackrel{(h)}{\le}&C_{N}(\Gamma),
\eea
where (g) and (h) finally follow from the concavity and non-increasing nature of
$C_N(\Gamma)$. Note the above holds for any
$N$ thus with $B\rightarrow\infty$, we obtain,
$C(\Gamma)\le C_N(\Gamma)$ $\forall\ N$.
\par
We will now derive the lower bound. Here also assume the block length is $B$, which is divided into $M$ sub-blocks
of length $N$. The following achievability scheme is used sub-block by
sub-block. In the last time epoch of each sub-block of length $N$, action is
taken to observe the feedback. This feedback provides the initial state for rest
of the $N$ channel uses in the next sub block where encoder encodes to achieve,
$C_N(\Gamma)$. Thus the total incurred cost is at most
$\frac{M+MN\Gamma}{MN}=\Gamma+\frac{1}{N}$. Thus we have $C_N(\Gamma)\le
C(\Gamma+\frac{1}{N})$ or $C_N(\Gamma-\frac{1}{N})\le C(\Gamma)$ for $\Gamma \in[
\frac{1}{N},1]$. 
\end{document}